\providecommand*{\boxast}{%
  \mathbin{% as \boxplus and \boxtimes
    \mathpalette\@boxit{*}%
  }%
}
\newcommand*{\@boxit}[2]{%
  % #1: math style (\displaystyle, \textstyle, ...)
  % #2: symbol to be boxed that is centered around the math axis
  \sbox0{$\m@th#1\Box$}%
  % Manual correction for font bounding boxes:
  \ifx#1\displaystyle \ht0=\dimexpr\ht0+.05ex\relax \fi
  \ifx#1\textstyle \ht0=\dimexpr\ht0+.05ex\relax \fi
  \ifx#1\scriptstyle \ht0=\dimexpr\ht0+.04ex\relax \fi
  \ifx#1\scriptscriptstyle \ht0=\dimexpr\ht0+.065ex\relax \fi
  \sbox2{$#1\vcenter{}$}% \ht2 is positionn of math axis
  \rlap{%
    \hbox to \wd0{%
      \hfill
      \raisebox{%
        \dimexpr.5\dimexpr\ht0+\dp0\relax-\ht2\relax
      }{$\m@th#1#2$}%
      \hfill
    }%
  }%
  \Box
}
\def\BState{\State\hskip-\ALG@thistlm}
\def\thm@space@setup{%
	\thm@preskip=8pt
	\thm@postskip=6pt%\thm@preskip % or whatever, if you don't want them to be equal
}
\newcommand{\tsn}[1]{{\left\vert\kern-0.25ex\left\vert\kern-0.25ex\left\vert #1 
    \right\vert\kern-0.25ex\right\vert\kern-0.25ex\right\vert}}
\newenvironment{itemize*}%
{\begin{itemize}[leftmargin=*,topsep=0pt]%
		\setlength{\itemsep}{0pt}%
		\setlength{\parskip}{0pt}}%
	{\end{itemize}}
\newenvironment{enumerate*}%
{\begin{enumerate}[leftmargin=*,topsep=0pt]%
		\setlength{\itemsep}{0pt}%
		\setlength{\parskip}{0pt}}%
	{\end{enumerate}}
\newtheorem{theorem}{Theorem}[section]
\newtheorem{assumption}{Assumption}[section]
\newtheorem{lemma}[theorem]{Lemma}
\newtheorem{corollary}[theorem]{Corollary}
\newtheorem{proposition}[theorem]{Proposition}
\newtheorem{definition}[theorem]{Definition}
\newtheorem{remark}[theorem]{Remark}
\newcommand{\vct}{\boldsymbol }
\newcommand{\tr}{\mathrm{tr}}
\def\R{\mathbb{R}}
\def\cA{\mathcal{A}}
\def\cD{\mathcal{D}}
\def\cS{\mathcal{S}}
\def\cM{\mathcal{M}}
\def\cN{\mathcal{N}}
\def\cS{\mathcal{S}}
\def\cX{\mathcal{X}}
\def\va{\mathbf{a}}
\newcommand{\norm}[1]{\left\|#1\right\|}
\newtheorem{prop}{Proposition}[section]
\def\approxcorrect{\cmark\kern-1.4ex\raisebox{.30ex}{$\xmark$}}
\newcommand{\idxn}[1][]{\ifthenelse{\equal{#1}{}}{\mathbb{INDQ}_n}{\mathbb{INDQ}_{#1}}}
\newcommand{\beq}{\begin{equation}}
\newcommand{\eeq}{\end{equation}}
\newcommand{\bx}{{\bm{x}}}
\newcommand{\bg}{{\bm{g}}}
\newcommand{\bw}{{\bm{w}}}
\newcommand{\bE}{{\bm{E}}}
\newcommand{\bT}{{\bm{T}}}
\newcommand{\ba}{{\bm{a}}}
\newcommand{\bu}{{\bm{u}}}
\newcommand{\bz}{{\bm{z}}}
\newcommand{\bZ}{{\bm{Z}}}
\newcommand{\z}{{\vct{z}}}
\newcommand{\vh}{\vct{\hat{v}}}
\newcommand{\bb}{\bm{b}}
\newcommand{\vepsilon}{{\bm{\epsilon}}}
\definecolor{emmanuel}{RGB}{255,127,0}
\newcommand{\E}{\operatorname{\mathbb{E}}}
\def \endprf{\hfill {\vrule height6pt width6pt depth0pt}\medskip}
\def\eqref#1{equation~\ref{#1}}
\def\1{\bm{1}}
\def\va{{\bm{a}}}
\def\vg{{\bm{g}}}
\def\vh{{\bm{h}}}
\def\vx{{\bm{x}}}
\def\mI{{\bm{I}}}
\def\mM{{\bm{M}}}
\DeclareMathAlphabet{\mathsfit}{\encodingdefault}{\sfdefault}{m}{sl}
\SetMathAlphabet{\mathsfit}{bold}{\encodingdefault}{\sfdefault}{bx}{n}
\def\R{\mathbb{R}}
\def\cA{\mathcal{A}}
\def\cD{\mathcal{D}}
\def\cS{\mathcal{S}}
\def\cM{\mathcal{M}}
\def\cN{\mathcal{N}}
\def\cS{\mathcal{S}}
\def\cX{\mathcal{X}}
\def\va{\mathbf{a}}
\author{%
Sheng Liu\textsuperscript{*}$^{\dagger}$
 \;\; Zihan Wang\textsuperscript{*}$^{\ddag}$\;\;Yuxiao Chen$^{\S}$\;\;Qi Lei$^\ddag$
}
\title{Data Reconstruction Attacks and Defenses: A Systematic Evaluation}
\date{}
\begin{document}
\maketitle
\vspace{-15mm}
\begin{center}
\text{$^{\dagger}$ Stanford University}\quad \quad 
\text{$^\ddag$ New York University}\quad \quad \text{$\S$ Peking University}\\
  \vspace{2mm}
   \texttt{shengl@stanford.edu,\quad \{zw3508, ql518\}@nyu.edu} 
\end{center}
\footnotetext{\textsuperscript{*}Equal contribution.}
\setcounter{footnote}{0}
%\vspace{-5mm}

%\vspace{-10mm}
\begin{abstract}
  Reconstruction attacks and defenses are essential in understanding the data leakage problem in machine learning. However, prior work has centered around empirical observations of gradient inversion attacks, lacks theoretical grounding, and cannot disentangle the usefulness of defending methods from the computational limitation of attacking methods.  
In this work, we propose to view the problem as an inverse problem, enabling us to theoretically and systematically evaluate the data reconstruction attack. On various defense methods, we derived the algorithmic upper bound and the matching (in feature dimension and architecture dimension) information-theoretical lower bound on the reconstruction error for two-layer neural networks. To complement the theoretical results and investigate the utility-privacy trade-off, we defined a natural evaluation metric of the defense methods with similar utility loss among the strongest attacks. We further propose a strong reconstruction attack that helps update some previous
understanding of the strength of defense methods under our proposed evaluation metric. 
\end{abstract}
\section{Introduction}

Machine learning research has transformed the technical landscape across various domains but also raises privacy concerns potentially~\citep{li2023multi,papernot2016towards}. Federated learning~\citep{konečný2016federated,mcmahan2017communication}, a collaborative multi-site training framework, aims to uphold user privacy by keeping user data local while only exchanging model parameters and updates between a central server and edge users. However, recent studies on reconstruction attacks~\citep{yin2021see,huang2021evaluating} highlight privacy vulnerabilities even within federated learning. Attackers can eavesdrop on shared trained models and gradient information and reconstruct training data using them. Even worse, honest but curious servers can inadvertently expose training data by querying designed model parameters. Some defending methods are also proposed and analyzed.

However, previous research on reconstruction attacks and defenses centered around empirical studies and lacked theoretical guarantees. Moreover, those experiments over defenses usually focus only on one specific attack, making the evaluation of defenses' strength untrustworthy. Empirical understanding of attacking and defending methods is affected by factors like optimization challenges of non-convex objectives. Under the heuristic attacking methods from prior work, it is hard to disentangle the following two possibilities: 1) the algorithmic or computational barrier of the specific attack and 2) the defense's success. Thus, a systematical and information-theoretical evaluation of defenses that is independent to the attack method is needed.

A line of theoretical work analyzing privacy attacks is differential privacy (DP)~\citep{dwork2006differential}, which guarantees privacy in data re-identification or membership inference. However, there are scenarios when data identity information is not sensitive, but data itself is, and DP is unsuitable for interpreting data reconstruction attacks~\citep{guo2022bounding,hayes2024bounding}. An algorithm can yield no DP guarantee but prevent the threat model from data reconstruction (Details deferred to Section~\ref{sec:preliminary}). 
Although some previous works discussed the reconstruction attack with more tailored Renyi-DP \citep{guo2022bounding,stock2022defending}, their setting is when all samples are known, and the attacker's goal is to reconstruct the last data point, which doesn't apply to the federated learning setting. We will discuss how (Renyi-)DP is too stringent for the data leakage problem under federated learning in Section~\ref{sec:preliminary}. Some underlying structures of the observation can yield no (Renyi-)DP guarantee but
prevent the attacker from reconstructing the data.

\begin{wrapfigure}{r}{0.5\textwidth}
    \centering
   % \vspace{-0.2cm}
    \scalebox{1}{
    \centering
\includegraphics[width=1\columnwidth]{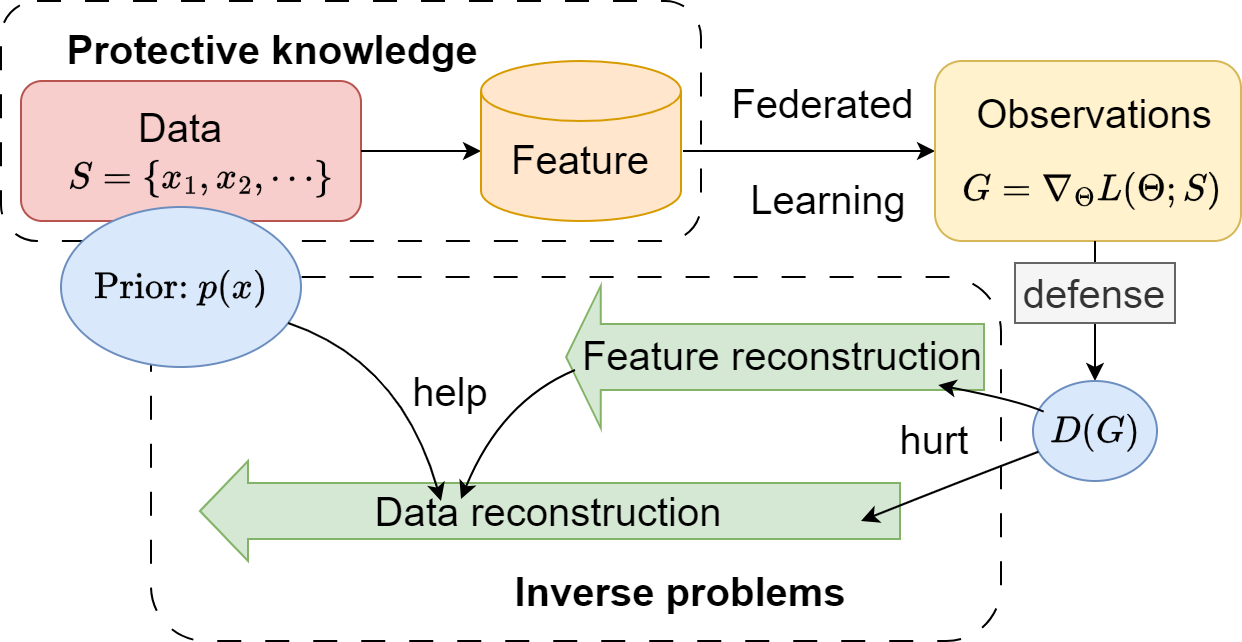}
    \caption{\small An illustration of the key components of data reconstruction studied in this paper.}
    \label{fig:illustration}
    }
    \vspace{-0.5cm}
\end{wrapfigure}

In our work, we propose a framework for evaluating defenses that view the reconstruction attack as an inverse problem (Fig. \ref{fig:illustration}): from the gradient $G$ that is generated from the unknown data $S$ and a known function (objective's gradient), the adversary intends to reconstruct the input data, and a defender $D$ is to prevent this from perturbing the observations. Our theoretical analysis focuses on 2-layer fully connected networks and gives both upper and lower bounds on the error of this inverse problem under various defenses. We are especially interested in obtaining a quantitative relationship between reconstruction error and the key factors of the learning pipeline, such as data dimension, model architecture (width), and defense strength, which are derived in the form of both algorithmic upper bounds and (matching) information-theoretic lower bound (that is independent of the attack algorithm). To generalize the analysis to real-world models and study the utility-privacy trade-offs, we empirically conduct a comprehensive evaluation of different defenses' performances. Our comparisons are based on the strongest attack for each defense to exclude the case that some defenses are only effective for a specific attack.

Our paper is organized as follows. 
In Section \ref{sec:preliminary}, we introduce the exact setups of both theoretical and empirical analysis of defenses. We also explain why DP is not suitable for our setting.
 In Section \ref{sec:theory}, we summarize the defenses used in reconstruction attacks and analyze the error bounds theoretically for 2-layer networks.
 In Section \ref{sec:background}, we introduce several previous attacking methods used in our empirical analysis and propose a strong attack based on an intermediate feature reconstruction method \citep{wang2023reconstructing}. 
In Section \ref{sec:experiments}, we show the results of experiments and evaluate the defenses based on utility loss and best reconstruction error.

\subsection{Related Work}
In federated learning \citep{konečný2016federated,mcmahan2017communication}, a line of work in gradient inversion aims to reconstruct training data from gradients. Early work by \citealp{aono2017privacy} theoretically showed that reconstruction is possible under a single neuron setting. \citealp{zhu2019deep} next proposed the gradient matching framework, where the training data and labels are recovered by minimizing the $l_2$ distance between gradients generated by dummy variables and real data. Instead of using $l_2$ distance, \citealp{wang2020sapag} proposed a Gaussian kernel-based distance function, and \citealp{geiping2020inverting} used cosine similarity.

Besides using different distance measures, recent works introduced image priors to gradient matching. \citealp{geiping2020inverting} added a total variation regularization term since pixels of a real image are usually similar to their neighbor. \citealp{wei2020framework} introduced a label-based regularization to fit the prediction results. \citealp{balunovic2021bayesian} formulated the gradient inversion attack in a Bayesian framework. \citealp{yin2021see} used a batch-normalization regularization to fit the BN statistics and a group consistency regularization to indicate the spatial equivariance of CNN. Instead of using regularization terms, \citealp{jeon2021gradient,xue2023fast,li2022auditing} represented dummy training data by a generative model to preserve the image prior. %This method trains the generative model's latent variables and parameters simultaneously.

%Besides the framework of gradient matching, there are other methods to reconstruct training data. 
Some recent works investigated new data reconstruction methods outside the framework of gradient matching. 
\citealp{wang2023reconstructing} and \citealp{kariyappa2023cocktail} respectively use tensor-based method and independent component analysis to construct data from gradient information, where optimization is not needed. \citealp{haim2022reconstructing} proposed a reconstruction method using the stationary property for trained neural networks induced by the implicit bias of training. Other works investigate partial data reconstructions with fishing parameters~\citep{wen2022fishing,boenisch2023curious,fowl2021robbing}. 

Defense strategies in this setting, however, are relatively less studied. The original setting of federated learning proposed by \citealp{mcmahan2017communication} used local aggregation to update multiple steps instead of gradients directly. \citealp{bonawitz2016practical,bonawitz2017practical} proposed a secure aggregation that can collect the gradients from multiple clients before updating. \citealp{geyer2017differentially,wei2020federated} introduced differential privacy into federated learning, perturbing the gradients. \citealp{aono2017privacy} demonstrated an encryption framework in the setting of federated learning. Moreover, many tricks designed to improve the performance of neural networks are effective in defending against privacy attacks, such as Dropout \citep{hinton2012improving}, gradient pruning \citep{sun2017meprop}, and MixUp \citep{zhang2017mixup}. The effect of defending strategies against gradient matching attacks are discussed in \citep{zhu2019deep,geiping2020inverting,huang2021evaluating}.

%{\red QL: need to add discussions of the papers mentioned by the reviewers}
\section{Preliminary}
\label{sec:preliminary}
% {\red QL: the flow is weird here. We need to add a section \\
% - introducing the setting: essential it's an inverse problem;
% For an unknown dataset $S$, we observe $O(S)=\nabla L(\Theta; S)$. \\
% - introduce our definition of reconstruction error, with minimax risk (for lower bound).
% \begin{align*}
%     \mathsf{R} = \min_{\widehat{S} = \widehat{S}(O)}\max_{S\subset \mathcal{X}^B} \mathbb{E}\left[d(S, \widehat{S})\right], 
%     \end{align*}
% here $d(S,\widehat S)=\min_{\text{permutation }\pi}\sum_{i=1}^B \|S_i-\widehat S_{\pi(i)}\|^2
% $
% - introduce the problem of DP analysis (too loose, because it's worst-case analysis, and not tailored to this problem)\\

% }

% {\red QL: 

% - introduce how we're gonna conduct the evaluation both theoretically and empirically.
% First we will analyze the problem by studying the upper and lower bound of the reconstruction error connecting to the model architecture, data dimension and defense mechanism.\\

% The problem with only theoretical analysis is the lack of extension to general networks, and the lack of utility-privacy trade-off. 
% Therefore we solve the problem and introduce a systematic way to further evaluate these defense methods in the empirical section.
% $$ \min_{D\in\mathcal{D_{U}}}\max_{A\in\mathcal{A}} d(S,A(D(O(S)))),$$
% $\mathcal{D_{U}}$ is the set of defense methods with the same level of utility loss, indexed by $U$.  $\mathcal{A}$ is the set of data reconstruction attacks that can be utilized in the considered setting. 
% }

In reconstruction attacks, we observe the model iterations in training and recover data from the gradient $\nabla L(\Theta;S)$, where $L$ is the loss function, $\Theta$ is the model and $S$ is an unknown dataset. Then, it can be regarded as an inverse problem to identify unknown signal $S$ from observation $G=\nabla L(\Theta;S)$, where $\nabla L$ is the forward function. We denote the data space by $\cX$ and the batch size of data by $B$. Then $S\in \cX^B$. Under the inverse problem framework, the effectiveness of attacks and defenses becomes the feasibility and hardness of recovering $S$. Formally, we define a minimax risk of reconstruction error:
\begin{equation}
\label{eq:lower}
    R_L=\left(\min_{\hat{S}=\hat{S}(G)}\max_{S\subset \cX^B}\min_{\pi}\E\left[d(S,\pi(\hat{S}))\right]\right)^{1/2}.
\end{equation}
Here $d(S,\pi(\hat{S}))=\frac{1}{B}\sum_{i=1}^B\|S_i-\hat{S}_{\pi(i)}\|^2$, where $\pi$ is a permutation of $[B]$. We will study in Section \ref{sec:theory} the reconstruction error lower bounded in the setting of a two-layer network with random weights
\footnote{Throughout the paper, we use a unified scaling for data and network weights similar to Mean-field view~\citep{mei2018mean}.}.
For upper bounding reconstruction errors, we consider the error for specific algorithm $A$.
% \begin{equation}
% \label{eq:upper}
%     \max_{S\subset\cX^B}d(S, A(O(S))).
% \end{equation}
%In this sense, we can upper bound the reconstruction with two layer networks if we adopt the tensor based attack proposed by \citep{wang2023reconstructing}. 
We define the reconstruction error by
\begin{equation}
\label{eq:upper error}
    R_U=\max_{S\subset \cX^B}\min_{\pi}\left(d(S,\pi(\hat S))\right)^{1/2}. %\left(\frac{1}{B}\sum_{i=1}^B \|S_i-\hat{S}_i\|^2\right)^{1/2}.
\end{equation}
Note that in both lower and upper bounds, we consider the minimum over permutation. It is because batched gradient descent adds gradient together in an unordered manner, so only identifying the set of data points without their correspondence is important.

Our studies are meant to complement the esteemed DP \citep{dwork2006differential} to analyze the reconstruction as DP is too strong and sometimes unnecessary for our studied settings. On one hand, scenarios exist when reconstruction is impossible, even when DP does not hold
\footnote{For instance, one can easily verify that with linear net and $l_2$ loss, the observation becomes $\sum_{i=1}^B x_i$, where no DP is satisfied but individual recovery of $x_i$ is impossible for $B>1$ unless specific prior knowledge is assumed, for example natural images.}. 
In addition, the price of achieving DP in the data reconstruction setting is high. For DP-SGD \citep{abadi2016deep}, the well-established Gaussian mechanism requires a large variance that will destroy the utility of gradient information:
\begin{prop}[Short version of Proposition \ref{prop:DP guarantee}]
\label{prop:DP guarantee short}
    For a two-layer neural network with $m$ hidden nodes and random weights, we denote the gradient by $G$. Under mild assumptions, the randomized mechanism $\cM=G+\cN(0,\sigma^2I)$ is $(\epsilon, \delta)-$ DP for any $\epsilon,\delta>0$ if $\sigma^2=\Omega(\frac{m\log(1/\delta)}{\epsilon})$. 
\end{prop}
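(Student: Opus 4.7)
The plan is to reduce the statement to the standard Gaussian mechanism for differential privacy by explicitly bounding the $L_2$-sensitivity $\Delta_2$ of the gradient map $S \mapsto G(S) = \nabla L(\Theta;S)$ viewed as a function of the batch $S$. Recall that if a vector-valued statistic has $L_2$ sensitivity at most $\Delta_2$, then releasing it with additive noise $\cN(0,\sigma^2 I)$ yields $(\epsilon,\delta)$-DP as soon as $\sigma^2 \gtrsim \Delta_2^2 \log(1/\delta)/\epsilon$ (which follows either from the classical Gaussian-mechanism analysis or, for a clean dependence on $\epsilon$ in the regime where $\epsilon$ is not vanishing, from converting a Renyi-DP guarantee evaluated at the optimal order $\alpha$). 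Thus the content of the proposition collapses onto showing that under the "mild assumptions" of the full statement---boundedness of $x$ and $y$, a Lipschitz and bounded activation $\sigma$ with bounded derivative, bounded output-loss gradient, and the unified mean-field scaling of $\Theta$ announced in the footnote---one has $\Delta_2 = O(\sqrt{m})$.

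To prove the sensitivity bound I would write out the per-sample gradient of the two-layer network $f_\Theta(x) = \sum_{j=1}^m a_j\,\sigma(w_j^\top x)$ (up to the chosen normalization). Its derivative with respect to $a_j$ is $\ell'(f_\Theta(x),y)\,\sigma(w_j^\top x)$, and with respect to $w_j$ it is $\ell'(f_\Theta(x),y)\,a_j\,\sigma'(w_j^\top x)\,x$. Under the stated boundedness, each per-neuron block contributes an $O(1)$ squared $L_2$ norm, so summing across the $m$ hidden units gives a per-sample gradient of squared $L_2$ norm $O(m)$, i.e.\ norm $O(\sqrt{m})$. Since replacing a single sample in $S$ changes the batch-averaged gradient by at most $2/B$ times this per-sample norm, this yields $\Delta_2 = O(\sqrt{m}/B)$; the $1/B$ only strengthens the bound and can be absorbed into the hidden $\Omega$-constant. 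Combining with the Gaussian-mechanism inequality gives $\sigma^2 = \Omega(m \log(1/\delta)/\epsilon)$, exactly the claim.

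The main technical obstacle is the bookkeeping of the mean-field scaling. Depending on whether $f_\Theta$ carries a global $1/m$, or the $a_j$ are scaled like $1/\sqrt m$, or the $w_j$ are drawn with covariance $I_d/d$, the per-neuron contribution may pick up additional $1/m$ or $1/\sqrt m$ factors, and one needs to verify that, after accumulating across the $m$ neurons, the per-sample gradient norm really is $\Theta(\sqrt m)$ (not $O(1)$ or $O(m)$). I would discharge this by bounding the relevant sums using standard concentration: sub-Gaussian tails on $w_j^\top x$ to control the pre-activations uniformly in $j$, and a Bernstein/Hoeffding step over the $m$ hidden units to control their sum on a high-probability event. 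A secondary, minor issue is extracting the $1/\epsilon$ (rather than $1/\epsilon^2$) dependence, which is a routine consequence of optimizing the Renyi-to-$(\epsilon,\delta)$ conversion and only affects constants when $\epsilon$ is bounded away from $0$. Once the sensitivity bound and the Gaussian mechanism are combined, the proposition follows.
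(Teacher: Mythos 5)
Your proposal matches the paper's proof in both substance and structure: the paper likewise bounds the squared $L_2$-sensitivity of the gradient neuron-by-neuron (each of the $m$ blocks $\nabla_{a_j}\ell$ and $\nabla_{\bw_j}\ell$ contributing $\tilde{O}(1)$ with high probability over the random weights, giving $\Delta = \tilde{O}(m)$), and then obtains the $1/\epsilon$ rather than $1/\epsilon^2$ dependence exactly as you anticipate, via the moments-accountant/Renyi computation for the Gaussian mechanism (Theorem 2 of Abadi et al.) optimized over the order $\lambda$. The proposal is correct and essentially identical to the paper's argument.
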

In this work, we will theoretically and empirically evaluate defenses against reconstruction. For theoretical analysis, we bound the reconstruction error under various defenses $D$ by estimating Eq. (\ref{eq:lower}) and (\ref{eq:upper error}), where the observation is $D(G)$ instead of $G$. The theoretical bounds will connect the model architecture, defenses' strength, and data dimension. However, only theoretical analysis lacks an extension to general networks and utility-privacy trade-off. Therefore we introduce a systematic way to evaluate defenses by empirical analysis further. Within a set of defenses $\cD_{\mathcal{U}}$ of the same level of utility loss $\mathcal{U}$, we will measure the following criterion for each $D\in \cD_{\mathcal{U}}$:
\begin{equation}
\label{eq:empirical}
\mathcal{S}_D:=\max_{A\in\cA}d(S,A(D(G))),
\end{equation}
which evaluates their strength against the most effective attacker in $\cA$, the set of considered reconstruction attacks. In experiments, we conduct various attacks under different defenses to estimate Eq. (\ref{eq:empirical}). Then, we compare the best reconstruction error with the same utility loss for each defense.

\begin{table*}[t]
\centering
  \scriptsize
  \resizebox{.99\textwidth}{!}{
  \begin{tabular}{l|c|c|c|c|c|c|c}
  \toprule
   &  No defense & Local aggregation & Gradient noise & Gradient clipping & DP-SGD & Dropout & Gradient pruning \\
  \midrule
   {\bf Upper bound}  & $B\sqrt{\frac{d}{m}}$ & $B\sqrt{\frac{d}{m}}$ & $(B+\sigma_0)\sqrt{\frac{d}{m}}$ & $B\sqrt{\frac{d}{m}}$ & $(B+\sigma_0\max\{1,\frac{\|G\|}{C}\})\sqrt{\frac{d}{m}}$ & $B\sqrt{\frac{d}{(1-p)m}}$ & Unknown \\
   \midrule
   {\bf Lower bound} & $\sigma\sqrt{\frac{d}{m}}$ & $\sigma\sqrt{\frac{d}{m}}$  & $\sigma\sqrt{\frac{d}{m}}$ & $\sigma\max\{1,\frac{\|G\|}{C}\}\sqrt{\frac{d}{m}}$ & $\sigma\max\{1,\frac{\|G\|}{C}\}\sqrt{\frac{d}{m}}$ & $\sigma\sqrt{\frac{d}{(1-p)m}}$ & $\sigma\sqrt{\frac{d}{(1-\hat p)m}}$ \\
  \bottomrule
  \end{tabular}}
  \caption{The algorithmic upper bound and information-theoretic lower bound of the reconstruction error against different defenses. The bound here is the order with respect to different factors. Some parameters are defined later in the subsection of each defense.  %The error bound of tensor based feature reconstruction method against different defenses. The reconstruction error can be upper bounded for most of the defending methods except for gradient pruning, indicating that gradient pruning is a strong defense against the attack.
  } 
  \label{tab:bounds}
\end{table*}

\section{Theoretical Analysis}
\label{sec:theory}
% {\red - introduce the attack method first (move some content in the attack part here) and then the defense methods. make it shorter. }

In theoretical analysis, we mainly focus on two-layer neural networks $f(\bx;\Theta)=\sum_{j=1}^ma_j\sigma(\bw_j^\top \bx)$ with $\ell$ be the square loss. Here $a_j\sim \cN(0,1/m^2)$ and $\bw_j\sim\cN(0,I)$.
For the upper bound, we examine a specific attack method proposed by \citep{wang2023reconstructing}. The attack method involves the computation of gradient-related tensor $\sum_{j=1}^mg(\bw_j)H_p(\bw_j)$, where $g(\bw_j):=\frac{\partial \ell}{\partial a_j}$ and $H_p$ is the $p$-th Hermite function. By Stein's lemma \citep{stein1981estimation, mamis2022extension}, it is approximate to a tensor product $\sum_{i=1}^B c_i\bx_i^{\otimes p}$ when $\bw_j$'s are random Gaussian and we can conduct tensor decomposition to recover data, where $B$ is the batch size.
\begin{theorem}[Informal, Theorem 5.1 in \citep{wang2023reconstructing}]
\label{thm:upper short}
    For a 2-layer network under mild assumptions, the reconstruction error of the tensor-based attack has an upper bound $R_U\le \tilde{O}(B\sqrt{\frac{d}{m}})$ with high probability.
\end{theorem}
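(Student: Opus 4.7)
The plan is to follow the tensor-decomposition strategy hinted at in the theorem statement: construct a statistic that depends only on the observed gradient and the (known) weight realization, show that in expectation over the random weights it equals a low-rank symmetric tensor whose components are the unknown inputs $\bx_i$, and then combine concentration with a robust tensor-decomposition guarantee to bound $R_U$.

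First I would compute the population quantity. For the square loss on the two-layer net, $g(\bw_j) := \partial \ell/\partial a_j$ is an observable linear combination $\sum_{i=1}^B r_i \sigma(\bw_j^\top \bx_i)$ of the feature activations, with residuals $r_i = f(\bx_i;\Theta) - y_i$. Forming the Hermite-weighted average
\begin{equation*}
T \;:=\; \frac{1}{m}\sum_{j=1}^m g(\bw_j)\, H_p(\bw_j)
\end{equation*}
and applying Stein's lemma (Gaussian integration by parts) to each summand gives
\begin{equation*}
\E[T] \;=\; c_{\sigma,p} \sum_{i=1}^B r_i\, \bx_i^{\otimes p},
\end{equation*}
where $c_{\sigma,p} = \E_{z\sim\cN(0,1)}[\sigma^{(p)}(z)]$ is the $p$-th Hermite coefficient of $\sigma$. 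Choosing $p\ge 3$ and assuming mild incoherence/non-degeneracy of $\{\bx_i\}$ makes the resulting symmetric tensor of rank at most $B$ uniquely decomposable into its components.

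Next I would control the deviation of the empirical $T$ from $\E[T]$. Since the $\bw_j$ are i.i.d.\ and each summand enters through a bounded-growth activation multiplied by a Hermite polynomial in $\bw_j$, the summands are polynomial-in-Gaussian tensor-valued random variables with $\Theta(1)$-scale per-sample variance. A concentration inequality tailored to such sums (matrix Bernstein adapted to polynomial chaos, or an $\epsilon$-net over a suitable flattening of the tensor) then yields $\normop{T-\E[T]} \le \tilde O(\sqrt{d/m})$ with high probability, where $\tilde O$ hides poly-log factors in $d$, $m$ and $1/\delta$.

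Finally I would invoke a robust tensor-decomposition primitive on the noisy tensor $T$. Given incoherent components $\{\bx_i\}$ and a spectral-norm perturbation of size $\eta$, Jennrich-style or higher-order power-method guarantees recover each $\bx_i$ up to permutation with $\ell_2$-error at most a factor linear in $B$ times $\eta$; combining with the concentration bound gives the claimed $R_U \le \tilde O(B \sqrt{d/m})$. \textbf{The hard part} will be this last step: extracting a linear (rather than polynomial) dependence on $B$ from the decomposition requires carefully using incoherence and the relative magnitudes of the coefficients $r_i$, and the concentration step itself is delicate because the summands in $T$ are only polynomial-tail, ruling out classical sub-Gaussian Bernstein and forcing a chaos-type argument.
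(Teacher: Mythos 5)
Your overall skeleton --- Hermite-weighted averaging of the gradient entries, Stein's lemma to identify the population limit as a rank-$B$ tensor $\sum_i r_i\bx_i^{\otimes p}$, concentration over the random weights, then a robust decomposition --- is the same as the paper's, and you correctly flag the two places where the real work is. But the paper resolves both of those places with a structural device your sketch omits, and without it your two quantitative claims do not hold as stated. The paper never decomposes an order-$3$ tensor in the ambient dimension $d$. It forms \emph{both} $\hat{P}=\frac1m\sum_j g(\bw_j)H_2(\bw_j)$ and $\hat{\bT}=\frac1m\sum_j g(\bw_j)H_3(\bw_j)$, uses the matrix $\hat{P}$ (via a power method) to estimate an orthonormal basis $V\in\R^{d\times B}$ for the span of $\{\bx_i\}$, and only then decomposes the contracted tensor $\hat{\bT}(V,V,V)\in\R^{B\times B\times B}$. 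The $\sqrt{d}$ dependence is thereby confined to the matrix step, where matrix Bernstein gives $\|\hat{P}-P\|\le\tilde{O}(B\sqrt{d/m})$ cleanly, while the tensor step only pays $\|V^\top\bw_j\|^3\approx B^{3/2}$ per summand, giving $\|\hat{\bT}(V,V,V)-\bT(V,V,V)\|\le\tilde{O}(B^{5/2}/\sqrt{m})$ with no $d$ at all.

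Two concrete problems with your version. First, the claim $\normop{T-\E[T]}\le\tilde{O}(\sqrt{d/m})$ for the order-$3$ statistic in $\R^d$ is not justified: each summand $g(\bw_j)H_3(\bw_j)$ has flattened spectral norm $\tilde{O}(B\,d^{3/2})$ and only polynomial (sub-Weibull) tails, so a flattening-plus-Bernstein argument yields something like $\tilde{O}(B\,d^{3/2}/\sqrt{m})$, and getting the injective norm down to the $\sqrt{d/m}$ scale would require the $\epsilon$-net/chaos argument you only gesture at --- this is precisely the difficulty the projection onto $V$ is designed to sidestep. Second, the per-summand scale is not $\Theta(1)$: $g(\bw_j)=\sum_{i=1}^B r_i\sigma(\bw_j^\top\bx_i)$ is a sum of $B$ terms each of size $\tilde{O}(1)$, so the factor $B$ in the final bound enters already at the concentration stage, not as a ``linear in $B$'' loss in the decomposition. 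In the paper, the decomposition guarantee (adapted from Zhong et al.) contributes $1/(\nu_{\min}\pi_{\min})$ on the matrix perturbation plus a $\sqrt{B}\,\gamma/(\lambda_{\min}\pi_{\min}^2)$ term on the tensor perturbation, and the latter is subordinated to $B\sqrt{d/m}$ by the explicit assumption $B\le\tilde{O}(d^{1/4})$; no extraction of a linear-in-$B$ dependence from incoherence is needed or performed. To repair your outline you would either need to carry out the heavy-tailed injective-norm concentration in $\R^d$ (and then still face a decomposition of non-orthogonal components in $d$ dimensions), or adopt the paper's two-stage matrix-then-projected-tensor route.
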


For the lower bound, to avoid pure combinatorial analysis, we formulate data reconstruction as a statistical estimation problem: we observe the noisy gradient $\nabla_\Theta L(\Theta;S)+\epsilon$ and identify the input data $S$. Here $\epsilon\sim\cN(0,\sigma^2)$ is a Gaussian random vector. We consider the minimax lower bound in Eq. (\ref{eq:lower}), where the expectation is over the random Gaussian noise and the high probability in the following proposition is derived with random weights.
\begin{theorem}[Short version for Theorem \ref{batchedmain}]
\label{thm:batchedmain short}
    For a 2-layer network with noisy gradient, the statistical minimax risk has a lower bound $R_L\ge\tilde{\Omega}(\sigma\sqrt{\frac{d}{m}})$ with high probability.
\end{theorem}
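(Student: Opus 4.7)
The plan is to establish this information-theoretic lower bound via Fano's inequality applied to a Gilbert--Varshamov packing of the data space, with the architecture-specific input being a high-probability spectral bound on the Jacobian of the forward map $F(S) := \nabla_\Theta L(\Theta; S)$ under the random-weight initialization. Because the noise is additive Gaussian, $G = F(S) + \epsilon$ with $\epsilon \sim \cN(0,\sigma^2 I)$, KL divergences admit a clean closed form and a Jacobian bound translates directly into a bound on pairwise distinguishability.

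First, I would reduce the estimation problem to a multi-way hypothesis test. Fix a base $S_0 \in \cX^B$ and construct, by Gilbert--Varshamov in $\R^{Bd}$, a family $\{S_1,\ldots,S_N\}$ centered at $S_0$ inside a small Euclidean ball so that (i) the pairwise metric separation $d(S_i,S_j)^{1/2}\ge 2\delta$ holds, and (ii) $\log N = \Theta(d)$. To avoid losing factors to the permutation in $d$, I would restrict the packing to perturbations of a single data point in coordinates that break the symmetry, while keeping the remaining $B-1$ points distinct and well separated. Since the noise channel is Gaussian,
\begin{equation*}
    \mathrm{KL}(P_i\,\|\,P_j) \;=\; \frac{\|F(S_i)-F(S_j)\|_2^2}{2\sigma^2} \;\le\; \frac{\sup_{S}\|J(S)\|_{\mathrm{op}}^2\cdot\|S_i-S_j\|_2^2}{2\sigma^2},
\end{equation*}
where $J(S)$ is the Jacobian of $F$ on the packing ball; the first-order approximation is valid because the ball is small enough that the remainder, controlled by Lipschitzness of $\sigma$, is of lower order.

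The crux is the spectral bound on $J$. For $f(\bx;\Theta)=\sum_j a_j\sigma(\bw_j^\top\bx)$ with $a_j\sim\cN(0,1/m^2)$ and $\bw_j\sim\cN(0,I_d)$, the Jacobian splits into a $\nabla_{a_j}$-block and a $\nabla_{\bw_j}$-block. For each data point $\bx_i$, the $\nabla_{a_j}$-block is an $m\times d$ matrix whose rows are (scalar multiples of) the independent Gaussian vectors $\bw_j$; standard concentration of Gaussian matrices gives operator norm $O(\sqrt{m}+\sqrt{d})$ with probability $1-e^{-\Omega(d)}$. The $\nabla_{\bw_j}$-block carries an extra factor $a_j=O(1/m)$, making its contribution subdominant when $m\gtrsim d$. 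A uniform bound over the packing ball follows from a covering-net argument. Plugging $\|J\|_{\mathrm{op}}\le C\sqrt{m}$ (with the $B$-dependent factor absorbed through loss normalization together with the $1/B$ in the metric $d$) into the KL bound yields $\mathrm{KL}(P_i\|P_j)\le C' m\delta^2/\sigma^2$. Fano's inequality then gives testing error $p_e\ge 1-(C' m\delta^2/\sigma^2+\log 2)/\log N$; choosing $\delta=c\sigma\sqrt{d/m}$ with $c$ small forces $p_e\ge 1/2$, which by the minimax-to-testing reduction yields $R_L\ge\delta/2=\Omega(\sigma\sqrt{d/m})$, with the high-probability statement inherited from the Gaussian-matrix concentration.

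The hard part will be the spectral calculation: carefully bookkeeping the mean-field scaling $a_j=O(1/m)$ so that the $\nabla_{\bw_j}$-contributions are uniformly subdominant on the packing, and lifting the pointwise Jacobian bound to a uniform one via $\epsilon$-nets. A secondary subtlety is the batch factor $B$: the natural per-sample Jacobian operator norm and the Euclidean separation both scale with $B$, so matching the claimed $\sqrt{d/m}$ scaling requires either normalizing the loss by $1/B$ or exploiting the $1/B$ factor already present in the metric $d$ to cancel the $B$-dependence introduced by the packing.
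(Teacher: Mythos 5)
Your route is genuinely different from the paper's. The paper does not use Fano at all: it invokes a Cram\'er--Rao--type bound $R_L^2\ge \tr\bigl((J J^\top)^{-1}\bigr)\sigma^2$ for the Jacobian $J=\nabla_S O(S)$, relaxes it via the elementary inequality $\tr(M^{-1})\ge d^2/\tr(M)$, and then only has to upper-bound the \emph{Frobenius} norm $\tr(J^\top J)=\tilde{O}(md)$ row by row using the scalings $a_j=\tilde{O}(1/m)$, $\|\bw_j\|=\tilde{O}(\sqrt d)$; the batch case follows by concatenating per-sample Jacobians and using the $1/B$ in the metric. Your Fano-plus-packing argument buys something real: it is an honest minimax lower bound over all estimators, whereas the Cram\'er--Rao step as written applies to unbiased estimators and needs extra work (e.g.\ van Trees) to be fully rigorous in the minimax sense. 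The price you pay is that a $\Theta(d)$-dimensional packing forces you to control the \emph{operator} norm of $J$ (all singular values must be $\tilde{O}(\sqrt m)$), which is strictly harder than the trace bound the paper needs, and also quietly requires $m\gtrsim d$ so that $\sqrt m+\sqrt d=O(\sqrt m)$.

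That operator-norm step is where your sketch has a concrete gap. Differentiating the residual gives $\nabla_{\bx}\nabla_{a_j}(y-f(\bx))^2=2(y-f(\bx))\sigma'(\bw_j^\top\bx)\bw_j-2\sigma(\bw_j^\top\bx)\vh$ with $\vh=\nabla_\bx f(\bx)=\sum_j a_j\sigma'(\bw_j^\top\bx)\bw_j$; your description of the $a$-block keeps only the first term. Stacking the second terms produces a rank-one matrix $\vs\vh^\top$ with $\|\vs\|=\tilde{O}(\sqrt m)$, and with the crude triangle-inequality estimate $\|\vh\|\le C\sum_j |a_j|\,\|\bw_j\|=\tilde{O}(\sqrt d)$ (which is all the paper ever proves, and all it needs for the trace) this term has operator norm $\tilde{O}(\sqrt{md})$ --- large enough to collapse your Fano radius to $\sigma/\sqrt m$ and lose the factor $\sqrt d$. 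You must either (i) prove the sharper bound $\|\vh\|=\tilde{O}(\sqrt{d/m})$, which does hold because $a_j$ is mean-zero and independent of $\bw_j$, so conditionally on the $\bw_j$'s the vector $\vh$ is Gaussian with covariance $\preceq \frac{C^2}{m^2}\sum_j\bw_j\bw_j^\top=\tilde{O}(1/m)\,I_d$, or (ii) exploit that this troublesome piece is rank one and place your Gilbert--Varshamov packing in the orthogonal complement of $\mathrm{span}(\vh)$, losing only one of the $d$ packing dimensions. Either fix completes the argument (the $\bw_j$-block is indeed subdominant as you say, since it carries the factor $a_j=\tilde{O}(1/m)$), but as written the key spectral claim is not justified.
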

In the following, we analyze the upper and lower bound for the reconstruction under different defenses. The bound is shown in Table \ref{tab:bounds}. Our analysis provides a tight and quantitative analysis of how each defense's defending strength (such as pruning ratio or noise variance) affects the data reconstruction error (which is inversely proportional to the degree of privacy). 

\subsection{Local aggregation}
In a realistic setting of federated learning, the observed local update can be multiple steps of gradient descent \citep{mcmahan2017communication}. 
% Specifically, under the setting of \textit{federated averaging} \citep{mcmahan2017communication}, a client with $N$ images trains $E$ epochs for each update, where a single epoch contains $N/B$ gradient descent steps with batch size $B$. 
In our analysis, we mainly consider the most simple cases, where local devices train two steps and the observation is $\Theta^{(2)}-\Theta^{(0)}$. Since one step of training does not change the parameters too much, we can approximately assume that the two steps share a same set of parameters. Then the analysis is similar to that without defense.
\begin{prop}[Short version for Proposition \ref{prop:twosteps}]
\label{prop:twosteps short}
    For a 2-layer network with mild assumptions, the reconstruction error of tensor based attack under local aggregation defense for 2 steps has a upper bound $R_U\le \tilde{O}(B\sqrt{\frac{d}{m}})$ with high probability.
\end{prop}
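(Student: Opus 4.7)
The plan is to reduce the two-step local-aggregation observation to a scaled single-step gradient plus a small perturbation, and then invoke Theorem \ref{thm:upper short} directly. Writing
\[
\Theta^{(2)} - \Theta^{(0)} = -\eta\bigl(\nabla L(\Theta^{(0)};S) + \nabla L(\Theta^{(1)};S)\bigr),
\]
the first step is to show that $\Theta^{(1)} = \Theta^{(0)} - \eta\,\nabla L(\Theta^{(0)};S)$ stays close to $\Theta^{(0)}$. At the prescribed Gaussian initialization, standard sub-Gaussian concentration controls $\|\nabla L(\Theta^{(0)};S)\|$ by a polynomial in $B$, $d$, and $m$, so for a sufficiently small step size $\eta$ the drift $\|\Theta^{(1)} - \Theta^{(0)}\|$ is $O(\eta\,\poly(B,d,m))$ with high probability. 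This is exactly the approximation captured in the text's remark that ``the two steps share the same set of parameters.''

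Next I would propagate this drift into the per-hidden-unit quantities $g(\bw_j) = \partial \ell / \partial a_j$ that drive the tensor attack. Because $g(\bw_j)$ is a smooth function of $\bw_j$ whenever the activation $\sigma$ has bounded derivatives, and because the Hermite features $H_p(\bw_j)$ are polynomially Lipschitz on the high-probability event that the $\bw_j$'s stay in a ball of radius $O(\sqrt d)$, the two-step tensor can be written as
\[
2\eta \sum_{j=1}^m g(\bw_j^{(0)})\,H_p(\bw_j^{(0)}) \;+\; \eta\, E,
\]
where $E$ is an error tensor collecting the contributions from the drift in both $g(\bw_j)$ and $H_p(\bw_j)$, and is bounded in $\normop{\cdot}$ by $O(\eta\,\poly(B,d,m))$ uniformly over the hidden units with high probability.

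The main obstacle is to bound $\normop{E}$ tightly enough that, after rescaling by $2\eta$, it is dominated by the Stein's-lemma approximation error already handled in the proof of Theorem \ref{thm:upper short}; this requires choosing $\eta$ small enough to absorb the drift while keeping the signal components $c_i \bx_i^{\otimes p}$ at their original order-one strength, so that the tensor decomposition stability lemma used in \citep{wang2023reconstructing} applies unchanged. Once this is done, the two-step observation is, up to an absorbed perturbation, equivalent to a single-step gradient, and Theorem \ref{thm:upper short} yields $R_U \le \tilde O(B\sqrt{d/m})$ with high probability.
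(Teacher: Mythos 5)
Your proposal follows essentially the same route as the paper's proof of Proposition \ref{prop:twosteps}: treat the two-step aggregate as twice the one-step gradient plus a perturbation induced by the drift $\Theta^{(1)}-\Theta^{(0)}$, bound that perturbation, and feed the result into the one-step tensor-decomposition analysis (the paper bounds $\|\hat{P}_1-\hat{P}_0\|\le\tilde O(B^2d/m)$ and the corresponding tensor term by $\tilde O(B^{7/2}/m)$, which are absorbed under the width condition $m\ge\tilde\Omega(B^2d)$). Two small corrections to your accounting: the attacker builds the tensor from the \emph{known initial} weights $\bw_j^{(0)}$, so the Hermite features $H_p(\bw_j)$ never drift --- the perturbation enters only through the coefficients $g_j^{(1)}$, via $\sigma(\bw_j^{(1)}\cdot\bx_i)$ versus $\sigma(\bw_j^{(0)\top}\bx_i)$ and $r_i^{(1)}$ versus $r_i$; and the paper does not shrink the step size to absorb the drift --- it keeps $\eta_w=O(1)$ (only $\eta_a=O(1/m^2)$) and the $\tilde O(B/m)$ per-unit drift comes for free from the mean-field scaling $a_j\sim\cN(0,1/m^2)$, with the absorption supplied by the assumption $m\ge\tilde\Omega(B^2d)$ rather than by sending $\eta\to 0$.
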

\begin{prop}[Short version for Proposition \ref{prop:aggre lower bound}]
\label{prop: aggre lower bound short}
    For a 2-layer network with noisy gradient and local aggregation defense with two steps, the statistical minimax risk has a lower bound $R_L\ge\tilde{\Omega}(\sigma\sqrt{\frac{d}{m}})$ with high probability.
\end{prop}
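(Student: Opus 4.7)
My plan is to reduce the two-step local aggregation setting back to the single-step noisy-gradient setting of Theorem \ref{thm:batchedmain short} by arguing that, up to higher-order terms, the observation $\Theta^{(2)}-\Theta^{(0)}+\epsilon$ carries essentially the same information about $S$ as a single noisy gradient $\nabla_\Theta L(\Theta^{(0)};S)+\epsilon$. Since the prior analysis is a statistical/minimax lower bound, it suffices to show that any distinguishing power an estimator could gain from the second step is negligible relative to the noise scale $\sigma$, so that the Le Cam / Fano style hypothesis tests used for the single-step bound still go through with only a constant-factor loss.

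\textbf{Step 1: Expand the two-step update.} With learning rate $\eta$, the local aggregate decomposes as
\begin{equation*}
\Theta^{(2)}-\Theta^{(0)} = -\eta\,\nabla_\Theta L(\Theta^{(0)};S)-\eta\,\nabla_\Theta L(\Theta^{(1)};S),
\end{equation*}
and a first-order Taylor expansion around $\Theta^{(0)}$ gives
\begin{equation*}
\nabla_\Theta L(\Theta^{(1)};S) = \nabla_\Theta L(\Theta^{(0)};S) - \eta\,\nabla_\Theta^2 L(\Theta^{(0)};S)\,\nabla_\Theta L(\Theta^{(0)};S) + O(\eta^2).
\end{equation*}
Thus $\Theta^{(2)}-\Theta^{(0)} = -2\eta\,\nabla_\Theta L(\Theta^{(0)};S) + R(S)$, where $R(S)$ is a controllable remainder of order $\eta^2$ times Hessian-magnitude terms. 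Under the random weights assumption from Section \ref{sec:theory} and the mild assumptions used in Proposition \ref{prop:twosteps short}, with high probability the operator norm of $\nabla_\Theta^2 L(\Theta^{(0)};S)$ is bounded, so $R(S)$ is of strictly smaller order than the leading first-order term.

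\textbf{Step 2: Reduce to the single-step lower bound.} I would then rescale and rewrite the observation as $Y = -2\eta\,\nabla_\Theta L(\Theta^{(0)};S) + R(S) + \epsilon$. For the minimax lower bound we follow the construction used in Theorem \ref{thm:batchedmain short}: pick a pair (or packing) of candidate datasets $S, S'\subset\cX^B$ whose single-step gradients $\nabla_\Theta L(\Theta^{(0)};\cdot)$ are close to within $O(\sigma\sqrt{d/m})$ in $\ell_2$. Because the remainder obeys $\|R(S)-R(S')\|=o(\|\nabla L(\Theta^{(0)};S)-\nabla L(\Theta^{(0)};S')\|)$, the KL divergence $\mathrm{KL}(P_S\|P_{S'})$ between the Gaussian observation laws under $S$ and $S'$ in the two-step model differs from the single-step model's KL divergence by at most a constant factor. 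Invoking Fano's (or Le Cam's) inequality with the same packing then yields $R_L \ge \tilde{\Omega}(\sigma\sqrt{d/m})$ with high probability over the random weights, matching Theorem \ref{thm:batchedmain short}.

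\textbf{Main obstacle.} The delicate point is Step 1: bounding the remainder $R(S)$ uniformly over the packing used in the lower bound. This requires high-probability control over the Hessian (and implicitly its interaction with the data) in the random two-layer model, and an argument that the dependence of $R$ on $S$ is smooth enough that it does not create an additional information channel. Once this is handled, the rest of the argument is a straightforward transfer of the single-step Le Cam/Fano bound, with constants absorbed into the $\tilde{\Omega}(\cdot)$ notation.
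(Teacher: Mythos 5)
Your high-level plan --- view the two-step aggregate as a perturbation of (twice) the one-step gradient and transfer the one-step lower bound --- is reasonable in spirit, but it leaves the actual content of the proof as an acknowledged ``obstacle,'' and that obstacle is essentially the entire proof in the paper. The paper's argument (Proposition \ref{prop:aggre lower bound}) does not reduce to the one-step case at all: it works directly with the Cram\'er--Rao-type bound $R_L^2\ge d^2\sigma^2/\tr(J J^\top)$ applied to the two-step observation map, and the work consists of bounding, term by term, the chain-rule contributions that arise because $\theta^{(1)}$ depends on the first batch --- i.e.\ high-probability bounds on all the mixed second derivatives $\nabla_{a}\nabla_{a}\ell$, $\nabla_{a}\nabla_{\bw}\ell$, $\nabla_{\bw}\nabla_{\bw}\ell$ (Eqs.\ (\ref{eq:start 2})--(\ref{eq:end 2})), combined with specific learning-rate scalings for the two layers. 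In your framing these terms are exactly the data-derivative of your remainder $R(S)$, which you never bound. Note also a setup mismatch: the paper's lower bound uses two \emph{different} batches (the estimand is all $2B$ points and only the second batch's Jacobian is free of chain-rule terms), whereas you assume the same batch in both steps.

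There is a second, more structural gap in your Step 2. You propose to ``follow the construction used in Theorem \ref{thm:batchedmain short}'' with a Le Cam/Fano packing, but the paper's one-step bound is not a packing argument; it only ever needs an \emph{upper} bound on the total Fisher information $\tr(JJ^\top)$, via $\tr(M^{-1})\ge d^2/\tr(M)$. A two-point Le Cam argument instead requires exhibiting a pair $S,S'$ that are far apart in data space yet have close observation means --- i.e.\ control of a near-singular direction of the gradient map and a linearization argument valid over the scale of that perturbation. That packing is not constructed anywhere (neither by you nor by the paper), so the claim that the remainder only changes the KL ``by a constant factor'' is asserted about objects whose existence has not been established. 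If you want to keep the reduction flavor, the cleaner route is to stay with Fisher information: write $\nabla_{\bx_i}O = J_i^{(0)} + E_i$ where $E_i$ collects the chain-rule terms through $\theta^{(1)}$, prove $\|E_i\|_F^2=\tilde{O}(md)$ under the stated learning-rate scalings, and conclude $\tr(JJ^\top)=\tilde{O}(mdB)$ --- which is exactly what the paper does.
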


\subsection{Differential privacy stochastic gradient descent}
Differential privacy \citep{dwork2006differential} algorithms hides the exact local gradients by adding noise to gradient descent steps \citep{shokri2015privacy,abadi2016deep,song2013stochastic}.
% In this way, it hides the exact local gradient from the global model. 
In differential private federated learning \citep{geyer2017differentially,wei2020federated}, each client clips the gradient and introduces a random Gaussian noise before updates to the cluster. In the setting of differential privacy stochastic gradient descent (DP-SGD), the update of $a_j$ is
$$\Tilde{G}_j=G_j/\max\left\{1,\frac{\|G\|}{C}\right\}+ \epsilon_0,
\text{ where }\epsilon_0\sim \cN(0,\sigma_0^2 I).$$

Here $C$ is a constant threshold for gradient clipping.
Note that under this setting, the information contained in the gradient is disrupted by gradient clipping and random noise. 
% For the upper bound, we first analyze the effect of the two strategies separately and then combine them to bound the error of DP-SGD. 
% \begin{proposition}[Short version for Proposition \ref{prop:clipping}]
% \label{prop:clipping short}
%     The observation is clipped gradient descent steps $\tilde{g}$ with clipping threshold $S$. Then under mild assumptions, we can reconstruct input data with the error bound
% \begin{equation}
% \sqrt{\frac{1}{B}\sum_{i=1}^B\|\bx_i-\hat\bx_i\|^2 } \leq  \tilde O(BK\sqrt{\frac{d}{m}} )
% \end{equation}
% with high probability.
% \end{proposition}
% \begin{proposition}[Short version for Proposition \ref{prop:noise}]
% \label{prop:noise short}
%     The observation is noisy gradient descent steps $g+\epsilon$ with $\epsilon\sim\cN(0,\sigma^2I)$, where $\sigma\le O(\frac{B}{m})$. Then under mild assumptions, we can reconstruct input data with the error bound
%     \begin{equation}
% \sqrt{\frac{1}{B}\sum_{i=1}^B\|\bx_i-\hat\bx_i\|^2 } \leq  \tilde O(BK\sqrt{\frac{d}{m}} )\
% \end{equation}
% \end{proposition}
For the upper bound, the noisy gradient $G+\epsilon_0$ will lead to a $\tilde{O}((B+\sigma_0)\sqrt{\frac{d}{m}})$ error bound, which is the same order to original case if $\sigma_0\le O(\frac{B}{m})$. Gradient clipping will enhance the strength noise though itself has no defensive effect.
% If we conduct the tensor-based attack on clipped gradient $\tilde{G}$, there is no defensive effect no matter the clipping threshold and the reconstruction error bound $\tilde O(B\sqrt{\frac{d}{m}} )$ is the same as that with no defense (even for the hidden terms). For noisy gradient $G+\epsilon_0$, we have the error bound $\tilde{O}((B+\sigma_0)\sqrt{\frac{d}{m}})$, which is the same order to original case if $\sigma_0\le O(\frac{B}{m})$. 
% By combining two defenses, gradient clipping enhances the performance of gradient noise by increasing the relative scaling of the noise. The error bound is $O(\max\{1,\frac{\|g\|}{S}\})$ times to the bound of gradient noise defense only (see Appendix \ref{sec:DP proof}).
\begin{prop}[Short version for Proposition \ref{prop:DP}]
\label{prop:DP short}
    For a 2-layer network with mild assumptions, the reconstruction error of tensor-based attack under defense DP-SGD with clipping threshold $S$ and Gaussian noise with variance $\sigma_0^2$ has a upper bound $R_U\le \tilde{O}((B+\sigma_0\max\{1,\frac{\|G\|}{C}\})\sqrt{\frac{d}{m}})$ with high probability.
\end{prop}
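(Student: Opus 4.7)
The plan is to reduce the DP-SGD setting to the plain tensor-based analysis that underlies Theorem~\ref{thm:upper short}, and then pay for the additional clipping and Gaussian perturbation through a tensor-norm concentration argument.

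First, write the DP-SGD observation as $\tilde G_j = G_j/c + \epsilon_{0,j}$, where $c := \max\{1,\|G\|/C\}$ and $\epsilon_{0,j}\sim\cN(0,\sigma_0^2 I)$. The tensor-based attack of~\citep{wang2023reconstructing} assembles the Hermite tensor $T = \frac{1}{m}\sum_{j=1}^m g(\bw_j)\,H_p(\bw_j)$ and then runs a rank-$B$ decomposition to extract the directions $\{\bx_i\}_{i=1}^B$; since this procedure is invariant under a global positive rescaling of the gradient (it only rescales the coefficients $c_i$, which are not needed for direction recovery), the attacker can equivalently be viewed as operating on $c\tilde G_j = G_j + c\,\epsilon_{0,j}$, i.e.\ with additive coordinate-wise noise of variance $(\sigma_0 c)^2 = (\sigma_0 \max\{1,\|G\|/C\})^2$.

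Second, substituting this perturbed gradient into the Hermite tensor yields $\tilde T = T + E$, where $T$ already satisfies the clean $\tilde O(B\sqrt{d/m})$ approximation bound from Theorem~\ref{thm:upper short} and $E = \frac{1}{m}\sum_{j=1}^m c\,\epsilon_{0,j}\,H_p(\bw_j)$ is a random perturbation tensor. Conditional on $\{\bw_j\}$, $E$ is centered Gaussian with one-dimensional projections $E(v,\dots,v) = \frac{1}{m}\sum_j c\,\epsilon_{0,j}\,H_p(\bw_j)(v,\dots,v)$, and a standard $\epsilon$-net over the unit sphere of $\R^d$, coupled with a Gaussian tail bound and the concentration of $\frac{1}{m^2}\sum_j H_p(\bw_j)(v,\dots,v)^2$ near its $O(1/m)$ expectation, gives $\|E\|_{\mathrm{op}} \le \tilde O(\sigma_0 c\sqrt{d/m})$ with high probability. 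Feeding this into the tensor-decomposition stability result used in~\citep{wang2023reconstructing} (where the recovered-direction error is the perturbation norm divided by $\min_i |c_i|$, which is $\Omega(1)$ under the same mild assumptions) and adding the $\tilde O(B\sqrt{d/m})$ clean-signal contribution produces the claimed $\tilde O\big((B+\sigma_0 \max\{1,\|G\|/C\})\sqrt{d/m}\big)$ bound.

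The main obstacle will be the operator-norm bound on the Gaussian tensor $E$: achieving the $\sqrt{d/m}$ rate (rather than, e.g., a loose $\sqrt{d}$) requires using the correct $1/m$ normalization of the Hermite tensor in~\citep{wang2023reconstructing} and controlling the variance of $H_p(\bw_j)(v,\dots,v)$ uniformly in $v$, exploiting both the i.i.d.\ Gaussian structure of the $\bw_j$'s and the orthogonality of Hermite polynomials under the Gaussian measure; the remaining steps are essentially a rerun of the noiseless Theorem~\ref{thm:upper short} argument with an enlarged effective noise scale.
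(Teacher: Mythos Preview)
Your proposal is correct and follows essentially the same route as the paper: split the DP-SGD perturbation into the clipping rescaling and the additive Gaussian noise, bound the resulting perturbation of the Hermite matrix/tensor, and feed this into the tensor-decomposition stability result (Theorem~\ref{prop:tensorerror}). The only cosmetic differences are that the paper tracks the factor $R=\min\{1,C/\|G\|\}$ explicitly through the bounds on $\hat P$ and $\hat{\bT}$ (rather than invoking scale invariance up front as you do) and controls the noise term $\frac{1}{m}\sum_j\epsilon_j H_p(\bw_j)$ via the matrix Bernstein inequality (Lemma~\ref{bernstein}) instead of an $\epsilon$-net argument.
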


In the analysis of statistical lower bound, there is already a Gaussian noise in the observed gradient so we only need to analyze the effect of gradient clipping. Similarly, the clipping changes the scaling of the random noise and the bound changes to $\tilde{\Omega}(\sigma\max\{1,\frac{\|G\|}{C}\}\sqrt{\frac{d}{m}})$. Therefore, gradient noise will be effective if the scaling of noise is large. However, this hurts the utility a lot.

% In the defense of differential privacy gradient descent, we combine the two strategies and conduct a tensor method based on $\tilde{g}+\epsilon$. The defense is more effective and has an error bound $O(\max\{1,\frac{\|g\|}{S}\})$ times to the bound of gradient noise defense only. Gradient clipping enhances the defensive performance of the gradient noise though itself has no effect in defending the attack. However, our attack method is still enough to recover the feature since we can modify the initial weights to control $\|g\|$.
% \begin{theorem}[Short version for Proposition \ref{thm:DP}]
% \label{thm:DP short}
%     The observation is clipped noisy gradient descent steps $\tilde{g}+\epsilon$ with clipping threshold $S$ and $\epsilon\sim\cN(0,\sigma^2I)$, where $\sigma\le O(\frac{B}{m})$. Then under mild assumptions, we can reconstruct input data with the error bound
%     \begin{equation}
% \sqrt{\frac{1}{B}\sum_{i=1}^B\|\bx_i-\hat\bx_i\|^2 } \leq  \tilde O(\max\{1,\frac{\|g\|}{S}\}B\sqrt{\frac{d}{m}} )\
% \end{equation}
% with high probability.
% \end{theorem}

% In conclusion, with DP-SGD, differential privacy gradient descent can be successful only when the noise level is relatively large. However, the noise variance $\sigma^2$ in the application is usually small since a large random noise will cause bad performance in federated learning \citep{wei2020federated}. We also show some results in Section \ref{sec:experiments}. See detailed proofs of the error bounds in Appendix \ref{sec:DP proof}.

\subsection{Secure aggregation}
In federated learning, local devices update gradients to the server individually. Secure aggregation \citep{bonawitz2016practical,bonawitz2017practical} is a method that clients can aggregate their gradients before the cluster, and the cluster can only access to aggregated gradient:
$\Tilde{G}=\frac{1}{B}\sum_{l=1}^L G_l B_l,$
where $G_l$ and $B_l$ is the gradient update and batch size for $l$-th user respectively and $B=\sum_{l=1}^L B_l$.
In this way, the global server is blocked from knowing each gradient, which prevents privacy leakage. Though there is no extra defensive strength compared with directly reconstructing a large batch with size $B$, it cannot identify which user has the data.
% If the batch size is to large, say $B>d$, feature recovery cannot work since the tensor decomposition will not be unique.

\subsection{Dropout}
Another method to improve the defending effect is \textit{dropout} \citep{hinton2012improving,srivastava2014dropout}, a mechanism designed to prevent overfitting. It has been empirically discovered in \citep{zhu2019deep,huang2021evaluating} that it can defend against reconstruction attack. It randomly drops nodes in a network with a certain probability of $1-p$. In this way, some of the entries in the gradient turn to zero, introducing randomness to the gradient and prevent data leakage to some degree.
% The effect of dropout against gradient matching is empirically discussed in \citep{zhu2019deep,huang2021evaluating}.
% A two-layer fully connected network with a dropout layer can be written as
% $$f(\bx;\Theta)=\sum_{j=1}^m s_j a_j\sigma(\bw_j\cdot \bx)=\sum_{j:s_j=1}a_j \sigma(\bw_j\cdot\bx),$$
% where $s_j\sim \mathrm{Bernoulli}(p)$. Thus, the model has an effective width equal to the number of nodes $m'\approx (1-p)m$ that has not been dropped. 
A two-layer fully connected network with a dropout layer can be seen as a model with an effective width equal to the number of nodes $m'\approx (1-p)m$ that has not been dropped. Then $m$ changes into $(1-p)m$ for both upper and lower bounds.
% We can compute the sum $\frac{1}{m'}\sum_{j:s_j=1}g_j H_p(\bw_j)$ and conduct tensor decomposition.
% Since the effective width $m'\approx pm$, the defense is \textbf{not effective} when $p$ is not too small.

\subsection{Gradient pruning}
Gradient pruning \citep{sun2017meprop,ye2020accelerating} is a method that accelerates the computation in training by setting the small entries in gradient to zero. Thus, the gradient will be sparse, which is similar to dropout. However, the key difference is that the remaining entries of the gradient in dropout are chosen randomly yet gradient pruning drops small entries. With dropping nodes by the pruning rules, the distribution of effective weights is not Gaussian, which violates a key assumption in upper bound analysis. Therefore, there is no evidence showing that the tensor-based method can reconstruct input data with small errors under gradient pruning defense. In contrary, we can prove a lower bound with the same order when there is no defense.
\begin{prop}[Short version for Proposition \ref{prop:prune}]
\label{prop:prune short}
    For a 2-layer network with noisy gradient and gradient pruning defense with pruning ratio $p$, the statistical minimax risk has a lower bound $R_L\ge\tilde{\Omega}(\sigma\sqrt{\frac{d}{(1-\hat p)m}})$ with high probability. Let $J=\nabla_{\bx_1,\cdots \bx_B}\nabla_\Theta L$. $\hat p=\|J_{\symbol{92} p}\|_F^2/\|J\|_F^2 \in [0,1]$, where $J_{\symbol{92} p}$ is $J$ dropping the columns corresponding to the pruned coordinates. 
\end{prop}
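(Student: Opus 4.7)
The plan is to extend the statistical minimax lower bound of Theorem \ref{thm:batchedmain short} to the gradient-pruned setting. The key informational observation is that a pruned gradient entry is equivalent to an unobserved one: since pruning zeroes a subset of coordinates of both $G$ and the Gaussian noise $\epsilon$, the attacker's observation is indistinguishable from a noisy linear measurement of $S$ through the Jacobian restricted to retained coordinates. Linearizing the forward map $\nabla_\Theta L$ around a fiducial point $S_0$ thus reduces the problem locally to the Gaussian observation model $Y = J_p(S - S_0) + \epsilon'$, where $J_p$ is the restriction of $J = \nabla_{\bx_1,\ldots,\bx_B}\nabla_\Theta L$ obtained by keeping only the unpruned columns, and $\epsilon' \sim \mathcal{N}(0,\sigma^2 I)$ on the retained coordinates.

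First, I would apply the same local minimax / Cramer--Rao-type lower bound that drives the no-defense proof of Theorem \ref{thm:batchedmain short} to conclude that the expected total squared reconstruction error satisfies $Bd \cdot R_L^2 \geq c\,\sigma^2 \,\tr\!\bigl((J_p^\top J_p)^{-1}\bigr)$ for an absolute constant $c>0$. Second, I would lower bound the trace of the inverse Gram matrix by the AM--HM inequality on the $Bd$ eigenvalues of $J_p^\top J_p$:
\begin{equation*}
\tr\!\bigl((J_p^\top J_p)^{-1}\bigr) \;\geq\; \frac{(Bd)^2}{\|J_p\|_F^2} \;=\; \frac{(Bd)^2}{(1-\hat p)\,\|J\|_F^2},
\end{equation*}
where the equality uses the definition of $\hat p$. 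Third, I would invoke the random-weight concentration that already underpins Theorem \ref{thm:batchedmain short}, namely $\|J\|_F^2 = \tilde O(Bdm)$ with high probability over the Gaussian $\bw_j$'s and $a_j$'s, to arrive at $R_L^2 \geq \tilde\Omega\!\bigl(\sigma^2 d / ((1-\hat p) m)\bigr)$ and then take square roots.

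The principal obstacle is the \emph{data-dependence} of the pruning rule: gradient pruning retains the largest-magnitude entries of $G(S)$, so the index set that defines $J_p$ is itself a function of $S$ rather than a fixed quantity. To address this I would either take a union bound over all $\binom{M}{k}\le 2^M$ possible retained-index sets (losing only $\tilde\Omega$ factors, since $M$ is polynomial in the problem parameters) and apply the linear-Gaussian bound to each, or localize a two-point Le Cam construction to a neighborhood of $S_0$ on which the pruning pattern is constant and then take the worst case over patterns compatible with the given $\hat p$. A secondary subtlety is that $J_p^\top J_p$ must be invertible, which requires the number of retained gradient coordinates to exceed $Bd$; this is automatic under the concentration bounds used for the no-defense case. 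Note that AM--HM is an unconditional inequality, so no spectral-flatness assumption on $J_p$ is needed for the lower bound itself — spectral flatness of $J$ is only relevant to arguing that the bound is essentially unimprovable (matching the upper bound in Table \ref{tab:bounds}).
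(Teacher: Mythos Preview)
Your proposal is correct and follows essentially the same route as the paper: the Cramer--Rao bound of Remark~\ref{rmk:cr bound} (which already packages your AM--HM step as a lemma) applied to the pruned Jacobian $J_p$, together with $\|J_p\|_F^2=(1-\hat p)\|J\|_F^2$ and the concentration $\|J\|_F^2=\tilde O(Bdm)$ from Theorem~\ref{batchedmain}. The paper treats your two technical concerns more cavalierly---it handles the data-dependent pruning pattern in one line (non-differentiability occurs only on a measure-zero set, so the retained index set is locally constant) and never mentions invertibility of $J_p^\top J_p$---but the core argument is identical.
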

No upper bound derived indicated gradient pruning is more effective for the tensor-based data reconstruction method. The lower bound also demonstrates its potential for better privacy-utility trade-off, at least compared to dropout. Note that gradient pruning only mildly hurts utility by pruning out the least important coordinates; it does not necessarily hurt reconstruction error in the same way.  %This analysis indicates that gradient pruning may be an effective defense against data reconstruction.

% If we denote the set of entries that is not pruned by $S$ and conduct the tensor-based attack, we have $$\hat{\bT}_p=\frac{1}{m'}\sum_{j\in S}g_j H_p(\bw_j),$$
% where $g_j=\nabla_{a_j}L$ and $m'=|S|$. This formula has a similar form to the case in dropout but the key difference is that the remaining $\bw_j$'s are more likely to make $g_j$ larger so they do not follow the normal distribution. In this case, the concentration property $\sum_{j\in S}g(\bw_j)H_p(\bw_j)\approx m'\E_{\bw\sim \cN(0,I)}[g(\bw)H_p(\bw)]$ may not hold. Moreover, Lemma \ref{lemma:stein} cannot be applied with $\bw_j$'s are not normal random vectors so $\E_{\bw\sim \cN(0,I)}[g(\bw)H_p(\bw)]\neq \E_\bw[\nabla^p_\bw g(\bw)]$. Thus, the $\hat{\bT}_p$ is not a good estimation of the real tensor product, indicating gradient pruning defense is \textbf{effective} against feature reconstruction.

\section{Empirical Analysis}
\subsection{Attacks for Defense Methods Evaluation}
\label{sec:background}
\begin{table*}[t]
\centering
\resizebox{0.99\linewidth}{!}{
\begin{tabular}{cc|ccc|lll|lll|cc}
\toprule
 &  & \textbf{} & \textbf{Gradprune} &  &  & \textbf{GradClipping} &  & \multicolumn{1}{c}{\textbf{}} & \multicolumn{1}{c}{\textbf{GradNoise}} & &\multicolumn{2}{c}{\textbf{Local Aggregation}} \\ \hline
\textbf{Batchsize} & \textbf{w/o defense} & \textbf{$p=0.3$} & \textbf{$p=0.5$} & \textbf{$p=0.7$} & \multicolumn{1}{c}{\textbf{$C=2$}} & \multicolumn{1}{c}{\textbf{$C=4$}} & \multicolumn{1}{c|}{\textbf{$C=8$}} & \multicolumn{1}{c}{{0.001}} & \multicolumn{1}{c}{{0.01}} & \multicolumn{1}{c|}{{0.1}} & \multicolumn{1}{c}{{step=3}} & \multicolumn{1}{c}{{step=5}}\\ \cline{1-8} \cline{9-13} 
1 & 0.171 & 0.173 (0.002) & 0.173 (0.002) & 0.248 (0.006) & 0.173 (0.002) & 0.173 (0.002) & 0.173 (0.002) & 0.172 (0.003) & 0.174 (0.002) & 0.259 (0.012) & 0.181 (0.067) & 0.213 (0.071) \\
2 & 0.171 & 0.171 (0.003) & 0.218 (0.011) & 0.443 (0.120) & 0.169 (0.003) & 0.169 (0.003) & 0.170 (0.003) & 0.166 (0.003) & 0.252 (0.082) & 0.850 (0.108) & 0.186 (0.127)  & 0.210 (0.087)\\
4 & 0.174 & 0.186 (0.114) & 0.277 (0.106) & 0.483 (0.075) & 0.174 (0.116) & 0.175 (0.127) & 0.177 (0.113) & 0.190 (0.021) & 0.252 (0.106) & 0.714 (0.102) & 0.192 (0.079)  & 0.218 (0.081) \\
8 & 0.175 & 0.188 (0.044) & 0.266 (0.071) & 0.425 (0.043) & 0.179 (0.041) & 0.179 (0.041) & 0.179 (0.041) & 0.198 (0.114) & 0.499 (0.104) & 0.953 (0.108) & 0.202 (0.073)  & 0.223 (0.069)\\ \bottomrule
\end{tabular}}
\caption{\small Feature reconstruction error under different defenses. When using gradient pruning defense and gradient noise defense with large $\sigma$, the feature reconstruction quality degrades significantly.} 
\label{tab:feature_summary}
\end{table*}
To conduct a fair comparison between different defense $D$, we evaluate their strength in response to the strongest attack: $\mathcal{S}_D:=\max_{A\in\cA}d(S,A(D(G)))$, where $\cA$ is the set of attacking methods considered in this evaluation. The ideal case is for $\cA$ to contain the strongest possible attack; to complement existent attacks, we propose a new attack demonstrated strong for various defenses. We also include a range of attacks that are either related to our proposed attack method or widely acknowledged to be strong and robust to different defenses:
\begin{itemize}
    \item \textbf{GradientInversion} \citep{geiping2020inverting}. We consider the attack proposed by \citep{geiping2020inverting}, which achieves good recovery results using gradient inversion with known BatchNorm statistics, as knowing such statistics often results in better recovery results and thus stronger attack~\citep{huang2021evaluating}. 
    \item \textbf{Our proposed attack}. Our proposed attack is an improved version of GradientInversion attack. There are two modules we introduced to the original GradientInversion framework. a) a randomly initialized image prior network is introduced to generate the images, and b) the latent features of the fully connected layer are recovered first based on~\citep{wang2023reconstructing}, and then inverting these features using feature inversion to recover the input images. More details will be introduced in Section~\ref{sec:method}.
    \item \textbf{CPA} \citep{kariyappa2023cocktail}. We consider this attack method due to its similarity to our method: GradientInversion and feature inversion are both adopted to recover the input images. In contrast to our proposed attack, CPA forms the feature recovery problem as a blind source separation problem and recovers features by finding the unmixing matrix.
    \item \textbf{Robbing The Fed} \citep{fowl2021robbing}. As both the proposed method and CPA require minimal malicious modifications of the shared model, we are also interested in attacks on explicitly malicious servers.  
\end{itemize}
We apply various defense techniques to the above attack algorithms, and for each defense, we consider the best attack performance across different attack algorithms following Eq. (\ref{eq:empirical}).

\subsection{Description of the proposed attack algorithm} 
\label{sec:method}
% {\red QL: move a brief intro of feature reconstruction part to the theoretical section. 
% Make this section much shorter; no need to repeat anything appeared in previous work. only emphasize the hardness to combine feature reconstruction to gradient inversion in a single query. }

Notice that the tensor-based reconstruction attack is provably strong, with a matching lower bound in the setting of two-layer neural networks. However, the original design in \citep{wang2023reconstructing} was impractical, and we propose an attack that incorporates the strength of the tensor-based method and the generality of gradient inversion attacks for a comprehensive evaluation of different defenses.
On top of the gradient matching term $\mathcal{L}_{\rm grad}$ in \citep{geiping2020inverting} and the prior knowledge utilized in \citep{yin2021see} grounded in batch normalization, the key innovation of our method is an additional regularization that integrates feature reconstruction to gradient inversion. However, the integration is challenging due to different parameter requirements for gradient inversion attacks and tensor-based feature reconstruction. %This novel incorporation seeks to strength the attack strategy, ensuring a more precise and efficient evaluation of defense methods.

\paragraph{Regularization on feature matching.\;} 
By treating the intermediate (last but one) layer $\z_i$ as the model input, we approximately reconstruct $\hat z_i$ as a preprocessing step using the tensor-based method in~\citep{wang2023reconstructing}. However, the reconstructed quality, the ordering of $\hat z_i$, or whether it is unique is unclear. 
To address such challenges, we introduce a gradient-matching regularization term to align intermediate network features with reconstructed feature $\hat z_i$. This regularization, denoted as \(\mathcal{R}_{\text{feature}}(x)\), employs squared cosine similarity to mitigate the issue of sign ambiguity inherent in tensor-based reconstruction:
%\begin{equation}
$\mathcal{R}_{\rm feature}({x}) = 1-\left( \frac{\langle f({x},\phi), \hat{z}\rangle}{\left\|f({x},\phi)\right\|\left\| \hat{z}\right\|}\right)^2,$
%\label{eq:feature_reg}
%\end{equation}
where \(f(x,\phi)\) are the intermediate features of input \(x\) and \(\hat{z}\) are the reconstructed features from the tensor-based method. This approach aims to refine gradient matching by ensuring the generated input \(x\) yields intermediate features similar to \(\hat{z}\), providing a more accurate and specific solution space. We notice the reconstruction quality of individual samples drops drastically when they fall into the same class. However, the reconstruction of their spanned space is still accurate. We further refine the regularize as $\mathcal{R}_{\rm feature}({x})=\|P_{\text{span}(\hat z_i)}^\perp f(x,\phi)\|^2 $ when the sample size is large.

% issues from the sign of the features
% to encourage the intermediate features $f(\hat{x},\phi)$ be similar to recovered feature $\hat{x}$ from the tensor based method. we square the cosine-similarity. 

% To make it compatible with gradient inversion \SL{should we mention the half weights random, half weights set to make it compatible gradient inversion? Good side: make our contribution less trivial, bad side, make people feel our method is strange. }
\paragraph{Other regularizations.\;}
Following ~\citep{geiping2020inverting}, we adopt total variation and regularization on batch norm to exploit prior knowledge of natural images and training data. Moreover, inspired by deep image prior \citep{ulyanov2018deep}, we employ an untrained Convolutional Neural Network (CNN) to generate an image, which can serve as a sufficient image prior. This idea is rather similar to~\citep{jeon2021gradient} where a network is pre-trained for image generation. The final objective of the introduced attack method is
\begin{align*}
    \arg \min _{x}&  \mathcal{L}_{\rm grad}(x; \theta, \nabla_{\theta} \mathcal{L}_{\theta}(x^{*}, y^{*})) 
     + \alpha_{TV} \mathcal{R}_{\rm TV} \\
     &+ \alpha_{BN} \mathcal{R}_{\rm BN}
     + \alpha_{f} \mathcal{R}_{\rm feature}(x) .
    \label{eq:final_objective}
\end{align*}

\section{Experiments}
\label{sec:experiments}

\begin{figure*}[t]
    \centering
  \includegraphics[width=0.99\linewidth]{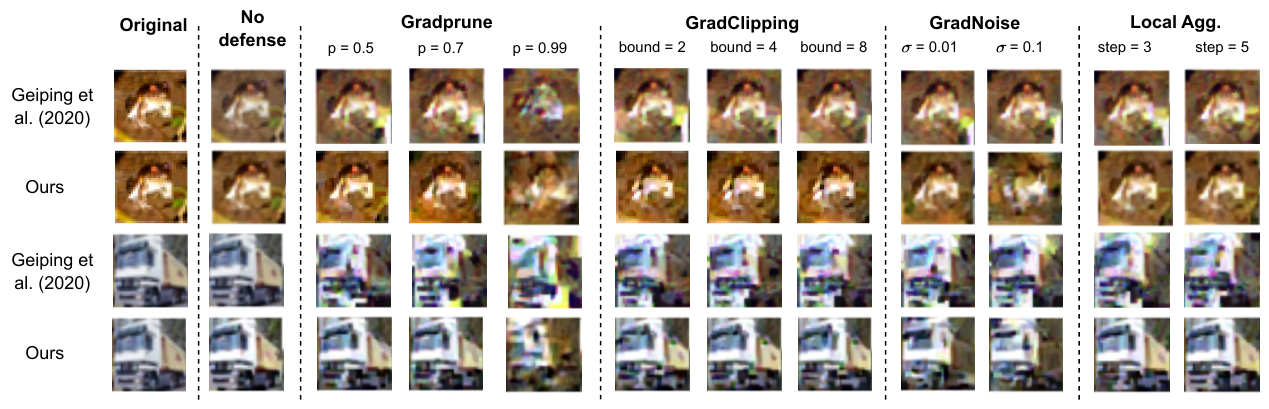}
    \caption{\small Comparison of the reconstruction results from the gradient inversion method~\citep{geiping2020inverting} and our proposed attack method on different defenses with batch size equal to 4. Our method achieves more robust reconstruction across various defenses. Gradient pruning ($p=0.99$) makes reconstructions from both methods almost unrecognizable.}
    \label{fig:attack result}
\end{figure*}

\subsection{Experiment setup}

\paragraph{Key parameters of defenses.\;} We evaluate defense methods on CIFAR-10 dataset ~\citep{krizhevsky2009learning} with a ResNet-18 (trained for one epoch), which is the default backbone model for federated learning. The details are in the Appendix~\ref{sec:exp_setup}. For \textit{GradPrune}: gradient pruning sets gradients of small magnitudes to zero. We vary the pruning ratio $p$ in $\{0.3, 0.5, 0.7, 0.9, 0.99\}$. For \textit{GradClipping}: gradient clipping set gradient to have a bounded norm, which is ensured by applying a clipping
operation that shrinks individual model gradients when their norm exceeds a given threshold. We set this threshold as $\{2, 4, 8\}$. We also adopt defense by \textit{adding noise to gradient}, the noise is generated from random Gaussian with different standard deviations $\{0.001, 0.01, 0.05, 0.1 \}$. For \textit{GradDropout}, we consider randomly dropout the entries of the gradient with probability in $\{0.3, 0.5,0.7, 0.9\}$. Local gradient aggregation aggregates local gradient by performing a few steps of gradient descent to update the model. We locally aggregate $3$ or $5$ steps.

\paragraph{Key experimental setup for the attack.\; }
We use a stratified sampled subset of 50 CIFAR-10 images to evaluate the attack performance. For all attack algorithms, we assume the batch norm statistics are available following~\citep{geiping2020inverting}. We observe the difficulty of optimization when using cosine-similarity as the gradient inversion loss for the proposed attack algorithm and therefore address this issue by reweighting the gradients by their $\ell_0$ norm. See details in the Appendix~\ref{sec:exp_setup}.~\citealp{geiping2020inverting, zhu2019deep} have shown that small batch size is vital for the success of the attack. We intentionally evaluate the attack with two small batch sizes $4$ and $8$ and two small but realistic batch sizes, 16 and 32.

\paragraph{Metrics for reconstruction quality.\;} We visualize reconstructions obtained under different defenses. Following~\citep{yin2021see}, we also use the RMSE, PSNR, and learned perceptual image patch similarity (LPIPS) score~\citep{zhang2018unreasonable} to measure the mismatch between reconstruction and original images. For the evaluation of feature reconstruction, we use the average norm of projection from original features to the orthogonal complement of the space spanned by reconstructed features. This metric measures the difference between the two spaces spanned by original and reconstructed features. Compared with the average cosine similarity between features, this metric is more robust when some features are similar.
\vspace{3mm}
\subsection{Experiment results}

% Feature reconstruction error
% $\|P_{\hat V}^{\perp} V\|_F$

\paragraph{Verification of theoretical analysis. } In Section \ref{sec:theory}, we analyze the performance of different defenses on 2-layer networks. In the experiments, our attack method involves intermediate feature matching, which requires reconstructing feature from a 2-layer network. This setting aligns with our theoretical analysis. Therefore, we can verify the theory by checking the feature reconstruction error in our attack. The results is shown in Table \ref{tab:feature_summary}. The increase of the reconstruction error along with the batch size matches the theoretical bounds. Moreover, gradient pruning and gradient noise with large scaling has significantly large reconstruction error, which also aligns with our theoretical evaluation. 
% For feature recovery, we compared the reconstruction error under different defending strategies, and the results are shown in Table \ref{tab:feature_summary}. We also use the final training loss of the federated learning model to measure the level of interference from defenses toward the training (details see Table \ref{tab:final loss} in Appendix~\ref{sec:add_results}). For an effective defense, it should increase the feature reconstruction error while disturbing little to the training. We show the relation between reconstruction error and final training loss with different defenses in Figure \ref{fig:enter-label}. Gradient pruning has a large reconstruction error without increasing training loss too much so it has the best defending effect under this criteria. Adding noise also prevents feature leakage when noise is large enough but it will affect training. Gradient clipping is barely effective for defending feature recovery while dropout has a mild effect but it is harmful to training. The feature reconstruction errors are consistent with the theoretical analysis in Section \ref{sec:defend}.

\begin{table}[] 
\centering
  \setlength{\tabcolsep}{2.8pt}
  \renewcommand{\arraystretch}{0.95}
  \resizebox{0.78\columnwidth}{!}{
  \begin{tabular}{l|c|c|c|c|c|c}
  \toprule
   Batch Size & \multicolumn{3}{c|}{{\bf 16 }} & \multicolumn{3}{c}{{\bf 32 }}\\
  \midrule
   {\bf Method}  & Geiping & Jeon & Ours & Geiping& Jeon & Ours\\
   \midrule
    {\bf LPIPS $\downarrow$}  & 0.41 (0.09)  & 0.17 (0.12)   & \textbf{0.14 (0.11)}  & 0.45 (0.11)  & 0.24 (0.13)  & \textbf{0.15 (0.11)} \\
  \bottomrule
  \end{tabular}}
  \caption{\small Comparison of our methods with other methods, Geiping, and Jeon refer to \citep{geiping2020inverting} and \citep{jeon2021gradient} respectively. We highlight the best performances in bold.}
  \label{tab:sota}
  \end{table}

\paragraph{Our proposed attack.}
~According to the framework of evaluation, we need to consider strong attacks against various defenses. A key motivation of our proposed attack is its good performance among different defense, which we will verify in the following results.
Our method is based on gradient inversion and can be easily added to previous methods~\citep{geiping2020inverting, zhu2019deep}. In Table ~\ref{tab:sota}, we compare the state-of-the-art gradient inversion methods with the proposed attacking method, our method outperforms previous methods. We visualize reconstructed images in Figure~\ref{fig:attack result}. Without defense, both~\citep{geiping2020inverting} and our attack method can recover images well from the gradient, and our method produces higher-quality images. 
% Increasing the batch size makes all methods more difficult to recover well. 
With defenses, our method shows better robustness -- the recovered images are
visually more similar to the original image. See Table~\ref{tab:w/ feature} and Table~\ref{tab:w/o feature} in Appendix~\ref{sec:add_results} for summary of reconstruction results.

% This result shows that our attack is a stronger attack under these defense methods. Figure~\ref{fig:defense} suggests that the feature matching loss is the main reason behind such robustness, showing that the analysis of feature reconstruction is reasonable. 

\begin{table*}[t]
    \scriptsize
    \centering
  \setlength{\tabcolsep}{2.8pt}
  \renewcommand{\arraystretch}{0.95}
  \resizebox{.99\textwidth}{!}{
\begin{tabular}{c|c|c|c|c|c|c|c|c|c|c|c|c|c|c|c|c}
\hline
& & \multicolumn{3}{|c|}{\bf GradClip ($C$)} & \multicolumn{4}{c|}{\bf GradDrop ($p$)} & \multicolumn{3}{c|}{\bf GradNoise ($\sigma_0$)} & \multicolumn{5}{c}{\bf GradPrune ($p$)} \\
\hline
Parameter &  & 2 & 4 & 8 & 0.3 & 0.5 & 0.7 & 0.9 & 0.001 & 0.01 & 0.1 & 0.3 & 0.5 & 0.7 & 0.9 & 0.99 \\
\hline
\multirow{4}{*}{\bf $B=2$} & Ours & \textbf{0.17} & \textbf{0.17} & \textbf{0.19} & \textbf{0.16} & \textbf{0.16} & \textbf{0.18} & \textbf{0.18} & \bf0.17 & \bf0.22 & 0.28 & \bf0.16 & \bf0.17 &\bf0.20  & \bf0.26 & \bf0.27 \\
 & GradientInversion & 0.19 & 0.23 & 0.25 & 0.19 & 0.20 & 0.20 & 0.21 & 0.30 & 0.32 & 0.35 & 0.19 & 0.20 & 0.24 & 0.28 & 0.27 \\
 & Robbing The Fed & 0.23 & 0.23 & 0.23 & 0.28 & 0.32 & 0.32 & 0.30 & 0.26 & 0.25 & \bf0.27 & 0.22 & 0.25 & 0.32 & 0.37 & 0.46 \\
 & CPA & 0.21  & 0.22 & 0.23  & 0.23  & 0.22 & 0.22 & 0.24  & 0.22  & 0.23  & 0.28  & 0.21  & 0.23 & 0.29 & 0.28 & 0.32\\
\hline
\multirow{4}{*}{\bf $B=4$} & Ours & \bf0.16 & \bf0.16 & \bf0.16 & \bf0.16 & \bf0.16 & \bf0.19 & \bf0.19 & \bf0.19 & 0.24 & 0.27 & \bf0.16 & \bf0.16 & \bf0.21 & \bf0.27 & \bf0.27 \\
 & GradientInversion & 0.21 & 0.22 & 0.23 & 0.21 & 0.21 & 0.21 & 0.22 & 0.31 & 0.31 & 0.31 & 0.19 & 0.18 & 0.23 & 0.29 & 0.28 \\
 & Robbing The Fed & 0.18 & 0.18 & 0.18 & 0.27 & 0.31 & 0.32 & 0.31 & \bf0.19 & \bf0.19 & \bf0.22 & 0.20 & 0.25 & 0.32 & 0.39 & 0.42 \\
 & CPA & 0.21  & 0.21 & 0.20 &  0.21 & 0.23 & 0.22 & 0.24  & 0.23  & 0.23 & 0.25 & 0.18 & 0.21 & 0.24 & 0.29  & 0.31 \\
\hline
\multirow{4}{*}{\bf $B=8$} & Ours & \bf0.16 & \bf0.16 & \bf0.16 & \bf0.16 & \bf0.17 & \bf0.19 & \bf0.19 & 0.19 & 0.29 & 0.30 & \bf0.15 & \bf0.16 & \bf0.20 & \bf0.29 & \bf0.29 \\
 & GradientInversion & 0.22 & 0.21 & 0.21 & 0.22 & 0.22 & 0.22 & 0.23 & 0.30 & 0.31 & 0.31 & 0.20 & 0.21 & 0.25 & 0.30 & 0.30 \\
 & Robbing The Fed & \bf0.16 & \bf0.16 & \bf0.16 & 0.29 & 0.31 & 0.31 & 0.31 & \bf0.16 & \bf0.16 & 0.25 & 0.22 & 0.30 & 0.30 & 0.36 & 0.43 \\
 & CPA & 0.21  & 0.21 & 0.22 & 0.22 & 0.22 & 0.23 & 0.25 & 0.20 & 0.22  & \bf 0.24 & 0.19 & 0.22 & 0.26 & 0.32 & 0.33 \\
\hline
\end{tabular}}
\caption{\small We run attacks under different defenses with various parameters and batch sizes. For each setting, we compared the attacks and select the one with lowest RMSE (the highlighted numbers). The best attack for each setting represents the degree of data leakage of that specific defense. Compared with the utility loss of this defense, we can systematically evaluate defenses.}
\label{tab:defense}

\end{table*}
% \vspace{8mm}
\begin{table*}[t]
\centering
  \scriptsize
  \setlength{\tabcolsep}{2.8pt}
  \renewcommand{\arraystretch}{0.95}
  \resizebox{.99\textwidth}{!}{
  \begin{tabular}{l|c|c|c|c|c|c|c|c|c|c|c|c|c|c|c}
  \toprule
   & \multicolumn{3}{c|}{\bf GradClip ($C$)} & \multicolumn{4}{c|}{\bf GradDrop ($p$)} &  \multicolumn{3}{c|}{\bf GradNoise ($\sigma_0$)} & \multicolumn{5}{c}{\bf GradPrune ($p$)}   \\
  \midrule
   {\bf Parameter}  & 2 & 4 & 8 & 0.3 & 0.5 & 0.7 & 0.9 & 0.001 & 0.01 & 0.1 & 0.3 & 0.5 & 0.7 & 0.9 & 0.99   \\
   \midrule
   % \multicolumn{11}{c}{\bf Attack batch size $= 1$} \\
   % \midrule
   % {\bf RMSE $\downarrow$}  & 0.16(0.02)  & 0.19  & 0.24 (0.01)  & 0.25 (0.00)  & 0.42 & 0.22 (0.11)  & 0.21 (0.11)  & 0.22 (0.10)  & 0.16 (0.11)  & 0.17 (0.11)  & 0.17 (0.11) \\
   %  {\bf PSNR $\uparrow$}  & 12.18 (0.59)  & 0.19  & 12.01 (0.52)  & 11.58 (0.59)  & 0.42 & 19.55 (6.08) & 19.23 (5.49)  & 18.04 (5.12)  & 23.63 (5.47)  & 23.18 (5.52)  & 23.57 (5.94)\\
   % \midrule
   \multicolumn{14}{c}{\bf Attack batch size $= 2$} \\
   \midrule
   {\bf{Final training loss} $\downarrow$ }  & 0.390 & 0.481  & 0.377  & 0.466  & 0.363  & 0.564  & 0.864 & 0.445 & 0.769 & 1.889 & 0.585 & 0.430 & 0.621 & 0.763 & 1.020 \\
   \midrule
   \multicolumn{14}{c}{\bf Attack batch size $= 4$} \\
   \midrule
   {\bf{Final training loss} $\downarrow$ }  & 0.408  & 0.353  & 0.329  & 0.552  & 0.316  & 0.275  & 0.460 & 0.527 & 0.551 & 1.764 & 0.253 & 0.389 & 0.561 & 0.505 & 0.935 \\
   \midrule
   \multicolumn{14}{c}{\bf Attack batch size $= 8$} \\
   \midrule
   {\bf{Final training loss} $\downarrow$ }  & 0.377  & 0.201  & 0.357  & 0.259  & 0.178  & 0.231  & 0.461 & 0.365 & 0.343 & 1.540 & 0.256 & 0.161 & 0.257 & 0.344 & 0.725 \\
  \bottomrule
  \end{tabular}}
  \caption{\small The final training loss of the model with defenses, which measures the utility loss from defenses. Gradient noise produces much larger interference than gradient pruning.} 
  \label{tab:final loss}
\end{table*}

% \begin{figure}[t]
%     \centering
%     \begin{subfigure}{0.48\linewidth}
%     \centering
%     \includegraphics[width=\textwidth]{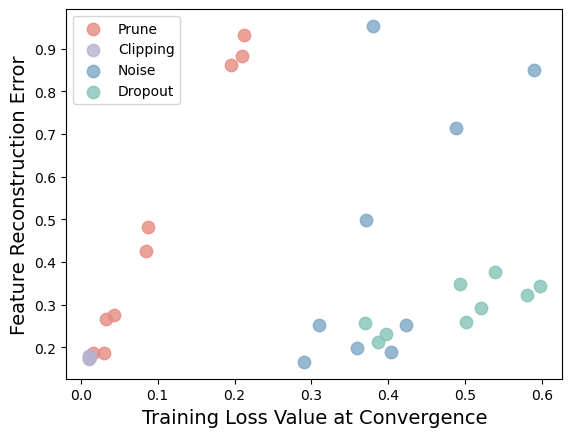}
%     \caption{Feature reconstruction error vs final training loss.}
%     \label{fig:enter-label}
%     \end{subfigure}
% \hfill
%     \begin{subfigure}{0.49\linewidth}
%     \centering
%     \includegraphics[width=\textwidth]{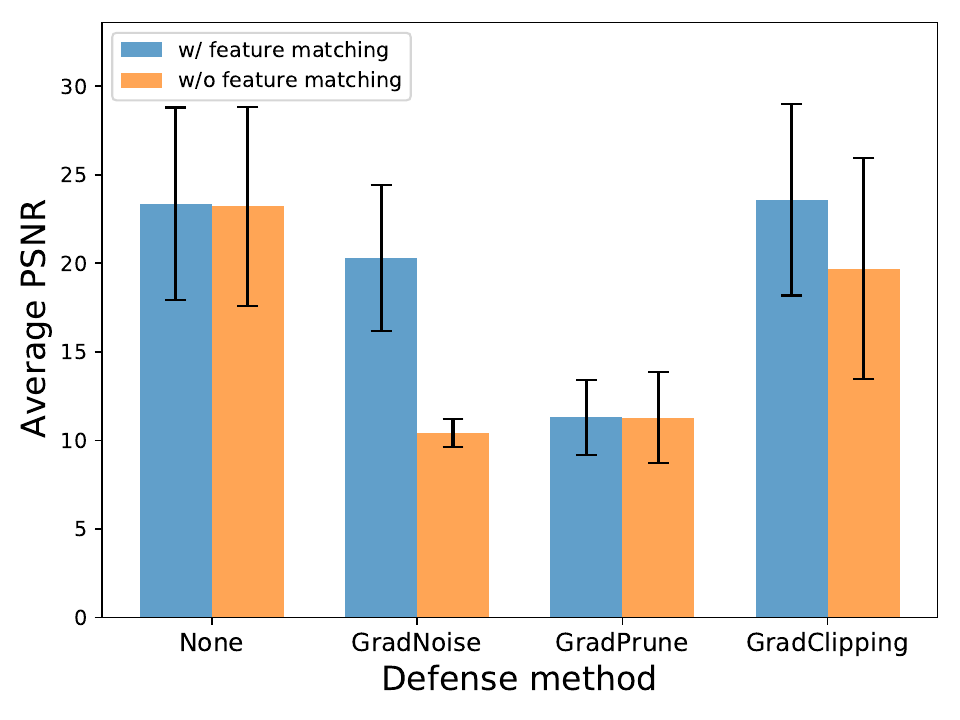}
%     \caption{PSNR of our method against different defenses.}
%     \label{fig:defense}
%     \end{subfigure}
    
% % \subfigure[Feature reconstruction error vs final training loss.]{
% %         \centering
% %     \includegraphics[width=0.43\linewidth]{figures/effectiveofdefense.png}
% %     \label{fig:enter-label}
% %     }
% % \hfill
% % \subfigure[PSNR of our method against different defenses.]{
% %     \centering
% %     \includegraphics[width=0.45\linewidth]{figures/average_PSNR_different_defense.pdf}
% %     \label{fig:defense}
% %     }

%     \caption{The results of defenses against our attack. (a) The final training loss measures the level of interference with the training of defenses. The relation between feature error and interference reflects the effect of defenses. (b) PSNR of reconstructed data using the method with and without feature matching. The difference indicates that feature matching improves robustness against defenses.}
%     \label{fig:feature}
% \end{figure}

\paragraph{Systematic evaluation of defense.} 
% \zihan{Focus more on the comparison between defenses.}
In our systematic evaluation of different defenses, we measure their strength with the strongest attack: $\mathcal{S}_D=\max_{A\in\cA}d(S,A(D(G)))$. In order to estimate the maximum reconstruction error, we selected four different attacks with various parameters and batch sizes to conduct the experiment and record the reconstruction RMSE in Table \ref{tab:defense}. For each setting, we select the attack with the smallest RMSE as the strongest attack. Therefore, among all defenses, gradient pruning with large $p$ and gradient noise with large variance $\sigma_0^2$ has the best defensive effect.

In order to evaluate the defenses, another key criteria is the utility loss of the defense methods. Most of the defenses prevent models from data leakage by perturbing the gradients in training, which may hurt the training task itself. In our experiment, we measure the utility by the final loss of the training task and we show it in Table \ref{tab:final loss}. Gradient pruning, as one of the defenses having strongest effect towards attacks, causes smaller utility loss than gradient noise. This result indicates that gradient pruning is a better defense than gradient noise. In Figure \ref{fig:evaluation}, we plot the relation between the reconstruction error and the utility loss, where each dot represents a method under the setting of a specific batch size and parameter of the defense. With the same level of utility loss, gradient pruning has the largest RMSE, indicating it is the best defense to have $\min_{D\in \mathcal{D}_\mathcal{U}}\cS_D$.
\\

\begin{wrapfigure}{r}{0.5\textwidth}
    \centering
    \vspace{-.3cm}
    \scalebox{1}{
        \centering
\includegraphics[width=\columnwidth,height=0.75\columnwidth]{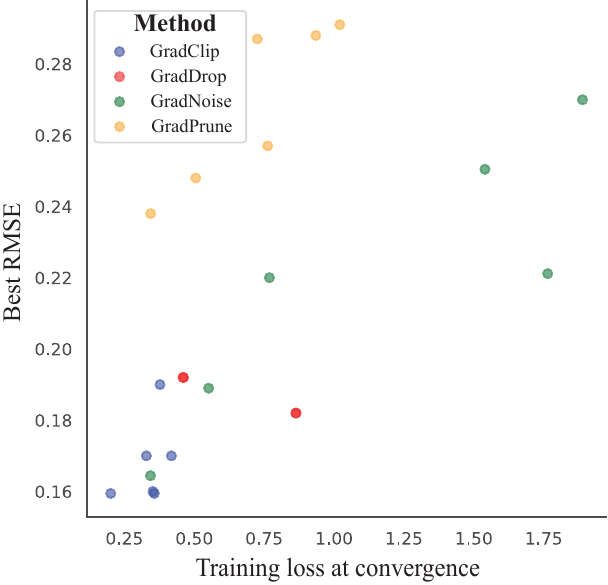}
    \caption{\small The relation between the strength of the defense and the utility loss, where we use the best RMSE and the final training loss of the original task respectively. Each dot represents a defense method with a different batch size and strength. For the same level of utility loss, gradient pruning has the best defending effect.\vspace{-1.5cm} } 
    \label{fig:evaluation} }
        
\end{wrapfigure}
Note that there are various of attacks to reconstruction input data and we cannot try all of them on defenses. However, within our framework of evaluation, newly proposed attacks can be added in the experiment and will improve the strength of defenses $\cS_D$. 
Therefore, we call for more empirical analysis on top of our results with more attack methods, in order to produce a more accurate evaluation.
\newcommand{\best}[1]{\textcolor{green}{\textbf{#1}}}
   % 57  % \captionsetup[table]{font=small}

% \begin{table}
%     \centering
%     \begin{tabular}{c|c|c}
%     \toprule
%         Batch size & \textbf{16} & \textbf{32} \\
%         \midrule
%         \citep{geiping2020inverting} & 0.41(0.09) & 0.45(0.11) \\
%         \midrule
%         \cite{jeon2021gradient} & 0.17(0.12) & 0.24(0.13)\\
%         \midrule
%         Ours & \textbf{0.14(0.11)} & \textbf{0.15(0.11)} \\
%         \bottomrule
%     \end{tabular}
%     \caption{Caption}
%     \label{tab:sota}
% \end{table}

%\vspace{5mm}
\section{Conclusion} 
In this paper, we systematically evaluate various defense methods, both theoretically and empirically. We established the algorithmic upper bound and matching information-theoretic lower bound of the reconstruction error for various defenses on two-layer random nets, thereby quantitatively analyzing the effect of defense strength, data dimension, and model width. To extend our theoretical insights to more general network architectures and explore the utility-privacy trade-off, we introduced a robust attack method capable of overcoming a broad spectrum of defense strategies in a setting with an honest but curious server. %To complement our theoretical findings to general networks and to investigate utility-privacy trade-off, we propose a stronger attack method that is robust to a wide range of defense methods in the setting of honest-but-curious server. 
The method enhances the conventional gradient inversion attack by combining it with the powerful feature reconstruction attack that achieves our algorithmic upper bound. 
We propose to evaluate each attack based on their impact on utility and effectiveness against the strongest corresponding attack. 
Our evaluation indicates gradient pruning is the strongest among our considered defenses and against our considered attacks, which updates some previous evaluation results purely based on gradient inversion attacks~\citep{huang2021evaluating}. Our work establishes a more comprehensive understanding of the data reconstruction problem in federated learning. We anticipate that our proposed framework will encourage further research and additional evaluations building on our findings.
%Our theoretical and empirical findings indicate that gradient pruning is the most effective defense method among many others, including gradient clipping, dropout, adding noise, local aggregation, etc. 
% {\red Zihan: need modify the conclusion (done) and also the checklist at the end of the paper.}

\section*{Acknowledgments}
This material is based upon work supported by the U.S. Department of Energy,
Office of Science Energy Earthshot Initiative as part of the project ``Learning reduced models under extreme data conditions for design and rapid decision-making in complex systems" under Award
\#DE-SC0024721.

%\newpage
\bibliographystyle{abbrvnat}
\bibliography{ref}

\begin{thebibliography}{54}
\providecommand{\natexlab}[1]{#1}
\providecommand{\url}[1]{\texttt{#1}}
\expandafter\ifx\csname urlstyle\endcsname\relax
  \providecommand{\doi}[1]{doi: #1}\else
  \providecommand{\doi}{doi: \begingroup \urlstyle{rm}\Url}\fi

\bibitem[Abadi et~al.(2016)Abadi, Chu, Goodfellow, McMahan, Mironov, Talwar, and Zhang]{abadi2016deep}
M.~Abadi, A.~Chu, I.~Goodfellow, H.~B. McMahan, I.~Mironov, K.~Talwar, and L.~Zhang.
\newblock Deep learning with differential privacy.
\newblock In \emph{Proceedings of the 2016 ACM SIGSAC conference on computer and communications security}, pages 308--318, 2016.

\bibitem[Aono et~al.(2017)Aono, Hayashi, Wang, Moriai, et~al.]{aono2017privacy}
Y.~Aono, T.~Hayashi, L.~Wang, S.~Moriai, et~al.
\newblock Privacy-preserving deep learning via additively homomorphic encryption.
\newblock \emph{IEEE Transactions on Information Forensics and Security}, 13\penalty0 (5):\penalty0 1333--1345, 2017.

\bibitem[Balunovic et~al.(2021)Balunovic, Dimitrov, Staab, and Vechev]{balunovic2021bayesian}
M.~Balunovic, D.~I. Dimitrov, R.~Staab, and M.~Vechev.
\newblock Bayesian framework for gradient leakage.
\newblock In \emph{International Conference on Learning Representations}, 2021.

\bibitem[Boenisch et~al.(2023)Boenisch, Dziedzic, Schuster, Shamsabadi, Shumailov, and Papernot]{boenisch2023curious}
F.~Boenisch, A.~Dziedzic, R.~Schuster, A.~S. Shamsabadi, I.~Shumailov, and N.~Papernot.
\newblock When the curious abandon honesty: Federated learning is not private.
\newblock In \emph{2023 IEEE 8th European Symposium on Security and Privacy (EuroS\&P)}, pages 175--199. IEEE, 2023.

\bibitem[Bonawitz et~al.(2016)Bonawitz, Ivanov, Kreuter, Marcedone, McMahan, Patel, Ramage, Segal, and Seth]{bonawitz2016practical}
K.~Bonawitz, V.~Ivanov, B.~Kreuter, A.~Marcedone, H.~B. McMahan, S.~Patel, D.~Ramage, A.~Segal, and K.~Seth.
\newblock Practical secure aggregation for federated learning on user-held data.
\newblock \emph{arXiv preprint arXiv:1611.04482}, 2016.

\bibitem[Bonawitz et~al.(2017)Bonawitz, Ivanov, Kreuter, Marcedone, McMahan, Patel, Ramage, Segal, and Seth]{bonawitz2017practical}
K.~Bonawitz, V.~Ivanov, B.~Kreuter, A.~Marcedone, H.~B. McMahan, S.~Patel, D.~Ramage, A.~Segal, and K.~Seth.
\newblock Practical secure aggregation for privacy-preserving machine learning.
\newblock In \emph{proceedings of the 2017 ACM SIGSAC Conference on Computer and Communications Security}, pages 1175--1191, 2017.

\bibitem[Cram{\'e}r(1999)]{cramer1999mathematical}
H.~Cram{\'e}r.
\newblock \emph{Mathematical methods of statistics}, volume~26.
\newblock Princeton university press, 1999.

\bibitem[Dwork(2006)]{dwork2006differential}
C.~Dwork.
\newblock Differential privacy.
\newblock In \emph{Automata, Languages and Programming: 33rd International Colloquium, ICALP 2006, Venice, Italy, July 10-14, 2006, Proceedings, Part II 33}, pages 1--12. Springer, 2006.

\bibitem[Fowl et~al.(2021)Fowl, Geiping, Czaja, Goldblum, and Goldstein]{fowl2021robbing}
L.~Fowl, J.~Geiping, W.~Czaja, M.~Goldblum, and T.~Goldstein.
\newblock Robbing the fed: Directly obtaining private data in federated learning with modified models.
\newblock \emph{arXiv preprint arXiv:2110.13057}, 2021.

\bibitem[Geiping et~al.(2020)Geiping, Bauermeister, Dr{\"o}ge, and Moeller]{geiping2020inverting}
J.~Geiping, H.~Bauermeister, H.~Dr{\"o}ge, and M.~Moeller.
\newblock Inverting gradients-how easy is it to break privacy in federated learning?
\newblock \emph{Advances in Neural Information Processing Systems}, 33:\penalty0 16937--16947, 2020.

\bibitem[Geyer et~al.(2017)Geyer, Klein, and Nabi]{geyer2017differentially}
R.~C. Geyer, T.~Klein, and M.~Nabi.
\newblock Differentially private federated learning: A client level perspective.
\newblock \emph{arXiv preprint arXiv:1712.07557}, 2017.

\bibitem[Guo et~al.(2022)Guo, Karrer, Chaudhuri, and van~der Maaten]{guo2022bounding}
C.~Guo, B.~Karrer, K.~Chaudhuri, and L.~van~der Maaten.
\newblock Bounding training data reconstruction in private (deep) learning.
\newblock In \emph{International Conference on Machine Learning}, pages 8056--8071. PMLR, 2022.

\bibitem[Haim et~al.(2022)Haim, Vardi, Yehudai, Shamir, and Irani]{haim2022reconstructing}
N.~Haim, G.~Vardi, G.~Yehudai, O.~Shamir, and M.~Irani.
\newblock Reconstructing training data from trained neural networks.
\newblock \emph{arXiv preprint arXiv:2206.07758}, 2022.

\bibitem[Hayes et~al.(2024)Hayes, Balle, and Mahloujifar]{hayes2024bounding}
J.~Hayes, B.~Balle, and S.~Mahloujifar.
\newblock Bounding training data reconstruction in dp-sgd.
\newblock \emph{Advances in Neural Information Processing Systems}, 36, 2024.

\bibitem[He et~al.(2016)He, Zhang, Ren, and Sun]{he2016deep}
K.~He, X.~Zhang, S.~Ren, and J.~Sun.
\newblock Deep residual learning for image recognition.
\newblock In \emph{Proceedings of the IEEE conference on computer vision and pattern recognition}, pages 770--778, 2016.

\bibitem[Hinton et~al.(2012)Hinton, Srivastava, Krizhevsky, Sutskever, and Salakhutdinov]{hinton2012improving}
G.~E. Hinton, N.~Srivastava, A.~Krizhevsky, I.~Sutskever, and R.~R. Salakhutdinov.
\newblock Improving neural networks by preventing co-adaptation of feature detectors.
\newblock \emph{arXiv preprint arXiv:1207.0580}, 2012.

\bibitem[Huang et~al.(2021)Huang, Gupta, Song, Li, and Arora]{huang2021evaluating}
Y.~Huang, S.~Gupta, Z.~Song, K.~Li, and S.~Arora.
\newblock Evaluating gradient inversion attacks and defenses in federated learning.
\newblock \emph{Advances in Neural Information Processing Systems}, 34:\penalty0 7232--7241, 2021.

\bibitem[Jeon et~al.(2021)Jeon, Lee, Oh, Ok, et~al.]{jeon2021gradient}
J.~Jeon, K.~Lee, S.~Oh, J.~Ok, et~al.
\newblock Gradient inversion with generative image prior.
\newblock \emph{Advances in neural information processing systems}, 34:\penalty0 29898--29908, 2021.

\bibitem[Kariyappa et~al.(2023)Kariyappa, Guo, Maeng, Xiong, Suh, Qureshi, and Lee]{kariyappa2023cocktail}
S.~Kariyappa, C.~Guo, K.~Maeng, W.~Xiong, G.~E. Suh, M.~K. Qureshi, and H.-H.~S. Lee.
\newblock Cocktail party attack: Breaking aggregation-based privacy in federated learning using independent component analysis.
\newblock In \emph{International Conference on Machine Learning}, pages 15884--15899. PMLR, 2023.

\bibitem[Kingma and Ba(2014)]{kingma2014adam}
D.~P. Kingma and J.~Ba.
\newblock Adam: A method for stochastic optimization.
\newblock \emph{ICLR 2015}, 2014.

\bibitem[Konečný et~al.(2016)Konečný, McMahan, Yu, Richtarik, Suresh, and Bacon]{konečný2016federated}
J.~Konečný, H.~B. McMahan, F.~X. Yu, P.~Richtarik, A.~T. Suresh, and D.~Bacon.
\newblock Federated learning: Strategies for improving communication efficiency.
\newblock In \emph{NIPS Workshop on Private Multi-Party Machine Learning}, 2016.
\newblock URL \url{https://arxiv.org/abs/1610.05492}.

\bibitem[Krizhevsky et~al.(2009)Krizhevsky, Hinton, et~al.]{krizhevsky2009learning}
A.~Krizhevsky, G.~Hinton, et~al.
\newblock Learning multiple layers of features from tiny images.
\newblock 2009.

\bibitem[Kuleshov et~al.(2015)Kuleshov, Chaganty, and Liang]{kuleshov2015tensor}
V.~Kuleshov, A.~Chaganty, and P.~Liang.
\newblock Tensor factorization via matrix factorization.
\newblock In \emph{Artificial Intelligence and Statistics}, pages 507--516. PMLR, 2015.

\bibitem[Li et~al.(2023)Li, Guo, Fan, Xu, and Song]{li2023multi}
H.~Li, D.~Guo, W.~Fan, M.~Xu, and Y.~Song.
\newblock Multi-step jailbreaking privacy attacks on chatgpt.
\newblock \emph{arXiv preprint arXiv:2304.05197}, 2023.

\bibitem[Li et~al.(2022)Li, Zhang, Liu, and Liu]{li2022auditing}
Z.~Li, J.~Zhang, L.~Liu, and J.~Liu.
\newblock Auditing privacy defenses in federated learning via generative gradient leakage.
\newblock In \emph{Proceedings of the IEEE/CVF Conference on Computer Vision and Pattern Recognition}, pages 10132--10142, 2022.

\bibitem[Mamis(2022)]{mamis2022extension}
K.~Mamis.
\newblock Extension of stein’s lemma derived by using an integration by differentiation technique.
\newblock \emph{Examples and Counterexamples}, 2:\penalty0 100077, 2022.

\bibitem[McMahan et~al.(2017)McMahan, Moore, Ramage, Hampson, and y~Arcas]{mcmahan2017communication}
B.~McMahan, E.~Moore, D.~Ramage, S.~Hampson, and B.~A. y~Arcas.
\newblock Communication-efficient learning of deep networks from decentralized data.
\newblock In \emph{Artificial intelligence and statistics}, pages 1273--1282. PMLR, 2017.

\bibitem[Mei et~al.(2018)Mei, Montanari, and Nguyen]{mei2018mean}
S.~Mei, A.~Montanari, and P.-M. Nguyen.
\newblock A mean field view of the landscape of two-layer neural networks.
\newblock \emph{Proceedings of the National Academy of Sciences}, 115\penalty0 (33):\penalty0 E7665--E7671, 2018.

\bibitem[Papernot et~al.(2016)Papernot, McDaniel, Sinha, and Wellman]{papernot2016towards}
N.~Papernot, P.~McDaniel, A.~Sinha, and M.~Wellman.
\newblock Towards the science of security and privacy in machine learning.
\newblock \emph{arXiv preprint arXiv:1611.03814}, 2016.

\bibitem[Rao(1992)]{rao1992information}
C.~R. Rao.
\newblock Information and the accuracy attainable in the estimation of statistical parameters.
\newblock In \emph{Breakthroughs in Statistics: Foundations and basic theory}, pages 235--247. Springer, 1992.

\bibitem[Ronneberger et~al.(2015)Ronneberger, Fischer, and Brox]{ronneberger2015u}
O.~Ronneberger, P.~Fischer, and T.~Brox.
\newblock U-net: Convolutional networks for biomedical image segmentation.
\newblock In \emph{Medical Image Computing and Computer-Assisted Intervention--MICCAI 2015: 18th International Conference, Munich, Germany, October 5-9, 2015, Proceedings, Part III 18}, pages 234--241. Springer, 2015.

\bibitem[Salimans and Kingma(2016)]{salimans2016weight}
T.~Salimans and D.~P. Kingma.
\newblock Weight normalization: A simple reparameterization to accelerate training of deep neural networks.
\newblock \emph{Advances in neural information processing systems}, 29, 2016.

\bibitem[Salimans et~al.(2017)Salimans, Karpathy, Chen, and Kingma]{salimans2017pixelcnn++}
T.~Salimans, A.~Karpathy, X.~Chen, and D.~P. Kingma.
\newblock Pixelcnn++: Improving the pixelcnn with discretized logistic mixture likelihood and other modifications.
\newblock \emph{arXiv preprint arXiv:1701.05517}, 2017.

\bibitem[Shokri and Shmatikov(2015)]{shokri2015privacy}
R.~Shokri and V.~Shmatikov.
\newblock Privacy-preserving deep learning.
\newblock In \emph{Proceedings of the 22nd ACM SIGSAC conference on computer and communications security}, pages 1310--1321, 2015.

\bibitem[Song et~al.(2013)Song, Chaudhuri, and Sarwate]{song2013stochastic}
S.~Song, K.~Chaudhuri, and A.~D. Sarwate.
\newblock Stochastic gradient descent with differentially private updates.
\newblock In \emph{2013 IEEE global conference on signal and information processing}, pages 245--248. IEEE, 2013.

\bibitem[Srivastava et~al.(2014)Srivastava, Hinton, Krizhevsky, Sutskever, and Salakhutdinov]{srivastava2014dropout}
N.~Srivastava, G.~Hinton, A.~Krizhevsky, I.~Sutskever, and R.~Salakhutdinov.
\newblock Dropout: a simple way to prevent neural networks from overfitting.
\newblock \emph{The journal of machine learning research}, 15\penalty0 (1):\penalty0 1929--1958, 2014.

\bibitem[Stein(1981)]{stein1981estimation}
C.~M. Stein.
\newblock Estimation of the mean of a multivariate normal distribution.
\newblock \emph{The annals of Statistics}, pages 1135--1151, 1981.

\bibitem[Stock et~al.(2022)Stock, Shilov, Mironov, and Sablayrolles]{stock2022defending}
P.~Stock, I.~Shilov, I.~Mironov, and A.~Sablayrolles.
\newblock Defending against reconstruction attacks with r$\backslash$'enyi differential privacy.
\newblock \emph{arXiv preprint arXiv:2202.07623}, 2022.

\bibitem[Sun et~al.(2017)Sun, Ren, Ma, and Wang]{sun2017meprop}
X.~Sun, X.~Ren, S.~Ma, and H.~Wang.
\newblock meprop: Sparsified back propagation for accelerated deep learning with reduced overfitting.
\newblock In \emph{International Conference on Machine Learning}, pages 3299--3308. PMLR, 2017.

\bibitem[Ulyanov et~al.(2018)Ulyanov, Vedaldi, and Lempitsky]{ulyanov2018deep}
D.~Ulyanov, A.~Vedaldi, and V.~Lempitsky.
\newblock Deep image prior.
\newblock In \emph{Proceedings of the IEEE conference on computer vision and pattern recognition}, pages 9446--9454, 2018.

\bibitem[Wang et~al.(2020)Wang, Deng, Guo, Wang, Meng, Liu, Ding, and Rajasekaran]{wang2020sapag}
Y.~Wang, J.~Deng, D.~Guo, C.~Wang, X.~Meng, H.~Liu, C.~Ding, and S.~Rajasekaran.
\newblock Sapag: A self-adaptive privacy attack from gradients.
\newblock \emph{arXiv preprint arXiv:2009.06228}, 2020.

\bibitem[Wang et~al.(2023)Wang, Lee, and Lei]{wang2023reconstructing}
Z.~Wang, J.~Lee, and Q.~Lei.
\newblock Reconstructing training data from model gradient, provably.
\newblock In \emph{International Conference on Artificial Intelligence and Statistics}, pages 6595--6612. PMLR, 2023.

\bibitem[Wei et~al.(2020{\natexlab{a}})Wei, Li, Ding, Ma, Yang, Farokhi, Jin, Quek, and Poor]{wei2020federated}
K.~Wei, J.~Li, M.~Ding, C.~Ma, H.~H. Yang, F.~Farokhi, S.~Jin, T.~Q. Quek, and H.~V. Poor.
\newblock Federated learning with differential privacy: Algorithms and performance analysis.
\newblock \emph{IEEE Transactions on Information Forensics and Security}, 15:\penalty0 3454--3469, 2020{\natexlab{a}}.

\bibitem[Wei et~al.(2020{\natexlab{b}})Wei, Liu, Loper, Chow, Gursoy, Truex, and Wu]{wei2020framework}
W.~Wei, L.~Liu, M.~Loper, K.-H. Chow, M.~E. Gursoy, S.~Truex, and Y.~Wu.
\newblock A framework for evaluating gradient leakage attacks in federated learning.
\newblock \emph{arXiv preprint arXiv:2004.10397}, 2020{\natexlab{b}}.

\bibitem[Wen et~al.(2022)Wen, Geiping, Fowl, Goldblum, and Goldstein]{wen2022fishing}
Y.~Wen, J.~Geiping, L.~Fowl, M.~Goldblum, and T.~Goldstein.
\newblock Fishing for user data in large-batch federated learning via gradient magnification.
\newblock \emph{arXiv preprint arXiv:2202.00580}, 2022.

\bibitem[Wu and He(2018)]{wu2018group}
Y.~Wu and K.~He.
\newblock Group normalization.
\newblock In \emph{Proceedings of the European conference on computer vision (ECCV)}, pages 3--19, 2018.

\bibitem[Xue et~al.(2023)Xue, Yang, Ge, Li, Xu, and Li]{xue2023fast}
D.~Xue, H.~Yang, M.~Ge, J.~Li, G.~Xu, and H.~Li.
\newblock Fast generation-based gradient leakage attacks against highly compressed gradients.
\newblock In \emph{IEEE INFOCOM 2023-IEEE Conference on Computer Communications}, pages 1--10. IEEE, 2023.

\bibitem[Ye et~al.(2020)Ye, Dai, Luo, Guo, Qi, Yang, and Chen]{ye2020accelerating}
X.~Ye, P.~Dai, J.~Luo, X.~Guo, Y.~Qi, J.~Yang, and Y.~Chen.
\newblock Accelerating cnn training by pruning activation gradients.
\newblock In \emph{Computer Vision--ECCV 2020: 16th European Conference, Glasgow, UK, August 23--28, 2020, Proceedings, Part XXV 16}, pages 322--338. Springer, 2020.

\bibitem[Yin et~al.(2021)Yin, Mallya, Vahdat, Alvarez, Kautz, and Molchanov]{yin2021see}
H.~Yin, A.~Mallya, A.~Vahdat, J.~M. Alvarez, J.~Kautz, and P.~Molchanov.
\newblock See through gradients: Image batch recovery via gradinversion.
\newblock In \emph{Proceedings of the IEEE/CVF Conference on Computer Vision and Pattern Recognition}, pages 16337--16346, 2021.

\bibitem[Zagoruyko and Komodakis(2016)]{zagoruyko2016wide}
S.~Zagoruyko and N.~Komodakis.
\newblock Wide residual networks.
\newblock \emph{arXiv preprint arXiv:1605.07146}, 2016.

\bibitem[Zhang et~al.(2017)Zhang, Cisse, Dauphin, and Lopez-Paz]{zhang2017mixup}
H.~Zhang, M.~Cisse, Y.~N. Dauphin, and D.~Lopez-Paz.
\newblock mixup: Beyond empirical risk minimization.
\newblock \emph{arXiv preprint arXiv:1710.09412}, 2017.

\bibitem[Zhang et~al.(2018)Zhang, Isola, Efros, Shechtman, and Wang]{zhang2018unreasonable}
R.~Zhang, P.~Isola, A.~A. Efros, E.~Shechtman, and O.~Wang.
\newblock The unreasonable effectiveness of deep features as a perceptual metric.
\newblock In \emph{Proceedings of the IEEE conference on computer vision and pattern recognition}, pages 586--595, 2018.

\bibitem[Zhong et~al.(2017)Zhong, Song, Jain, Bartlett, and Dhillon]{zhong2017recovery}
K.~Zhong, Z.~Song, P.~Jain, P.~L. Bartlett, and I.~S. Dhillon.
\newblock Recovery guarantees for one-hidden-layer neural networks.
\newblock In \emph{International conference on machine learning}, pages 4140--4149. PMLR, 2017.

\bibitem[Zhu et~al.(2019)Zhu, Liu, and Han]{zhu2019deep}
L.~Zhu, Z.~Liu, and S.~Han.
\newblock Deep leakage from gradients.
\newblock \emph{Advances in neural information processing systems}, 32, 2019.

\end{thebibliography}
%%%%%%%%%%%%%%%%%%%%%%%%%%%%%%%%%%%%%%%%%%%%%%%%%%%%%%%%%%%%

\appendix

\appendix
\section{Analysis of Differential Privacy}
Differential privacy (DP) \citep{dwork2006differential} is a measure of privacy mainly in membership inference attack. A random algorithm $\cM: \cD\rightarrow \mathcal{R}$ satisfies $(\epsilon,\delta)$-DP if for any $\mathcal{Q}\subset \mathcal{R}$ and any adjacent inputs $S,S'\in \cD$ holds that
$$
\mathbb{P}(\cM(S)\in \mathcal{Q})\le e^\epsilon\mathbb{P}(\cM(S')\in \mathcal{Q})+\delta.
$$
Here adjacent inputs means only one of the samples is different.

In reconstruction attack, DP is too strong that an algorithm may prevent data leakage without DP guarantees. Moreover, the price to guarantee DP is high. A popular method to guarantee DP is DP-SGD \citep{abadi2016deep}. However, it requires extremely large Gaussian noise.
\begin{proposition}[Full version of Proposition \ref{prop:DP guarantee short}]
\label{prop:DP guarantee}
     We define a two-layer neural network $f(\bx;\Theta)=\sum_{j=1}^ma_j\sigma(\bw_j^\top \bx)$ with input dimension $d$, random Gaussian weights $a_j\sim\cN(0,\frac{1}{m^2})$ and $\bw_j\sim\cN(0,I_d)$ and activation function $\sigma$ is 1-Lipschitz. The loss function $\ell$ is square loss, the data $\|\bx\|=1$ and the label $y\in \{\pm 1\}$. We denote the gradient of $\ell$ by $G$. The randomized mechanism $\cM=G+\cN(0,\sigma^2I)$ is $(\epsilon, \delta)-$ DP for any $\epsilon,\delta>0$ if $\sigma^2=\Omega(\frac{m\log(1/\delta)}{\epsilon})$ with high probability with respect to random weights.
\end{proposition}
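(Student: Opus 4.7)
My approach reduces the claim to the classical Gaussian mechanism: a deterministic function with $\ell_2$-sensitivity $\Delta$ becomes $(\epsilon,\delta)$-DP under additive noise $\cN(0,\sigma^2 I)$ whenever $\sigma \gtrsim \Delta\sqrt{\log(1/\delta)}/\epsilon$. Thus it suffices to show that, with high probability over the random draws of $\{a_j,\bw_j\}$, the gradient $G(\bx,y) = \nabla_\Theta\ell(\Theta;\bx,y)$ has $\ell_2$-sensitivity $\Delta = O(\sqrt{m})$ when one data point on the unit sphere is swapped for another. Plugging this sensitivity into the mechanism threshold yields the stated condition $\sigma^2 = \Omega(m\log(1/\delta)/\epsilon)$ (up to absorbing constants or a factor of $\epsilon$, depending on the exact Gaussian-mechanism constant invoked).

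The core computation is to bound $\|G(\bx,y)\|$ pointwise. The gradient decomposes into the two blocks $\partial_{a_j}\ell = (f(\bx)-y)\sigma(\bw_j^\top\bx)$ and $\partial_{\bw_j}\ell = (f(\bx)-y)\,a_j\sigma'(\bw_j^\top\bx)\,\bx$. I would first show $|f(\bx)-y| = O(1)$ with probability $1-e^{-\Omega(m)}$: assuming $\sigma(0)=0$ so that $|\sigma(\bw_j^\top\bx)|\le|\bw_j^\top\bx|$, Cauchy-Schwarz gives
\[|f(\bx)| \le \sqrt{\textstyle\sum_j a_j^2}\,\sqrt{\textstyle\sum_j\sigma(\bw_j^\top\bx)^2} = O(1/\sqrt{m})\cdot O(\sqrt{m}) = O(1),\]
by standard $\chi^2$ concentration for $\sum_j a_j^2$ (since $a_j\sim\cN(0,1/m^2)$) and for $\sum_j(\bw_j^\top\bx)^2$ (each term is $\chi^2_1$). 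Given this, $\sum_j (\partial_{a_j}\ell)^2 = (f(\bx)-y)^2\sum_j\sigma(\bw_j^\top\bx)^2 = O(m)$ while $\sum_j\|\partial_{\bw_j}\ell\|^2 = (f(\bx)-y)^2\sum_j a_j^2\sigma'(\bw_j^\top\bx)^2 = O(1/m)$; the first block dominates, giving $\|G(\bx,y)\| = O(\sqrt{m})$ pointwise.

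To turn this into a sensitivity bound I need uniform control over $\bx\in S^{d-1}$ and $y\in\{\pm 1\}$. The map $\bx\mapsto G(\bx,y)$ is locally Lipschitz with constant controlled by $\max_j\|\bw_j\|^2 = O(d+\log m)$ via Gaussian tails on the $\bw_j$. Combining an $\eta$-net on $S^{d-1}$ of cardinality $e^{O(d)}$ for $\eta = 1/\poly(m)$ with a union bound then lifts the pointwise estimate to $\sup_{\|\bx\|=1,y}\|G(\bx,y)\| = O(\sqrt{m})$ with probability $1 - e^{-\Omega(m)}$, provided $m \gtrsim d$ (or, absorbing a mild logarithmic factor, for general $m$). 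The sensitivity of swapping one sample is at most twice this supremum, so $\Delta = O(\sqrt{m})$, and the Gaussian-mechanism theorem concludes the proof.

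The main obstacle is the uniform-in-$\bx$ promotion of the pointwise norm bound: one has to verify that the Lipschitz constant of $G(\cdot,y)$ does not introduce a spurious factor of $d$ that swamps the $\sqrt{m}$ scale, and must balance the $e^{O(d)}$ net cardinality against the $e^{-\Omega(m)}$ concentration exponents so that the union bound remains informative. A secondary subtlety concerns the activation at the origin: if $\sigma(0)\neq 0$ the bound carries an additive term $|\sigma(0)|\sqrt{m}$ in the $a_j$-block, which is harmless but must be tracked. Finally, the precise power of $\epsilon$ appearing in $\sigma^2$ depends on which Gaussian-mechanism constant is invoked; I would verify whether the authors intend the standard $\epsilon^{-2}$ scaling or a tighter analysis that matches the stated $\epsilon^{-1}$.
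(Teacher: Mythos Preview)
Your plan matches the paper's proof in structure: both reduce to the Gaussian mechanism by bounding the $\ell_2$-sensitivity of $G$, and both obtain the bound by showing that each of the $m$ hidden units contributes $\tilde O(1)$ to the squared sensitivity, so that $\Delta^2 = \tilde O(m)$. Two differences are worth flagging. First, on your closing question: the paper does not invoke the classical Gaussian-mechanism theorem directly but instead computes the R\'enyi divergence of the Gaussian mechanism explicitly, $\alpha_{\cM}(\lambda)=\tfrac{\lambda(\lambda+1)}{2\sigma^2}\Delta^2$, and then optimizes over $\lambda$ via the tail-bound conversion of \citet{abadi2016deep} (their Theorem~2); this is how the stated $\epsilon^{-1}$ dependence arises. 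Second, the paper does not carry out a covering argument over $S^{d-1}$: it simply bounds $|\nabla_{a_j}\ell(\bx)-\nabla_{a_j}\ell(\bx')|^2$ and $\|\nabla_{\bw_j}\ell(\bx)-\nabla_{\bw_j}\ell(\bx')\|^2$ by $\tilde O(1)$ each and sums over $j$, asserting the high-probability event over the random weights without making the uniformity in $(\bx,\bx')$ explicit. Your proposed net argument is therefore a genuine addition in rigor rather than a missing ingredient relative to the paper's own proof.
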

In order to prove the proposition, we first define
$$\alpha_\cM(\lambda):=\max_{S,S'}\log\E_{z\sim\cM(S)}\left(\frac{p(\cM(S)=z)}{p(\cM(S')=z)}\right)^\lambda,$$
where $p$ is the density function. In DP-SGD setting, $\cM(S)=G(S)+\cM(0,\sigma^2I)$ so $\cM(S)\sim \cN(G(S),\sigma^2I)$ for $S\in \R^{d\times B}$. The following lemma is a key step in the proof of Proposition \ref{prop:DP guarantee}.
\begin{lemma}[Theorem 2 in \citep{abadi2016deep}]
\label{lemma:DP tail}
    For any $\epsilon>0$, the mechanism $\cM$ is $(\epsilon,\delta)$-DP for $\delta=\min_{\lambda}(\alpha_\cM(\lambda)-\lambda\epsilon)$.
\end{lemma}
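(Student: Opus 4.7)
The plan is to follow the classical moment accountant / Chernoff-tail argument. Fix an adjacent pair $S, S'$ and let $p_S$ and $p_{S'}$ denote the densities of $\cM(S)$ and $\cM(S')$. Define the privacy loss random variable
$$c(z) := \log\frac{p_S(z)}{p_{S'}(z)}.$$
The key observation is that $\alpha_\cM(\lambda)$ is, by definition, the (worst-case over adjacent pairs) log moment generating function of $c(z)$ under $z \sim \cM(S)$, so
$$\E_{z \sim \cM(S)}\!\left[e^{\lambda c(z)}\right] = \E_{z \sim \cM(S)}\!\left[\left(\tfrac{p_S(z)}{p_{S'}(z)}\right)^{\lambda}\right] \le e^{\alpha_\cM(\lambda)}.$$
Applying Markov's inequality to the non-negative random variable $e^{\lambda c}$ immediately gives the Chernoff tail bound
$$\mathbb{P}_{z \sim \cM(S)}\!\left[c(z) > \epsilon\right] = \mathbb{P}\!\left[e^{\lambda c} > e^{\lambda \epsilon}\right] \le e^{-\lambda \epsilon}\,\E\!\left[e^{\lambda c}\right] \le \exp(\alpha_\cM(\lambda) - \lambda \epsilon).$$

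With this tail bound in hand, I would decompose the DP criterion at the cutoff level $\epsilon$. Let $B = \{z : c(z) \le \epsilon\}$ be the ``good'' event. Then for any measurable $\mathcal{Q}$,
$$\mathbb{P}(\cM(S) \in \mathcal{Q}) = \mathbb{P}(\cM(S) \in \mathcal{Q} \cap B) + \mathbb{P}(\cM(S) \in \mathcal{Q} \cap B^c).$$
On $B$ we have the pointwise bound $p_S(z) \le e^\epsilon p_{S'}(z)$, so a change of measure yields
$$\mathbb{P}(\cM(S) \in \mathcal{Q} \cap B) = \int_{\mathcal{Q} \cap B} p_S(z)\,dz \le e^\epsilon \int_{\mathcal{Q} \cap B} p_{S'}(z)\,dz \le e^\epsilon\,\mathbb{P}(\cM(S') \in \mathcal{Q}),$$
while the remainder is bounded by the Chernoff tail: $\mathbb{P}(\cM(S) \in \mathcal{Q} \cap B^c) \le \mathbb{P}(B^c) \le \exp(\alpha_\cM(\lambda) - \lambda \epsilon)$. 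Combining the two pieces, optimizing over $\lambda > 0$, and noting that $(S, S')$ was arbitrary while $\alpha_\cM$ already takes the worst case, yields the $(\epsilon, \delta)$-DP guarantee with $\delta = \min_{\lambda} \exp(\alpha_\cM(\lambda) - \lambda \epsilon)$.

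There is no genuine technical obstacle: the whole proof is a two-step Chernoff argument plus a pointwise change of measure. The only care needed is to recognize that $\alpha_\cM(\lambda)$ is exactly the right log-MGF so that Markov's inequality applies directly, and to handle the $B$ versus $B^c$ split so that the $e^\epsilon$ multiplicative factor (from the DP definition) and the $\delta$ additive term (from the tail) emerge separately rather than multiplicatively. I note that the statement as written in the excerpt, $\delta = \min_\lambda(\alpha_\cM(\lambda) - \lambda\epsilon)$, appears to omit an exponential: the downstream use in Proposition~\ref{prop:DP guarantee}, where one sets this expression equal to $\log \delta$ in order to derive $\sigma^2 = \Omega(m \log(1/\delta)/\epsilon)$, is consistent with the standard Abadi et al.\ form $\delta = \min_\lambda \exp(\alpha_\cM(\lambda) - \lambda\epsilon)$ that the argument above actually produces.
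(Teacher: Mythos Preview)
Your proof is correct and is precisely the standard moments-accountant argument from Abadi et al.\ (2016), which is what the paper cites; the paper itself does not supply a proof of this lemma but merely invokes it as Theorem~2 of that reference, so there is nothing to compare against beyond the original source. Your observation about the missing exponential is also correct and is confirmed by the paper's own downstream usage in the proof of Proposition~\ref{prop:DP guarantee}, where the bound is written as $\delta \ge \min_\lambda \exp(\alpha_\cM(\lambda) - \lambda\epsilon)$.
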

\begin{proof}[Proof of Proposition \ref{prop:DP guarantee}]
    We denote $\Delta:=\max_{\bx,\bx'\in \R^d}\|G(\bx)-G)(\bx')\|^2$. Then we compute $\alpha_\cM(\lambda)$ with adjacent $S,S'\in \R^{d\times B}$. We denote the only different element by $\bx$ and $\bx'$.
    \begin{align*}
        \alpha_\cM(\lambda)=&\max_{S,S'}\log\E_{Z\sim\cN(G(S),\sigma^2I)}\left[\left(\frac{p_S(Z)}{p_Y(Z)}\right)^\lambda\right]\\
        =&\max_{\bx,\bx'}\log\E_{\bz\sim\cN(G(\bx),I)}\exp\left(\frac{\lambda}{2\sigma^2}\sum_{i=1}^d\left(\left(G(\bx)_i-z_i\right)^2-\left(G(\bx')_i-z_i\right)^2\right)\right)\\
        =&\max_{\bx,\bx'}\log\frac{1}{\left(\sqrt{2\pi}\sigma\right)^d}\int \exp\left(\frac{1}{2\sigma^2}\sum_{i=1}^d\left((\lambda+1)\left(G(\bx)_i-z_i\right)^2-\lambda\left(G(\bx')_i-z_i\right)^2\right)\right)\\
        =&\max_{\bx,\bx'}\log\frac{1}{\left(\sqrt{2\pi}\sigma\right)^d}\int \exp\left(\frac{1}{2\sigma^2}\sum_{i=1}^d\left(\left(z_i-\left((\lambda+1)G(\bx)_i-\lambda G(\bx')_i\right)\right)^2-\lambda(\lambda+1)\left(G(\bx)_i-G(\bx')_i\right)^2\right)\right)\\
        =&\max_{\bx,\bx'}\log\exp\left(\frac{\lambda(\lambda+1)}{2\sigma^2}\sum_{i=1}^d\left(G(\bx)_i-G(\bx')_i\right)^2\right)\\
        =&\max_{\bx,\bx'}\frac{\lambda(\lambda+1)}{2\sigma^2}\left\|G(\bx)-G(\bx')\right\|^2=\frac{\lambda(\lambda+1)}{w\sigma^2}\Delta.
    \end{align*}
    By Lemma \ref{lemma:DP tail}, for any 
    \begin{align*}
        \delta\ge&\min_{\lambda}\exp\left(\frac{\lambda(\lambda+1)}{2\sigma^2}\Delta-\lambda\epsilon\right)\\
        &=\exp\left(-\frac{\Delta}{2\sigma^2}\left(\frac{\sigma^2\epsilon}{\Delta}-\frac{1}{2}\right)^2\right),
    \end{align*}
    $\cM$ is $(\epsilon,\delta)$-DP. Then for $\sigma^2\ge\Omega(\frac{\Delta\log(1/\delta)}{\epsilon})$, $\cM$ is $(\epsilon,\delta)$-DP.

    Now we only need to bound $\Delta$. We denote $r=y-f(\bx)$ and $r'=y'-f(\bx')$. We consider $\nabla_{a_j}$ and $\nabla_{\bw_j}$ separately. For $\nabla_{a_j}$, we have
    \begin{align*}
        \left|\nabla_{a_j}\ell(S)-\nabla_{a_j}\ell(S')\right|^2=&\left|\nabla_{a_j}\ell(\bx)-\nabla_{a_j}\ell(\bx')\right|^2\\
        =&\left|r\sigma(\bw_j^\top\bx)-r'\sigma(\bw_j^\top\bx)\right|^2\\
        \lesssim&r^2\left|\sigma(\bw_j^\top\bx)\right|^2+r'^2\left|\sigma(\bw_j^\top\bx')\right|^2\\
        =&\tilde{O}(1)
    \end{align*}
    with high probability for any $j$.
    For $\nabla_{\bw_j}$, we have
    \begin{align*}
        \left\|\nabla_{\bw_j}\ell(S)-\nabla_{\bw_j}\ell(S')\right\|^2=&\left\|\nabla_{\bw_j}\ell(\bx)-\nabla_{\bw_j}\ell(\bx')\right\|^2\\
        =&\left\|r\sigma'(\bw_j^\top\bx)\bx-r'\sigma'(\bw_j^\top\bx')\bx'\right\|^2\\
        \lesssim&r^2\left\|\bx\right\|^2+r'^2\left\|\bx'\right\|^2\\
        =&\tilde(O)(1)
    \end{align*}
    with high probability for any $j$.
    Then 
    \begin{align*}
        \Delta&=\max_{S,S'}\left\|G(S)-G(S')\right\|^2\\
        &=\max_{S,S'}\sum_{j=1}^m\left(\left|\nabla_{a_j}\ell(S)-\nabla_{a_j}\ell(S')\right|^2+\left\|\nabla_{\bw_j}\ell(S)-\nabla_{\bw_j}\ell(S')\right\|^2\right)\\
        &\le \Tilde{O}(m)
    \end{align*}
    with high probability. 
    Therefore, for $\sigma^2\ge \Tilde{\Omega}(\frac{m\log(1/\delta)}{\epsilon})$, $\cM$ is $(\epsilon,\delta)$-DP with high probability.
\end{proof}

\section{Analysis of Reconstruction Upper Bound}
\subsection{Tensor Method}
Various of noisy tensor decomposition can be used in tensor based attack \citep{wang2023reconstructing}. We select the method proposed by \citep{zhong2017recovery} in our theoretical analysis since it can provably achieve a relatively small error with least assumptions.
\begin{assumption}
\label{assump:2layer}
We make the following assumptions:
    \begin{itemize}
	\item \textbf{Data:} Let data matrix $S:=[\bx_1,\cdots,\bx_B]\in\mathbb{R}^{d\times B}$, we denote the $B$-th singular value by $\pi_{\min}>0$. Training samples are normalized: $\|\bx_i\|=1,\forall i \in [B]$.
	\item \textbf{Activation:} $\sigma$ is 1-Lipschitz and $\E_{z\sim\cN(0,1)}[\sigma''(z)]<\infty$. Let $$k_2=\min\{k\ge 2:|\E_{z\sim \cN(0,1)}[\sigma^{(k)}(z)]|\neq 0\}$$ and $$k_3=\min\{k\ge 3:|\E_{z\sim \cN(0,1)}[\sigma^{(k)}(z)]|\neq 0\}.$$ Then $\nu= |\E_{z\sim \cN(0,1)}[\sigma^{(k_2)}(z)]|$ and $\lambda=| \E_{z\sim \cN(0,1)}[\sigma^{(k_3)}(z)]|$ are not zero. We assume $k_2\le 3$ and $k_3\le 4.$.
\end{itemize}
\end{assumption}
We first introduce the method in \citep{zhong2017recovery} when the activation function satisfies $k_2=2$ and $k_3=3.$
In addition to $\hat{\bT}=\sum_{j=1}^m g(\bw_j)H_3(\bw_j)$, we also need the matrix $\hat{P}=\sum_{j=1}^m g(\bw_j)H_2(\bw_j)$ in the tensor decomposition, where $H_p$ is the $p$-th Hermite function. We first conduct the power method to $\hat{P}$ to estimate the orthogonal span $U$ of training samples $\bx_1,\dots,\bx_B$ and denote it by $V$, where $V\in \mathbb{R}^{B\times d}$ is an orthogonal matrix. Then we conduct tensor decomposition with $\hat{\bT}(V,V,V)$ instead of $T$ and have the estimation of $\{V^\top \bx_i\}_{i=1}^B$. By multiplying column orthogonal matrix $V$, we can reconstruct training data. The advantage of this method is that the dimension of $\hat{\bT}(V,V,V)$ is $B<d$. Then the error will depend on $B$ instead of $d$.

For activation functions such that $k_2=3$, we define $\hat{P}=\sum_{j=1}^m g(\bw_j)H_3(\bw_j)(I,I,\ba)$ and for activation functions such that $k_3=4$, we define $\hat{\bT}=\sum_{j=1}^m g(\bw_j)H_4(\bw_j)(I,I,I,\ba)$, where $\ba$ is any unit vector. In the proofs below, we only consider the case $k_2=2$ and $k_3=3$ while the proofs of other settings are similar.

Note that with this method only, we cannot identify the norm of recovered samples without the assumption of $\|\bx_i\|=1$ for all $i$. However, in our stronger reconstruction attack, the feature matching term is the cosine similarity between the reconstructed feature and the dummy feature, where knowing the norm is not necessary. Thus, our assumption on $\|bx_i\|=1$ is reasonable and can simplify the method and the proof.

\subsection{Error Bound}
For the noisy tensor decomposition introduced above, we have the following error bound:
\begin{theorem}[Adapted from the proof of Theorem 5.6 in \citep{zhong2017recovery}.]
\label{prop:tensorerror}
    Consider matrix $\hat{P}=P+S$ and tensor $\hat{\bT}=\bT+\bE$ with rank-$B$ decomposition
    $$P=\sum_{i=1}^B\nu_i\bx_i\bx_i^\top,\ \bT=\sum_{i=1}^B\lambda_i\bx_i^{\otimes 3},$$
    where $\bx_i\in \mathbb{R}^d$ satisfying Assumption \ref{assump:2layer}. Let $V$ be the output of Algorithm 3 in \citep{zhong2017recovery} with input $P$ and $\{s_i\bu_i\}_{i=1}^B$ be the output of Algorithm 1 in \citep{kuleshov2015tensor} with input $\bT(V,V,V)$, where $\{s_i\}$ are unknown signs. Suppose the perturbations satisfy
    $$\|S\|\le\mu\le O(\nu_{\min}\pi_{\min}),\ \|\bE(V,V,V)\|\le\gamma.$$
    Let $N=\Theta(\log\frac{1}{\epsilon})$ be the iteration numbers of Algorithm 3 in \citep{zhong2017recovery}, where $\epsilon=\frac{\mu}{\nu_{\min}}$.
    Then with high probability, we have
    $$\left\|\bx_i-s_i V \bu_i\right\|\le \Tilde{O}(\frac{\mu}{\nu_{\min}\pi_{\min}})+\Tilde{O}(\frac{\kappa\gamma\sqrt{B}}{\lambda_{\min}\pi_{\min}^2}),$$
    where $\kappa=\frac{\lambda_{\max}}{\lambda_{\min}}.$
\end{theorem}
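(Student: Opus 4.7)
The plan is to follow the two-stage analysis of Theorem 5.6 in Zhong et al.\ (2017), separating the error into a subspace-recovery piece (from $\hat P$) and a within-subspace tensor-decomposition piece (from $\hat{\bT}(V,V,V)$). By the triangle inequality, $\|\bx_i - s_i V\bu_i\| \le \|(I - VV^\top)\bx_i\| + \|VV^\top \bx_i - s_i V\bu_i\|$, and the two summands in the stated bound should come exactly from these two contributions, with $U := \mathrm{span}(\bx_1,\ldots,\bx_B)$ the true signal subspace.

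For the subspace stage, I would write $P = XDX^\top$ with $X = [\bx_1,\ldots,\bx_B]$ and $D = \mathrm{diag}(\nu_i)$, so the nonzero eigenvalues of $P$ lie in $[\nu_{\min}\pi_{\min}^2, \nu_{\max}]$ while the $(B{+}1)$-st eigenvalue vanishes. A Davis--Kahan / Wedin perturbation argument applied to $\hat P = P + S$, together with the assumption $\mu \le O(\nu_{\min}\pi_{\min})$ that preserves the spectral gap, gives $\|VV^\top - UU^\top\|_{\mathrm{op}} = \tilde{O}(\mu/(\nu_{\min}\pi_{\min}^2))$ after $N = \Theta(\log(1/\epsilon))$ iterations of the power method of Algorithm 3; the logarithmic factor is absorbed into $\tilde O(\cdot)$. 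Using that each $\bx_i$ is unit norm and lies in $U$, and absorbing one factor of $\pi_{\min}$ through the Gram structure of $X$, this yields $\|(I-VV^\top)\bx_i\| \le \tilde{O}(\mu/(\nu_{\min}\pi_{\min}))$, the first summand.

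For the tensor stage, I project to obtain $\hat{\bT}(V,V,V) = \sum_{i=1}^B \lambda_i (V^\top \bx_i)^{\otimes 3} + \bE(V,V,V) + \Delta$, where $\Delta$ collects the cubic-in-$V$ perturbation caused by $V$ only approximately spanning $U$. A multi-linear expansion of $\Delta$ shows $\|\Delta\|_{\mathrm{op}}$ is dominated by the first-stage error and is absorbed into the first summand, leaving an effective noise of order $\gamma$ acting on a $B$-dimensional rank-$B$ decomposition problem whose components $V^\top \bx_i$ form a matrix with smallest singular value close to $\pi_{\min}$. Feeding this into the robust decomposition guarantee of Kuleshov et al.\ (2015), whose per-component $\ell_2$-error scales like $\kappa \cdot \text{(operator-norm noise)} \cdot \sqrt{B}/(\lambda_{\min}\pi_{\min}^2)$, produces the second summand $\tilde{O}(\kappa\gamma\sqrt{B}/(\lambda_{\min}\pi_{\min}^2))$; the $\sqrt{B}$ factor arises from converting the operator-norm tensor bound into a per-component $\ell_2$ recovery bound in the $B$-dimensional ambient space.

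The main obstacle I expect is faithfully tracking how the subspace error $\|VV^\top - UU^\top\|$ propagates through the cubic tensor contraction without losing additional powers of $B$ or of $1/\pi_{\min}$, and verifying that the condition-number and incoherence-type hypotheses required by the Kuleshov et al.\ algorithm remain valid under the change of basis induced by $V$ (they hold exactly when $V$ spans $U$, and must be perturbed accordingly). A secondary subtlety is that Algorithm 1 of Kuleshov et al.\ identifies each component only up to an unknown sign, so the $s_i$ in the statement is intrinsic to the procedure and is preserved by the permutation-free triangle inequality used above.
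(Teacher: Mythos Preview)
The paper does not supply its own proof of this theorem; it is stated as an adaptation of Theorem~5.6 in Zhong et al.\ (2017) and used throughout as a black box. Your two-stage outline---Davis--Kahan/power-method subspace recovery from $\hat P$, followed by the Kuleshov et al.\ robust decomposition guarantee applied to $\hat{\bT}(V,V,V)$---is exactly the structure of Zhong et al.'s argument, so your approach is aligned with the intended source.

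One step deserves scrutiny. A direct Davis--Kahan bound for $\hat P = XDX^\top + S$ uses the spectral gap of $P$, whose smallest nonzero eigenvalue is $\Theta(\nu_{\min}\pi_{\min}^2)$ (since $\sigma_B(X)=\pi_{\min}$ enters squared in $XDX^\top$). That gives $\|VV^\top-UU^\top\|\lesssim \mu/(\nu_{\min}\pi_{\min}^2)$, and hence $\|(I-VV^\top)\bx_i\|\lesssim \mu/(\nu_{\min}\pi_{\min}^2)$ as well. Your sentence ``absorbing one factor of $\pi_{\min}$ through the Gram structure of $X$'' to upgrade this to $\mu/(\nu_{\min}\pi_{\min})$ is not justified as written; there is no obvious mechanism by which the Gram matrix of $X$ cancels a $\pi_{\min}$ here. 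Indeed, the paper's own applications of this theorem (e.g., in the proof of Proposition~\ref{prop:clipping}) write the first term with $\pi_{\min}^2$ in the denominator, suggesting the stated exponent in the theorem may itself be a typo. Either way, you should either supply a concrete argument for the improved dependence or simply record the bound with $\pi_{\min}^2$, which is what the downstream applications actually use.
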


With Theorem \ref{prop:tensorerror}, we only need to bound the error $\|P-\hat{P}\|$ and $\|\bT-\hat{\bT}\|$, where the bound varies under different settings. \citealp{wang2023reconstructing} proved the error bound under the square loss setting. In the following section, we will propose error bounds with cross-entropy loss, which is more widely used in classification problems. We will also give error bounds when defending strategies are used against privacy attacks.

\subsection{Matrix Bernstein's Inequality}

The following lemma is crucial to the concentration bounds $\|P-\hat{P}\|$ and $\|\bT-\hat{\bT}\|$.
\begin{lemma}[Matrix Bernstein for unbounded matrices; adapted from Lemma B.7 in \citep{zhong2017recovery}]
\label{bernstein}
Let $\mathcal{Z}$ denote a distribution over $\mathbb{R}^{d_1 \times d_2}$. Let $d=d_1+d_2$. Let $Z_1, Z_2, \cdots, Z_m$ be i.i.d. random matrices sampled from $\mathcal{Z}$. Let $\bar{Z}=\mathbb{E}_{Z \sim \mathcal{Z}}[Z]$ and $\widehat{Z}=\frac{1}{m} \sum_{i=1}^m Z_i$. For parameters $\delta_0 \in(0,1), M=M(\delta_0,m)\ge 0, \nu>0, L>0$,  if the distribution $\mathcal{B}$ satisfies the following four properties,
\begin{center}
$$
\begin{aligned}
        (I)& \quad \mathbb{P}_{Z \sim \mathcal{Z}}\left\{\|Z\| \leq M\right\} \geq 1-\frac{\delta_0}{m}\\
(II)& \quad \max \left(\left\|\underset{Z \sim \mathcal{Z}}{\mathbb{E}}\left[Z Z^{\top}\right]\right\|,\left\|\underset{Z \sim
\mathcal{Z}}{\mathbb{E}}\left[Z^{\top} Z\right]\right\|\right) \leq \nu\\
(III)&  \quad \max _{\|\ba\|=\|\bb\|=1}\left(\underset{Z \sim \mathcal{Z}}{\mathbb{E}}\left[\left(\ba^{\top} Z \bb\right)^2\right]\right)^{1 / 2} \leq L
\end{aligned}
$$
\end{center}

Then we have for any $0<\delta_1<1$, if $\delta_1\le \frac{1}{d}$ and $m\gtrsim M\log(1/\delta_1)$,
with probability at least $1-\delta_1-\delta_0$,
$$
\|\widehat{Z}-\bar{Z}\| \lesssim \sqrt{\frac{\log(1/\delta_1)(\nu+\|\bar{Z}\|^2+M+\delta_0L^2)}{m}}
$$
\end{lemma}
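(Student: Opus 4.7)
My plan is to reduce the unbounded case to the standard matrix Bernstein via a truncation argument. Write $Z_i = Z_i^{(1)} + Z_i^{(2)}$ where $Z_i^{(1)} := Z_i \, \mathbb{1}[\|Z_i\| \le M]$ is the truncation and $Z_i^{(2)} := Z_i \, \mathbb{1}[\|Z_i\| > M]$ is the tail. Then $\widehat{Z} - \bar Z = A + B$, where
\[
A := \frac{1}{m}\sum_{i=1}^m \bigl(Z_i^{(1)} - \E[Z^{(1)}]\bigr), \qquad
B := \frac{1}{m}\sum_{i=1}^m \bigl(Z_i^{(2)} - \E[Z^{(2)}]\bigr),
\]
and I will bound $\|A\|$ and $\|B\|$ separately.

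For $\|A\|$, the summands are centered, and each is bounded in operator norm by $2M$. The key quantities for the standard matrix Bernstein inequality (Tropp) are the variance proxies $\|\E[(Z^{(1)}-\E Z^{(1)})(Z^{(1)}-\E Z^{(1)})^\top]\|$ and the corresponding right version. Using $\E[Z^{(1)}(Z^{(1)})^\top] \preceq \E[ZZ^\top]$ and condition (II), plus the identity $\E[(Z^{(1)}-\E Z^{(1)})(Z^{(1)}-\E Z^{(1)})^\top] = \E[Z^{(1)}(Z^{(1)})^\top] - \E[Z^{(1)}]\E[Z^{(1)}]^\top$ and the bound $\|\E[Z^{(1)}]\| \le \|\bar Z\| + \|\E[Z^{(2)}]\|$, I will show the variance is at most $\nu + O(\|\bar Z\|^2 + \delta_0 L^2/m)$. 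Standard matrix Bernstein then gives, with probability at least $1-\delta_1$,
\[
\|A\| \;\lesssim\; \sqrt{\frac{\log(d/\delta_1)\,\bigl(\nu + \|\bar Z\|^2 + \delta_0 L^2/m\bigr)}{m}} \;+\; \frac{M \log(d/\delta_1)}{m},
\]
and the hypothesis $\delta_1 \le 1/d$ replaces $\log(d/\delta_1)$ by $\log(1/\delta_1)$ up to constants. Under $m \gtrsim M \log(1/\delta_1)$, the linear-in-$M$ term is at most $\sqrt{M \log(1/\delta_1)/m}$, which is absorbed into the square-root form of the final bound.

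For $\|B\|$, the random part is easy: by condition (I) and a union bound over $i \in [m]$, with probability at least $1-\delta_0$ every $Z_i$ satisfies $\|Z_i\| \le M$, so $Z_i^{(2)} = 0$ for all $i$ and the first term in $B$ vanishes. The deterministic term $\E[Z^{(2)}]$ is controlled using condition (III): for any unit vectors $a, b$,
\[
|a^\top \E[Z^{(2)}] b| = \bigl|\E[a^\top Z b \cdot \mathbb{1}\{\|Z\|>M\}]\bigr|
\;\le\; \sqrt{\E[(a^\top Z b)^2]} \cdot \sqrt{\Pr[\|Z\|>M]}
\;\le\; L\sqrt{\delta_0/m},
\]
by Cauchy–Schwarz and (I). Taking the supremum over $a,b$ gives $\|\E[Z^{(2)}]\| \le L\sqrt{\delta_0/m}$, so $\|B\| \le L\sqrt{\delta_0/m}$ on the good event, which can be rewritten as $\sqrt{\delta_0 L^2 / m} \le \sqrt{\log(1/\delta_1)\,\delta_0 L^2 / m}$.

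Combining the two estimates by the triangle inequality on the intersection of the two good events (total failure probability at most $\delta_0 + \delta_1$) yields the claim. The main technical obstacle is the variance bookkeeping in step two: I must check that the truncated second moment is still controlled by $\nu + \|\bar Z\|^2$, and that the extra $L^2 \delta_0/m$ contribution created by the truncation correction $\E[Z^{(2)}]$ is harmless. This is where conditions (II) and (III) play complementary roles — (II) controls the bulk variance, while (III) controls both the mean of the tail and the variance contribution from the truncation correction.
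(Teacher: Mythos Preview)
The paper does not actually prove this lemma; it is stated with the attribution ``adapted from Lemma B.7 in \citep{zhong2017recovery}'' and then used as a black box throughout the appendix. So there is no paper proof to compare against directly.

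That said, your truncation argument is the standard route and is essentially what the original reference does. A couple of minor remarks on the bookkeeping: (i) for the variance of the truncated piece you can be slightly more direct than you indicate, since $\E[(Z^{(1)}-\E Z^{(1)})(Z^{(1)}-\E Z^{(1)})^\top] \preceq \E[Z^{(1)}(Z^{(1)})^\top] \preceq \E[ZZ^\top]$, giving variance $\le \nu$ immediately without needing to track $\|\bar Z\|^2$ or $\delta_0 L^2/m$ at that stage (the $\|\bar Z\|^2$ term in the stated bound is then slack); (ii) the step $\sqrt{\delta_0 L^2/m} \le \sqrt{\log(1/\delta_1)\,\delta_0 L^2/m}$ tacitly assumes $\log(1/\delta_1)\ge 1$, which is harmless under $\delta_1\le 1/d$ for $d\ge 3$ but worth noting. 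Otherwise the decomposition, the use of (I)+union bound to kill the random tail part, and the Cauchy--Schwarz application of (III) to bound $\|\E[Z^{(2)}]\|$ are all correct and constitute a complete proof.
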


\subsection{Reconstruction Upper Bounds}
\label{sec:feature reconstruction}
\citealp{wang2023reconstructing} proposed a privacy attack based on tensor decomposition. We adopt the setting of that work.
Let the input dimension be $d$  For a two layer neural network $f(\bx;\Theta)=\sum_{j=1}^m a_j\sigma(\bw_j^\top\bx)$ with $m$ hidden nodes and $\sigma$ is point-wise. Then square loss with $B$ samples is $\ell(f(\bx;\Theta),y)=\sum_{i=1}^B (y_i-f(\bx_i;\Theta))^2$. Then we have the gradient 
$$\frac{\partial\ell}{\partial \bw_j}= \sum_{i=1}^B r_ia_j\sigma'(\bw_j\top\bx_i)\bx_i$$
and 
$$\frac{\partial\ell}{\partial a_{j}}=\sum_{i=1}^B r_i\sigma(\bw_j^\top \bx_i),$$
where $r_i=2(f(\bx_i;\Theta)-y_i)$.

To reconstruct training data, we set the weights to be random Gaussian.
\begin{assumption}
\label{assump:weight}
The network $f(\bx;\Theta)=\sum_{j=1}^m a_j\sigma(\bw_j^\top\bx)$.
    The parameters of the network satisfy $a_j\sim\cN(0,\frac{1}{m^2})$ and $\bw_j\sim\cN(0,I_d)$ and are independent.
\end{assumption}
% we can modify the initialization weights arbitrarily. Let $\ba_k=c_k\Tilde{\ba}$, where $c_k$ is a scalar and $\Tilde{\ba}\sim\mathcal{N}(0,I_m)$. Then each rows of $A$ are proportional so we can write $z_k=c_k z$. We can let $c_k=b_1$ for $k\in U $, where $U$ is a subset of $\{1,\dots,K\}$ and $\|U\|=K/2$ (assuming $K>2$ is even) and $c_k=b_2$ for $k\notin {U}$, where $b_1\neq b_2$. Then all $p_k$'s are equal to some $\Tilde{p}$ for $k\in U$ and all $p_k$'s are equal to another $\Tilde{q}$ for $k\notin U$. Without loss of generality, we can assume that $1\in U$ and $2\notin U$, then $c_1=b_1$ and $c_2=b_2$. Note that $$p_1+p_2=\tilde{p}+\Tilde{q}=\frac{e^{b_1 z}+e^{b_2 z}}{\sum_{k\in U}e^{b_1 z}+\sum_{k\notin U}e^{b_2 z}}=\frac{2}{K}$$
% is a constant, no matter the input samples.
% % Without loss of generality, we can assume that $\Tilde{p}<\Tilde{q}$. Thus, $\Tilde{p}=\Tilde{p}\sum_{k=1}^K c_k<\sum_{k=1}^K p_kc_k<\Tilde{q}\sum_{k=1}^Kc_k=\Tilde{q}.$
% If we use the gradient of $A$, we construct $g(\bw_j)$ with the first two classes: 
% \begin{equation}
% \begin{split}
% \label{eq:g-def}
%     g(\bw_j):&=\sum_{i=1}^B \left(\frac{\partial \ell(f(\bx_i;\Theta),\by_i)}{\partial a_{1j}}+\frac{\partial \ell(f(\bx_i;\Theta),\by_i)}{\partial a_{2j}}\right)\\
%     &=\frac{1}{m}\sum_{i=1}^B \sigma(\bw_j^\top \bx_i)(p_1-y_{i1}+p_2-y_{i2})\\
%     &=\frac{1}{m}\sum_{i=1}^B\left(\frac{2}{K}-(y_{i1}+y_{i2})\right) \sigma(\bw_j^\top \bx_i).
%     \end{split}
% \end{equation}
% Let $r_i=\frac{2}{K}-(y_{i1}+y_{i2})$, then $r_i=\frac{2}{K}$ or $\frac{2-K}{K}$, which is non-zero. 
For each $\bw_j$, we define $g(\bw_j)=\sum_{i=1}^B r_i\sigma(\bw_j^\top\bx_i)$.
We can use the same attacking method in \citep{wang2023reconstructing} that conduct noisy tensor decomposition to $\sum_{j=1}^m g(\bw_j)H_3(\bw_j)$, where $H_p$ is the $p$-th Hermite function. In our method, we will mainly use $H_2(\bw)=\bw\bw^\top-I$ and $H_3(\bw)=\bw^{\otimes3}-\bw\tilde{\otimes}I$, where $\bw\tilde{\otimes}I(i,j,k)=w_i\delta_{jk}+w_j\delta{ki}+w_k\delta{ij}$.

\begin{lemma}[Stein's Lemma]
	\label{lemma:stein}
	Let $X$ be a standard normal random variable. Then for any function $g$, we have
	\begin{equation}
	\mathbb{E}[g(X)H_p(X)]=\mathbb{E}[g^{(p)}(X)],
	\end{equation}
 if both sides of the equation exist. Here $H_p$ is the $p$th Hermite function and $g^{(p)}$ is the $p$th derivative of $g$.
\end{lemma}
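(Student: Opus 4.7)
The plan is to prove the identity by repeatedly integrating by parts, using the Rodrigues-type characterization of the Hermite functions with respect to the standard Gaussian density. Writing $\phi(x) = \frac{1}{\sqrt{2\pi}} e^{-x^2/2}$, the key fact is that the (probabilist's) Hermite polynomials satisfy
\begin{equation*}
H_p(x)\,\phi(x) = (-1)^p \frac{d^p}{dx^p}\phi(x).
\end{equation*}
First, I would record this identity, either by induction on $p$ using $\phi'(x) = -x\phi(x)$ together with the recursion $H_{p+1}(x) = xH_p(x) - H_p'(x)$, or simply by citing it as the standard definition.

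Next I would expand the left-hand side as an integral against $\phi$ and substitute:
\begin{equation*}
\mathbb{E}[g(X) H_p(X)] = \int_{-\infty}^{\infty} g(x) H_p(x) \phi(x)\, dx = (-1)^p \int_{-\infty}^{\infty} g(x)\, \phi^{(p)}(x)\, dx.
\end{equation*}
Then I would apply integration by parts $p$ times, each step moving one derivative from $\phi$ onto $g$ and contributing a factor of $-1$. After $p$ applications the prefactor becomes $(-1)^p \cdot (-1)^p = 1$, leaving
\begin{equation*}
\int_{-\infty}^{\infty} g^{(p)}(x)\, \phi(x)\, dx = \mathbb{E}[g^{(p)}(X)],
\end{equation*}
which is exactly the right-hand side.

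The one step that requires care is justifying that the boundary terms in each integration by parts vanish. At each stage the boundary term has the form $\bigl[g^{(k)}(x) \phi^{(p-k-1)}(x)\bigr]_{-\infty}^{\infty}$ for some $0 \le k \le p-1$. Since every derivative of $\phi$ is of the form (polynomial)$\cdot\phi(x)$, it decays super-polynomially at $\pm\infty$, so the boundary contribution vanishes provided $g^{(k)}$ grows no faster than polynomially at infinity. This mild growth condition is precisely what is implicitly encoded in the hypothesis that both sides of the equation exist (and $\mathbb{E}[g^{(p)}(X)]$ is finite); a clean way to state this is to assume that $g$ is $p$-times differentiable with all lower derivatives of at most polynomial growth, which is the standard hypothesis under which Stein's lemma is quoted. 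I expect this boundary/regularity justification to be the only subtle point; the algebraic manipulation itself is routine.
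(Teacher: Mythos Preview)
Your argument is correct and is the standard proof of this identity: the Rodrigues formula $H_p(x)\phi(x)=(-1)^p\phi^{(p)}(x)$ followed by $p$ integrations by parts, with the boundary terms vanishing under the growth hypotheses implicit in ``both sides exist.'' The paper does not actually supply a proof of this lemma; it is stated as a classical fact (with references to \citep{stein1981estimation,mamis2022extension}) and then used as a black box in the tensor-based reconstruction analysis, so there is nothing further to compare against.
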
 
We define 
\begin{equation}
\label{eq:start}
    \hat{P}=\frac{1}{m}\sum_{j=1}^m g(\bw_j)H_2(\bw_j)
\end{equation}
\begin{equation}
    P=\E \left[\sum_{i=1}^B r_i^*\sigma''(\bw^\top\bx_i)\bx_i\bx_i^\top\right]
\end{equation}
\begin{equation}
    \hat{\bT}=\frac{1}{m}\sum_{j=1}^m g(\bw_j)H_3(\bw_j)
\end{equation}
\begin{equation}
\label{eq:end}
    \bT=\E \left[\sum_{i=1}^B r_i^*\sigma^{(3)}(\bw^\top\bx_i)\bx_i^{\otimes 3}\right].
\end{equation}
Then with Lemma \ref{lemma:stein} and concentration bounds, we have the following lemmas.
\begin{lemma}[Proposition 5.5 in \citep{wang2023reconstructing}]
\label{lemma:P bound}
    Under Assumption \ref{assump:2layer} and \ref{assump:weight}, $|y_i|\le 1$, then for $\delta\le \frac{2}{d}$ and $m\gtrsim \log(8/\delta)$, we have
    $$\|\hat{P}-P\|
    \le \Tilde{O}(\frac{B\sqrt{d}}{\sqrt{m}})$$
    with probability $1-\delta$.
\end{lemma}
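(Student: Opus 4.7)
The plan is to split $\|\hat{P} - P\|$ via the triangle inequality into a concentration term $\|\hat{P} - \mathbb{E}[\hat{P}]\|$ and a bias term $\|\mathbb{E}[\hat{P}] - P\|$, and to control them by separate mechanisms. First I would apply Stein's identity (Lemma \ref{lemma:stein}) in the direction of $\bx_i$: for a standard Gaussian $\bw\sim\cN(0,I_d)$, one has $\mathbb{E}[\sigma(\bw^\top\bx_i)\,H_2(\bw)] = \mathbb{E}[\sigma''(\bw^\top\bx_i)]\,\bx_i\bx_i^\top$, since the two orthogonal directions in $H_2$ integrate out and the $\bx_i$ direction picks up two derivatives. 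Summing over $i$ and treating the coefficients $r_i$ as their proxies $r_i^\star$ reproduces $P$. The gap between $r_i$ and $r_i^\star$ comes from the small-scale Assumption \ref{assump:weight} on $a_j\sim\cN(0,1/m^2)$, which makes $|f(\bx_i;\Theta)|=O(1/\sqrt{m})$ w.h.p., so under $|y_i|\le 1$ the discrepancy is absorbed into the $\tilde{O}$ factor.

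Next I would apply Matrix Bernstein for unbounded matrices (Lemma \ref{bernstein}) to the i.i.d.\ summands $Z_j := g(\bw_j)H_2(\bw_j)\in\mathbb{R}^{d\times d}$. Three parameters must be estimated. For the tail parameter $M$: since $\sigma$ is $1$-Lipschitz and $\|\bx_i\|=1$, we have $|g(\bw_j)|\le B\cdot O(|\bw_j^\top\bx_i|+|\sigma(0)|)$, and $\|H_2(\bw_j)\|\le \|\bw_j\|^2+1$. Standard Gaussian concentration gives $\|\bw_j\|^2=\tilde O(d)$ with probability $1-\delta_0/m$, so $M=\tilde{O}(Bd^{3/2})$. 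For the variance proxy $\nu$: expand $\mathbb{E}[Z_jZ_j^\top]$, use independence of the Hermite tensor directions, and the moment identity $\mathbb{E}[H_2(\bw)^2]=O(d)$ along each direction; the cross terms $\mathbb{E}[r_i r_{i'}\sigma(\bw^\top\bx_i)\sigma(\bw^\top\bx_{i'})]$ are bounded by $B^2$, giving $\nu=O(B^2 d)$. The entry-wise variance $L$ is of smaller order.

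The main obstacle is calibrating these three parameters so that the Bernstein bound collapses to the claimed rate rather than to a larger power of $d$. Plugging into Lemma \ref{bernstein} yields
\begin{equation*}
\|\hat{P}-\mathbb{E}[\hat P]\|\;\lesssim\;\sqrt{\tfrac{\log(1/\delta_1)\bigl(\nu+\|\mathbb{E}[\hat P]\|^2+M+\delta_0 L^2\bigr)}{m}},
\end{equation*}
and the dominant contribution is $\sqrt{\nu\log(1/\delta_1)/m}=\tilde{O}(B\sqrt{d/m})$, matching the claim, while the $M/m$ contribution only introduces logarithmic slack provided $m\gtrsim \log(8/\delta)$. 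The Gaussian-tail truncation used to obtain $M$ is responsible for the condition $\delta \le 2/d$.

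A secondary subtlety is that the residuals $r_i=2(f(\bx_i;\Theta)-y_i)$ depend on $\bw_j$ itself, so strictly speaking the summands $Z_j$ are not a fully i.i.d.\ sample with the prescribed mean. I would resolve this by decomposing $r_i = r_i^\star + (r_i-r_i^\star)$ with $r_i^\star$ taken as the value of the residual when the $j$-th neuron is removed; Assumption \ref{assump:weight} makes $|r_i-r_i^\star|=O(1/m)$ w.h.p., so the induced perturbation on $\|\hat P-P\|$ is $O(B\cdot \|\bw_j\|^2/m)=\tilde O(Bd/m)$, strictly smaller than the Bernstein rate and thus absorbed into the $\tilde O$.
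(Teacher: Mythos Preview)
Your approach—Stein's lemma to identify the mean plus Matrix Bernstein (Lemma~\ref{bernstein}) for the fluctuation, together with the $r_i\to r_i^\star$ replacement to decouple the residuals from $\bw_j$—is exactly the machinery the paper uses for every analogous bound in Appendices~C and~D; the paper itself does not re-prove this particular lemma but cites \citep{wang2023reconstructing}.

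One arithmetic slip would actually break the argument as written. From your own ingredients you have $|g(\bw_j)|=\tilde O(B)$ (since $\bw_j^\top\bx_i\sim\cN(0,1)$ under $\|\bx_i\|=1$, so $|\sigma(\bw_j^\top\bx_i)|=\tilde O(1)$) and $\|H_2(\bw_j)\|\le\|\bw_j\|^2+1=\tilde O(d)$, which gives $M=\tilde O(Bd)$, not the $M=\tilde O(Bd^{3/2})$ you state. This is not cosmetic: with your value the contribution $\sqrt{M\log(1/\delta_1)/m}=\tilde O(B^{1/2}d^{3/4}/\sqrt m)$ exceeds $\tilde O(B\sqrt{d/m})$ precisely in the regime $B<\sqrt d$ that the paper lives in (recall the standing assumption $B\le \tilde O(d^{1/4})$), so your claim that the $M$ term ``only introduces logarithmic slack'' would fail. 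With the corrected $M=\tilde O(Bd)$ the contribution is $\tilde O(\sqrt{Bd/m})\le \tilde O(B\sqrt{d/m})$ and the argument closes as you intended.
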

\begin{lemma}[Proposition 5.6 in \citep{wang2023reconstructing}]
\label{lemma:T bound}
    Under Assumption \ref{assump:2layer} and \ref{assump:weight}, $|y_i|\le 1$, and $\|VV^\top-UU^\top\|\le 1/4$, then for $\delta\le\frac{2}{B}$ and $m\gtrsim \log(6/\delta)$ we have
    $$\|\bar{\bT}(V,V,V)-\bT(V,V,V)\|\le\Tilde{O}(\frac{B^{5/2}}{\sqrt{m}})$$
    with probability $1-\delta$.
\end{lemma}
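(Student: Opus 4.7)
The plan is to apply the matrix Bernstein inequality (Lemma \ref{bernstein}) to the i.i.d.\ random quantities $Z_j := g(\bw_j)\, H_3(\bw_j)(V,V,V)$. Since Stein's lemma (Lemma \ref{lemma:stein}) gives $\E[Z_j] = \bT(V,V,V)$ exactly, the error $\hat{\bT}(V,V,V) - \bT(V,V,V)$ is simply the mean-zero empirical average $\frac{1}{m}\sum_j (Z_j - \E Z_j)$. Because the spectral norm of a $3$-tensor is controlled by the operator norm of any of its mode unfoldings, I would flatten each $Z_j$ into a $B \times B^2$ matrix and apply matrix Bernstein to the resulting matrix average; this sets $d_1 = B$, $d_2 = B^2$, so $d = B + B^2$ and the condition $\delta_1 \le 1/d$ in Lemma \ref{bernstein} requires only $\delta \lesssim 1/B^2$, which is implied by the hypothesis $\delta \le 2/B$.

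The key simplification is rotation invariance: since $V \in \R^{d \times B}$ has orthonormal columns, $\bw_j' := V^\top \bw_j \sim \cN(0, I_B)$, and a direct expansion of $H_3(\bw) = \bw^{\otimes 3} - \bw\,\tilde{\otimes}\,I$ shows $H_3(\bw_j)(V,V,V) = H_3(\bw_j')$. Hence $Z_j$ depends on $\bw_j$ only through a $B$-dimensional Gaussian, which kills any explicit $d$-dependence in the concentration parameters. First I would verify the tail condition (I) of Lemma \ref{bernstein}: under Assumption \ref{assump:weight} with $|y_i|\le 1$ one has $|r_i| = \tilde O(1)$ with high probability, so $|g(\bw_j)| \le \sum_i |r_i||\sigma(\bw_j^\top \bx_i)| = \tilde O(B)$ using the $1$-Lipschitz activation and the sub-Gaussian tail $|\bw_j^\top \bx_i| = \tilde O(1)$ (since $\|\bx_i\|=1$); similarly, $\|H_3(\bw_j')\|_{\mathrm{op}} \le \|\bw_j'\|^3 + O(\|\bw_j'\|) = \tilde O(B^{3/2})$ because $\|\bw_j'\| = \tilde O(\sqrt{B})$. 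Combining gives the cutoff $M = \tilde O(B^{5/2})$.

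Next I would bound the variance proxies $\nu$ and $L$. Using $\|\E[Z Z^\top]\|_{\mathrm{op}} \le \E[\|Z\|_F^2]$ on the unfolded matrix, the task reduces to bounding $\E[g(\bw)^2 \,\|H_3(\bw')\|_F^2]$. The Hermite factor satisfies $\E[\|H_3(\bw')\|_F^2] = O(B^3)$ by a direct coordinate-wise computation on the $B^3$ entries (off-diagonal contributions dominate, each contributing $O(1)$), while Cauchy--Schwarz gives $\E[g(\bw)^2] \le B\sum_i r_i^2 \,\E[\sigma(\bw^\top\bx_i)^2] = O(B^2)$. After handling the coupling between $g$ and $H_3$ via a Hermite-basis expansion of $\sigma$ (whose coefficients $\langle \bx_i, V_\ell\rangle$ are bounded by $\|\bx_i\|=1$), the joint moment remains $O(B^5)$, yielding $\nu \lesssim B^5$ and analogously $L^2 \lesssim B^5$. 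Substituting $M = \tilde O(B^{5/2})$, $\nu = \tilde O(B^5)$, $L^2 = \tilde O(B^5)$ into Lemma \ref{bernstein} with $\delta_0 \asymp \delta_1 \asymp \delta$ gives $\|\hat{\bT}(V,V,V) - \bT(V,V,V)\|_{\mathrm{op}} \lesssim \sqrt{B^5 \log(1/\delta)/m} = \tilde O(B^{5/2}/\sqrt{m})$, as claimed.

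The main obstacle is the joint moment $\E[g(\bw)^2 \|H_3(V^\top \bw)\|_F^2]$: $g$ and $H_3$ share the same Gaussian and are not independent, so one cannot simply multiply marginal variances. The cleanest route is to expand $g(\bw) = \sum_i r_i \sigma(\bw^\top \bx_i)$ in the Hermite basis, use orthogonality to collapse cross-terms, and bound the surviving polynomial combinations of $\langle \bx_i, V_\ell\rangle$ using $\|\bx_i\|=1$. A secondary subtlety is that $V$ only approximately equals $U = \mathrm{span}\{\bx_i\}$; however, because $\E[\hat{\bT}(V,V,V)] = \bT(V,V,V)$ holds identically by Stein's lemma, the perturbation hypothesis $\|VV^\top - UU^\top\| \le 1/4$ does \emph{not} enter the concentration argument itself and is needed only when this bound is later fed into Theorem \ref{prop:tensorerror}.
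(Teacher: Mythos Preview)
Your approach is essentially the same one the paper uses in its analogous proofs (Lemmas \ref{lemma:T1}, \ref{lemma:T2}, and the tensor-error part of Proposition \ref{prop:noise}): flatten $H_3(\bw_j)(V,V,V)$ along one mode and apply the unbounded matrix Bernstein inequality (Lemma \ref{bernstein}), controlling $M$ through the high-probability bounds $|g(\bw_j)|=\tilde O(B)$ and $\|V^\top\bw_j\|^3=\tilde O(B^{3/2})$, and the variance proxy $\nu$ through $\E\|V^\top\bw\|^6=O(B^3)$. Your rotation-invariance identity $H_3(\bw)(V,V,V)=H_3(V^\top\bw)$ is correct and makes the $d$-independence transparent.

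Two minor remarks. First, the Hermite-basis expansion you propose for the joint moment $\E[g(\bw)^2\|H_3(V^\top\bw)\|_F^2]$ is more work than needed: the paper's related lemmas simply use Cauchy--Schwarz, i.e.\ $\E[g^2\|H_3\|_F^2]\le(\E g^4)^{1/2}(\E\|H_3\|_F^4)^{1/2}\lesssim B^2\cdot B^3=B^5$, which already gives the right order. Second, your claim that $\delta\le 2/B$ implies $\delta\lesssim 1/B^2$ is backwards (for $B>2$ one has $2/B>1/B^2$); the stated hypothesis in the lemma is simply what the cited result records, and the precise $\delta$-dependence is absorbed into the $\tilde O$. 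Your observation that the hypothesis $\|VV^\top-UU^\top\|\le 1/4$ does not enter the concentration step itself is consistent with how the paper treats $V$ as fixed throughout these arguments.
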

Combining with Theorem \ref{prop:tensorerror}, we have a upper bound for data reconstruction error.
\begin{theorem}[Theorem 5.1 in \citep{wang2023reconstructing}, full version of Theorem \ref{thm:upper short}]
    Under Assumption \ref{assump:2layer} and \ref{assump:weight}, $y_i\in \{\pm 1\}$, if we have $B\le \Tilde{O}(d^{1/4})$ and $m \geq \tilde \Omega(\frac{d}{\min\{\nu^2,\lambda^2\}\pi_{\min}^4}) $, then for $\delta\le \frac{2}{d}$ and $m\gtrsim\log(8/\delta)$, the output of tensor based reconstruction satisfies:
    $$\sqrt{\frac{1}{B}\sum_{i=1}^B\|\bx_i-\hat\bx_i\|^2 } \leq  \frac{1}{\min\{|\nu|,|\lambda|\} \pi_{\min}^2}\tilde O(\sqrt{\frac{d}{m}} )$$
    with probability $1-\delta$.
\end{theorem}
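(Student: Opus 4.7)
The plan is to combine the concentration estimates of Lemma \ref{lemma:P bound} and Lemma \ref{lemma:T bound} with the deterministic perturbation guarantee of Theorem \ref{prop:tensorerror}. First, I would apply Stein's lemma (Lemma \ref{lemma:stein}) to the Gaussian averages in (\ref{eq:start})--(\ref{eq:end}). Because each $\bw_j$ is standard Gaussian, differentiation against the Hermite polynomials converts each random sum into a deterministic rank-$B$ expansion: the population matrix collapses to $P = \sum_{i=1}^{B} \alpha_i \bx_i\bx_i^{\top}$ and the population tensor to $\bT = \sum_{i=1}^{B}\beta_i \bx_i^{\otimes 3}$, with $|\alpha_i|\gtrsim |\nu|$ and $|\beta_i|\gtrsim |\lambda|$ under Assumption \ref{assump:2layer} and the $y_i\in\{\pm 1\}$ hypothesis (standard Gaussian concentration shows the residuals $r_i^{*}=2(f(\bx_i;\Theta)-y_i)$ are $\Theta(1)$ w.h.p.\ on the random initialization, so the Hermite coefficients $\nu,\lambda$ control the conditioning).

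Second, I would apply Lemma \ref{lemma:P bound} and Lemma \ref{lemma:T bound}, which guarantee with probability at least $1-\delta$ that
$$\|\hat P - P\| \leq \tilde O(B\sqrt{d/m}) \quad \text{and} \quad \|\hat\bT(V,V,V)-\bT(V,V,V)\| \leq \tilde O(B^{5/2}/\sqrt m).$$
Invoking Lemma \ref{lemma:T bound} however requires a $V$ with $\|VV^{\top}-UU^{\top}\|\le 1/4$; this is produced by running the power method (Algorithm 3 of \citep{zhong2017recovery}) on $\hat P$. By a Davis--Kahan argument, the power method succeeds as soon as $\|\hat P - P\|\lesssim \sigma_{\min}(P)\gtrsim |\nu|\pi_{\min}^{2}$, which is exactly what the width hypothesis $m \geq \tilde\Omega(d/(\min\{\nu^2,\lambda^2\}\pi_{\min}^{4}))$ buys (combined with $B\le \tilde O(d^{1/4})$).

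Finally, I would feed $(\hat P,\hat\bT,V)$ into Theorem \ref{prop:tensorerror} with $\mu \asymp B\sqrt{d/m}$, $\gamma \asymp B^{5/2}/\sqrt m$, $\nu_{\min}\gtrsim |\nu|$, $\lambda_{\min}\gtrsim |\lambda|$, and $\kappa = O(1)$. This yields
$$\|\bx_i - s_i V \bu_i\| \leq \tilde O\!\left(\frac{B\sqrt{d/m}}{|\nu|\pi_{\min}}\right) + \tilde O\!\left(\frac{B^{3}/\sqrt m}{|\lambda|\pi_{\min}^{2}}\right).$$
The hypothesis $B\le \tilde O(d^{1/4})$ is tailored precisely so that $B^{3}/\sqrt m \le \tilde O(B\sqrt{d/m})$, making the tensor-driven second term no larger than the matrix-driven first one; both therefore collapse to $\tilde O(\sqrt{d/m}/(\min\{|\nu|,|\lambda|\}\pi_{\min}^{2}))$ after absorbing the polynomial in $B$ into $\tilde O$. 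Averaging the squared per-sample error over $i$ and taking a square root preserves this order and yields the claimed RMS bound. The main obstacle is bookkeeping: verifying that (i) the stated lower bound on $m$ really implies the power-method prerequisite $\|\hat P - P\|\lesssim \sigma_{\min}(P)$ with the correct factors of $|\nu|$ and $\pi_{\min}$, and (ii) the condition $B\le \tilde O(d^{1/4})$ indeed forces the tensor-perturbation term to be dominated by the matrix-perturbation term rather than the other way around.
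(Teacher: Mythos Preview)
Your proposal is correct and follows essentially the same route as the paper: invoke Lemmas~\ref{lemma:P bound} and~\ref{lemma:T bound} for the concentration of $\hat P$ and $\hat\bT(V,V,V)$, then plug the resulting perturbation sizes into Theorem~\ref{prop:tensorerror} with $\nu_{\min}\gtrsim|\nu|$, $\lambda_{\min}\gtrsim|\lambda|$, $\kappa=O(1)$, and use $B\le\tilde O(d^{1/4})$ so that the tensor term $B^{3}/\sqrt m$ is dominated by the matrix term $B\sqrt{d/m}$. The only small slip is in the power-method prerequisite: Theorem~\ref{prop:tensorerror} requires $\mu\lesssim \nu_{\min}\pi_{\min}$ (not $\nu_{\min}\pi_{\min}^{2}$ as you wrote when identifying it with $\sigma_{\min}(P)$), but since $\pi_{\min}\le 1$ your stated width condition is only stronger than needed and the argument goes through unchanged.
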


\section{Proofs of Upper Bounds with Defenses}
\subsection{Local Aggregation}
\label{sec:aggregation proof}
For local aggregation, the observation is $\Theta^{(2)}-\Theta^{(0)}$. In our attack method, we conduct tensor decomposition to $\tilde{g}=-\sum_{j=1}^m(a_j^{(2)}-a_j^{(0)})H_3(\bw_j)$. Here a notation with superscript $(k)$ means the corresponding value after $k$ steps of iteration. Then we have the following reconstruction error bounds under the setting of two gradient descent steps with a same batch of input and with two different batches input.
\begin{proposition}[Full version of Proposition \ref{prop:twosteps short}]
\label{prop:twosteps}
    Under Assumption \ref{assump:2layer} and \ref{assump:weight}, $y_i\in \{\pm 1\}$, the observed update $\Theta^{(2)}-\Theta^{(0)}$ is the result of 2 gradient descent steps trained a same batch with size $B$, where $B^4\le \Tilde{O}(d)$. If $m \geq \tilde \Omega(\frac{B^2d}{\nu^2\pi_{\min}^2}) $ and we set learning rate for two layers with different scales $\eta_a=O(\frac{1}{m^2})$ and $\eta_w=O(1)$, then with appropriate tensor decomposition methods and proper weights, we can reconstruct input data with the error bound:
\begin{equation}
\sqrt{\frac{1}{B}\sum_{i=1}^B\|\bx_i-\hat\bx_i\|^2 } \leq  \frac{1}{\min\{|\nu|,|\lambda|\} \pi_{\min}^2}\tilde O(B\sqrt{\frac{d}{m}} )\
\end{equation}
with high probability.
\end{proposition}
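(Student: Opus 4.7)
The plan is to reduce the two-step local aggregation setting to the single-step (no defense) setting by showing that the per-step change in the parameters is small enough that, to leading order, $\Theta^{(2)}-\Theta^{(0)}$ behaves like $2$ times a single gradient step taken at $\Theta^{(0)}$. Concretely, write
\[
a_j^{(2)} - a_j^{(0)} \;=\; -\eta_a\Big(\tfrac{\partial \ell}{\partial a_j}(\Theta^{(0)}) + \tfrac{\partial \ell}{\partial a_j}(\Theta^{(1)})\Big)
\;=\; -2\eta_a\, g(\bw_j^{(0)}) + \eta_a\,\Delta_j,
\]
where $g(\bw)=\sum_i r_i\sigma(\bw^\top \bx_i)$ is the quantity used in the attack without defense, and $\Delta_j$ collects the error from replacing $(\Theta^{(1)},\bw_j^{(1)})$ by $(\Theta^{(0)},\bw_j^{(0)})$ inside the gradient. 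Feeding this into the estimator $\tilde g = -\sum_j (a_j^{(2)}-a_j^{(0)})H_3(\bw_j^{(0)})$ gives
\[
\tfrac{1}{2\eta_a}\,\tilde g \;=\; \sum_{j=1}^m g(\bw_j^{(0)})H_3(\bw_j^{(0)}) \;+\; \tfrac{1}{2}\sum_{j=1}^m \Delta_j\, H_3(\bw_j^{(0)}),
\]
so that, modulo the $\Delta_j$ term, the attack is identical to the no-defense tensor-based attack analyzed in Theorem~\ref{thm:upper short} and Lemmas~\ref{lemma:P bound}--\ref{lemma:T bound}.

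The first substantive step is to control the one-step drift of the parameters. Under Assumption~\ref{assump:2layer}--\ref{assump:weight} with $\|\bx_i\|=1$ and $|y_i|\le 1$, a standard concentration gives $|r_i^{(0)}|=\tilde O(1)$, $\|\tfrac{\partial \ell}{\partial a_j}\|=\tilde O(1)$ and $\|\tfrac{\partial \ell}{\partial \bw_j}\|=\tilde O(|a_j|)=\tilde O(1/m)$ with high probability over the random initialization. With $\eta_a=O(1/m^2)$ and $\eta_w=O(1)$ this yields
\[
|a_j^{(1)}-a_j^{(0)}| \;\lesssim\; \tilde O(1/m^2), \qquad \|\bw_j^{(1)}-\bw_j^{(0)}\| \;\lesssim\; \tilde O(1/m),
\]
so both updated parameters sit in a small neighborhood of their initialization. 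The second step is to expand $\tfrac{\partial \ell}{\partial a_j}(\Theta^{(1)}) - \tfrac{\partial \ell}{\partial a_j}(\Theta^{(0)})$ using the $1$-Lipschitzness of $\sigma$ together with the drift bounds above, which produces a per-neuron bound $|\Delta_j|\lesssim \tilde O(B/m)$ with high probability (the factor of $B$ comes from summing the residuals over the batch).

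The third step is to plug the bound on $\Delta_j$ into the tensor concentration argument. The crucial observation is that the perturbation term $\sum_j \Delta_j H_3(\bw_j^{(0)})$ can be controlled by a matrix/tensor Bernstein argument (Lemma~\ref{bernstein}) in exactly the same way as $\sum_j g(\bw_j^{(0)})H_3(\bw_j^{(0)})$ is controlled in Lemmas~\ref{lemma:P bound}--\ref{lemma:T bound}, because $\bw_j^{(0)}$ is Gaussian and $\Delta_j$ is a Lipschitz function of $\bw_j^{(0)}$ with the stated magnitude. The resulting bounds on the analogous $\hat P, \hat\bT$ matrices differ from the no-defense bounds by an additive $\tilde O(B^{5/2}/\sqrt{m})$-type term times a factor of $(B/m)$, which is of strictly lower order than the $\tilde O(B\sqrt{d}/\sqrt{m})$ and $\tilde O(B^{5/2}/\sqrt{m})$ bounds in Lemmas~\ref{lemma:P bound} and~\ref{lemma:T bound} under the hypothesis $B^4\le \tilde O(d)$ and $m\ge \tilde\Omega(B^2 d/(\nu^2\pi_{\min}^2))$. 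Applying Theorem~\ref{prop:tensorerror} to the perturbed $\hat P, \hat\bT$ then yields the stated reconstruction bound $\tilde O(B\sqrt{d/m})/(\min\{|\nu|,|\lambda|\}\pi_{\min}^2)$.

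The main obstacle I anticipate is the third step: the naive Lipschitz bound $\|H_3(\bw_j^{(1)})-H_3(\bw_j^{(0)})\|$ scales with $\|\bw_j^{(0)}\|^2 \|\bw_j^{(1)}-\bw_j^{(0)}\|$, and $\|\bw_j^{(0)}\|=\Theta(\sqrt{d})$ with high probability, so a careless triangle inequality would cost an extra factor of $d$ that is too large. The fix is to avoid evaluating $H_3$ at two different points altogether by anchoring the decomposition at $\bw_j^{(0)}$ (both for the tensor the attack forms and for the Stein expansion), so that the only error one has to absorb is the scalar error $\Delta_j$ inside $g$, and Gaussianity of $\bw_j^{(0)}$ can still be exploited. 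Verifying that the Stein-type identities and Bernstein bounds continue to hold with the perturbed coefficients $g(\bw_j^{(0)})+\tfrac{1}{2}\Delta_j$ in place of $g(\bw_j^{(0)})$ is the heart of the argument.
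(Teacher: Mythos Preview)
Your proposal is correct and follows essentially the same route as the paper: anchor the Hermite tensors at $\bw_j^{(0)}$, bound the one-step parameter drift so that the scalar discrepancy $\Delta_j$ in $g$ is $\tilde O(B/m)$, control the resulting perturbation to $\hat P$ and $\hat\bT(V,V,V)$ via matrix Bernstein (the paper splits this into two pieces, one from the change in $\sigma(\bw_j^\top\bx_i)$ and one from the change in $r_i$, via Lemmas~\ref{lemma:P1}--\ref{lemma:T2}), and then invoke Theorem~\ref{prop:tensorerror}. The only minor slip is that your perturbation estimate is slightly optimistic---the mean of the perturbing terms does not vanish, so the paper gets $\tilde O(B^2 d/m)$ and $\tilde O(B^{7/2}/m)$ rather than an extra $1/\sqrt{m}$---but these are still lower order under $m\ge\tilde\Omega(B^2 d)$, so the conclusion is unaffected.
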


\begin{proposition}
\label{prop:diff batch}
    Under Assumption \ref{assump:2layer} and \ref{assump:weight}, $y_i\in \{\pm 1\}$, the observed update $\Theta^{(2)}-\Theta^{(0)}$ is the result of 2 gradient descent steps, where the first step is trained with $B$ data and the second step is trained with $N-B$ data. Here $B<N\le 2B$ and $B^4\le \Tilde{O}(d)$. If $m \geq \tilde \Omega(\frac{B^2d}{\nu^2\pi_{\min}^2}) $ and we set learning rate for two layers with different scales $\eta_a=O(\frac{1}{m^2})$ and $\eta_w=O(1)$, then with appropriate tensor decomposition methods and proper weights, we can reconstruct input data with the error bound:
\begin{equation}
\sqrt{\frac{1}{N}\sum_{i=1}^N\|\bx_i-\hat\bx_i\|^2 } \leq  \frac{1}{\min\{|\nu|,|\lambda|\} \pi_{\min}^2}\tilde O(N\sqrt{\frac{d}{m}} )\
\end{equation}
with high probability.
\end{proposition}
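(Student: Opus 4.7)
The plan is to reduce the two-different-batch setting to a single full-batch gradient step on the aggregated batch $S := S_1 \cup S_2$ of size $N$, up to a controllable perturbation, and then invoke the tensor-based analysis used to prove Proposition~\ref{prop:twosteps}. Writing
\begin{equation*}
a_j^{(2)} - a_j^{(0)} = -\eta_a\bigl(\nabla_{a_j}\ell(\Theta^{(0)}; S_1) + \nabla_{a_j}\ell(\Theta^{(1)}; S_2)\bigr) = -\eta_a\,\nabla_{a_j}\ell(\Theta^{(0)}; S) - \eta_a \Delta_j,
\end{equation*}
where $\Delta_j := \nabla_{a_j}\ell(\Theta^{(1)}; S_2) - \nabla_{a_j}\ell(\Theta^{(0)}; S_2)$, I would treat the first piece (which is exactly one gradient step on all $N$ samples at the initialization) with the existing tensor machinery, and $\Delta_j$ as a lower-order perturbation. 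An analogous decomposition applies to $\bw_j^{(2)} - \bw_j^{(0)}$.

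The technical heart is bounding $\Delta_j$ and showing it does not spoil the final rate. First, from $a_j^{(0)} \sim \cN(0, 1/m^2)$ together with the scalings $\eta_w = O(1)$, $\eta_a = O(1/m^2)$, I would use standard concentration over the initialization to establish $\|\bw_j^{(1)} - \bw_j^{(0)}\| = \tilde O(B/m)$ and $|a_j^{(1)} - a_j^{(0)}| = \tilde O(B/m^2)$ uniformly in $j$ with high probability, noting that each per-sample gradient with respect to $\bw_j$ carries a factor $a_j = O(1/m)$. Then a first-order Taylor expansion of $\nabla_{a_j}\ell(\cdot; S_2)$ in the parameters, using the $1$-Lipschitzness of $\sigma$ and the control on $|r_i^{(1)} - r_i^{(0)}|$ induced by the same parameter change, yields $|\Delta_j| = \tilde O(B/m)$ per coordinate with high probability.

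Next I would form $\hat{\bT} := -\frac{1}{m\eta_a}\sum_j (a_j^{(2)} - a_j^{(0)}) H_3(\bw_j^{(0)})$ and decompose it as the clean tensor $\frac{1}{m}\sum_j g_N(\bw_j) H_3(\bw_j)$ plus the perturbation $\frac{1}{m}\sum_j \Delta_j H_3(\bw_j)$, where $g_N(\bw_j) := \sum_{i=1}^N r_i^{(0)} \sigma(\bw_j^\top \bx_i)$ is the hypothetical single-step gradient on the whole batch; the matrix $\hat{P}$ is built analogously. By Lemma~\ref{lemma:T bound} (with $N$ replacing $B$), the clean part is within $\tilde O(N^{5/2}/\sqrt{m})$ of the population tensor $\bT_N = \sum_{i=1}^N r_i^{(0)} \lambda\, \bx_i^{\otimes 3}$ after projection by the estimated subspace $V$, and $\hat{P}$ concentrates by Lemma~\ref{lemma:P bound}. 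The perturbation sum can be absorbed through the matrix Bernstein bound (Lemma~\ref{bernstein}) using the smallness of $\Delta_j$ from the previous step. Feeding these concentration bounds into Theorem~\ref{prop:tensorerror} with the joint batch of size $N$ produces the desired RMSE $\tilde O(N\sqrt{d/m})/(\min\{|\nu|,|\lambda|\}\pi_{\min}^2)$.

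The main obstacle I expect is controlling the perturbation tensor $\frac{1}{m}\sum_j \Delta_j H_3(\bw_j^{(0)})$: although each $\Delta_j$ is small, it depends on $\bw_j^{(1)}$ (and hence on $\bw_j^{(0)}$), breaking the clean independence between the Hermite basis and the coefficient that drives the concentration analyses of Lemmas~\ref{lemma:P bound}--\ref{lemma:T bound}. The key step will be to expand $\Delta_j$ as an explicit linear functional of $\bw_j^{(1)} - \bw_j^{(0)}$ and $a_j^{(1)} - a_j^{(0)}$, reduce the resulting expression to a Gaussian polynomial in the initial weights, and verify that its operator/tensor norm is dominated by $\tilde O(N\sqrt{d/m})$ under the assumption $B^4 \le \tilde O(d)$. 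Once this is shown, the remainder is a routine application of the single-step tensor decomposition guarantee already used for Proposition~\ref{prop:twosteps}.
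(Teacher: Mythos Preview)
Your proposal is correct and follows essentially the same route as the paper: decompose the observed two-step update into a single full-batch gradient on $S_1\cup S_2$ at initialization plus the perturbation $\Delta_j=\nabla_{a_j}\ell(\Theta^{(1)};S_2)-\nabla_{a_j}\ell(\Theta^{(0)};S_2)$, handle the clean part with Lemmas~\ref{lemma:P bound} and~\ref{lemma:T bound} (now with $N$ in place of $B$), and control the perturbation via matrix Bernstein before invoking Theorem~\ref{prop:tensorerror}. The dependence obstacle you flag is exactly what the paper already packages in Lemmas~\ref{lemma:P1}--\ref{lemma:T2} (developed for Proposition~\ref{prop:twosteps}), so in the paper's proof the perturbation bound is obtained by summing those lemmas over $i=B+1,\ldots,N$ rather than by a fresh Taylor/Gaussian-polynomial argument.
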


\begin{remark}
    In these propositions we assume that the learning rate for the two layers are different since $a_j\sim\cN(0,\frac{1}{m^2})$ but $\bw_j\sim\cN(0,I_d)$ have different scaling. If we use a single learning rate $\eta$ for all parameters, the results still hold if we assume $\eta=O(\frac{1}{m^2})$.
\end{remark}

To prove the propositions, we have to verify that the parameters after a gradient descent step will not change too much from the original ones. We first introduce some lemmas.

The first lemma show that $\bw^{(1)}$ has a similar effect to $\bw$ in the reconstruction.
\begin{lemma}
    \label{lemma:sigma}
    If the learning rate $\bw$ $\eta_w=O(1)$, we have
    \begin{equation}
        |\sigma(\bw_j^{(1)}\cdot \bx_i)-\sigma(\bw_j^\top \bx_i)|\le\tilde{O}(\frac{B}{m})
    \end{equation}
    for any $i$ and $j$ with probability $1-\frac{\delta}{Bm}$.
\end{lemma}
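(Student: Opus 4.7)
The plan is to show that one gradient step barely moves $\bw_j$, then use $1$-Lipschitzness of $\sigma$ together with $\|\bx_i\|=1$ to transfer the weight-displacement bound into the desired activation-difference bound. Concretely, since $\bw_j^{(1)}=\bw_j-\eta_w\nabla_{\bw_j}\ell$ and $\sigma$ is $1$-Lipschitz,
\begin{equation*}
|\sigma(\bw_j^{(1)}\cdot\bx_i)-\sigma(\bw_j^\top\bx_i)|\le |(\bw_j^{(1)}-\bw_j)^\top\bx_i|\le \eta_w\,\|\nabla_{\bw_j}\ell\|,
\end{equation*}
so the whole problem reduces to showing $\|\nabla_{\bw_j}\ell\|\le\tilde O(B/m)$ with the stated probability (recall $\eta_w=O(1)$).

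Using the explicit formula $\nabla_{\bw_j}\ell=\sum_{i=1}^{B} r_i\, a_j\, \sigma'(\bw_j^\top\bx_i)\,\bx_i$, the triangle inequality yields $\|\nabla_{\bw_j}\ell\|\le |a_j|\cdot \|\sigma'\|_\infty\cdot \sum_{i=1}^{B}|r_i|$. The factor $\|\sigma'\|_\infty\le 1$ comes for free from Assumption~\ref{assump:2layer}. The remaining two factors are controlled by two Gaussian tail arguments: \textbf{(i)} $a_j\sim\cN(0,1/m^2)$ gives $|a_j|\le \tilde O(1/m)$ with probability $1-\delta'$ for any $\delta'$, and \textbf{(ii)} $r_i=2(f(\bx_i;\Theta)-y_i)$ with $|y_i|=1$, so it suffices to show $|f(\bx_i;\Theta)|=\tilde O(1)$. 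For fixed $\bx_i$, $f(\bx_i;\Theta)=\sum_{j=1}^{m}a_j\sigma(\bw_j^\top\bx_i)$ is a sum of $m$ independent mean-zero sub-Gaussian terms (since $a_j$ is independent of $\bw_j$, has variance $1/m^2$, and $\E[\sigma(\bw_j^\top\bx_i)^2]=O(1)$ when $\bw_j^\top\bx_i\sim\cN(0,1)$ by $1$-Lipschitzness). Hence $|f(\bx_i;\Theta)|=\tilde O(1/\sqrt{m})$ and in particular $|r_i|=\tilde O(1)$ with high probability.

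Combining these pieces gives $\|\nabla_{\bw_j}\ell\|\le \tilde O(1/m)\cdot 1\cdot \tilde O(B)=\tilde O(B/m)$, and multiplying by $\eta_w=O(1)$ finishes the bound for a single $(i,j)$. Finally, I take a union bound over the $Bm$ pairs $(i,j)$: the per-pair failure probability in the Gaussian tail bounds of (i)–(ii) can be set to $\delta/(B^2m^2)$, after which the total failure probability is at most $\delta/(Bm)$ as required, and the hidden logarithmic factors $\tilde O$ absorb the resulting $\log(Bm/\delta)$.

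The main obstacle is cleanly handling step (ii): we must control $|f(\bx_i;\Theta)|$ uniformly over the $B$ data points even though $f$ is a random quantity depending on all the $(a_j,\bw_j)$. The natural resolution is Hoeffding/sub-Gaussian concentration for the sum $\sum_j a_j\sigma(\bw_j^\top\bx_i)$ conditional on $\{\bw_j\}$, followed by a union bound over $i\in[B]$, using that $\sigma(\bw_j^\top\bx_i)=O(\mathrm{polylog}\,m)$ with high probability because $\bw_j^\top\bx_i$ is standard Gaussian. All remaining estimates (bounding $|a_j|$, bounding $|\sigma'|$, applying Lipschitzness) are standard and contribute only logarithmic factors absorbed into $\tilde O(\cdot)$.
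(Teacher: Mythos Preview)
Your proposal is correct and follows essentially the same approach as the paper: use the $1$-Lipschitzness of $\sigma$ to reduce to bounding $\eta_w\,|(\nabla_{\bw_j}\ell)^\top\bx_i|$, then control this via $|a_j|=\tilde O(1/m)$, $|\sigma'|\le 1$, and a sum of $B$ terms each $\tilde O(1)$. In fact your write-up is more complete than the paper's: the paper's displayed update $\bw_j^{(1)}=\bw_j-\eta_w\sum_i a_j\sigma'(\bw_j^\top\bx_i)\bx_i$ drops the residual factor $r_i$ from the gradient formula, whereas you correctly carry it through and justify $|r_i|=\tilde O(1)$ via concentration of $f(\bx_i;\Theta)$ (the paper relies on this implicitly, cf.\ Lemma~\ref{rmk:lbscaling}).
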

\begin{proof}
    Note that $\bw_j^{(1)}=\bw_j-{\eta_w}\sum_{i=1}^B a_{j}\sigma'(\bw_j^\top \bx_i)\bx_i$. Then 
    \begin{equation}
    \label{eq:sigma}
        \begin{split}
            &|\sigma(\bw_j^{(1)}\cdot \bx_i)-\sigma(\bw_j^\top \bx_i)|\\
            =&\left|\sigma\left(\bw_j^\top\bx_i-\eta_w\sum_{i=1}^B a_{j}\sigma'(\bw_j^\top \bx_i)\right)-\sigma(\bw_j^\top\bx_i)\right|\\
            \le &\eta_w\sum_{i=1}^Ba_{j}\sigma'(\bw_j^\top\bx_i)\le\tilde{O}(\frac{B}{m})
        \end{split}
    \end{equation}
    with probability $1-\frac{\delta}{Bm}$ for all $i,\ j$. 
\end{proof}

Then we give lemmas showing that $r_i^{(1)}$ is very close to $r_i$ and the difference only changes a little in the reconstruction.
\begin{lemma}
    \label{lemma:z}
    If the learning rate $\eta_w=O(1)$ and $\eta_a=O(\frac{1}{m^2})$, we have
    \begin{equation}
        \left|r_{i}^{(1)}-r_{i}\right|\le \tilde{O}(\frac{B}{m})
    \end{equation}
    with probability $1-\frac{\delta}{Bm}$ for any $i$.
\end{lemma}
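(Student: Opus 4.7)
Since $r_i = 2(f(\bx_i;\Theta) - y_i)$ and $r_i^{(1)} = 2(f(\bx_i;\Theta^{(1)}) - y_i)$, we have $r_i^{(1)} - r_i = 2\bigl(f(\bx_i;\Theta^{(1)}) - f(\bx_i;\Theta)\bigr)$, so it suffices to bound the change in network output at $\bx_i$ after one gradient step. I would decompose
\begin{equation*}
f(\bx_i;\Theta^{(1)}) - f(\bx_i;\Theta) = \underbrace{\sum_{j=1}^m (a_j^{(1)} - a_j)\,\sigma(\bw_j^{(1)\top}\bx_i)}_{\text{T}_1} + \underbrace{\sum_{j=1}^m a_j\,\bigl[\sigma(\bw_j^{(1)\top}\bx_i) - \sigma(\bw_j^\top\bx_i)\bigr]}_{\text{T}_2},
\end{equation*}
and bound $\text{T}_1,\text{T}_2$ separately using the scaling of the two learning rates.

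For $\text{T}_1$, the update $a_j^{(1)} - a_j = -\eta_a\sum_{i'=1}^B r_{i'}\sigma(\bw_j^\top\bx_{i'})$. Using the Gaussian concentration $|\bw_j^\top\bx_i| = \tilde O(1)$ (since $\|\bx_i\|=1$) and the 1-Lipschitzness of $\sigma$, I get $|\sigma(\bw_j^\top\bx_{i'})| = \tilde O(1)$ for all $i',j$ by a union bound, and similarly $|\sigma(\bw_j^{(1)\top}\bx_i)| = \tilde O(1)$. The residuals satisfy $|r_{i'}| \le \tilde O(1)$ with high probability (because $f(\bx_{i'};\Theta) = \sum_j a_j\sigma(\bw_j^\top\bx_{i'})$ is a sum of $m$ independent terms of variance $\tilde O(1/m^2)$, so $|f| = \tilde O(1/\sqrt m)$). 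Hence $|a_j^{(1)} - a_j| \le \eta_a B\cdot \tilde O(1) = \tilde O(B/m^2)$, and summing over $j$ gives $|\text{T}_1| \le m\cdot \tilde O(B/m^2)\cdot \tilde O(1) = \tilde O(B/m)$.

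For $\text{T}_2$, I invoke Lemma~\ref{lemma:sigma} directly: $|\sigma(\bw_j^{(1)\top}\bx_i) - \sigma(\bw_j^\top\bx_i)| \le \tilde O(B/m)$ for all $i,j$ with the stated probability. Combining with $|a_j| = \tilde O(1/m)$ (from $a_j\sim\cN(0,1/m^2)$) and summing over $j$ gives $|\text{T}_2| \le m\cdot \tilde O(1/m)\cdot \tilde O(B/m) = \tilde O(B/m)$. Adding the two contributions yields $|f(\bx_i;\Theta^{(1)}) - f(\bx_i;\Theta)| \le \tilde O(B/m)$ and therefore $|r_i^{(1)} - r_i| \le \tilde O(B/m)$.

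The only real bookkeeping issue is the union-bound over the $Bm$ pairs $(i,j)$: each individual Gaussian-tail event (on $\bw_j^\top\bx_{i'}$, on $a_j$, and the event behind Lemma~\ref{lemma:sigma}) holds except with probability $\tilde O(\delta/(Bm))$, and swallowing the resulting logarithmic factors into the $\tilde O(\cdot)$ notation gives the claimed probability $1 - \delta/(Bm)$ for a fixed index $i$. I do not anticipate any genuine technical obstacle; the argument is essentially one step of a propagation-of-perturbations analysis, and the two-scale learning rate assumption $\eta_a = O(1/m^2)$, $\eta_w = O(1)$ is exactly what is needed so that both $\text{T}_1$ and $\text{T}_2$ come out with the same $\tilde O(B/m)$ order.
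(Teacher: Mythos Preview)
Your proposal is correct and follows essentially the same decomposition as the paper: split the change in $f(\bx_i)$ into the contribution from the $a_j$-update (your $\text{T}_1$) and the contribution from the $\bw_j$-update (your $\text{T}_2$), then bound each using $\eta_a=O(1/m^2)$ together with Lemma~\ref{lemma:sigma} and the $\tilde O(1/m)$ scaling of $a_j$. If anything, your treatment of $\text{T}_1$ is slightly cleaner than the paper's, which further splits $\sigma(\bw_j^{(1)\top}\bx_i)$ into $\sigma(\bw_j^{\top}\bx_i)$ plus a remainder rather than bounding it directly by $\tilde O(1)$ as you do.
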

\begin{proof}
    We have $r_i=\sum_{j=1}^m a_{j}\sigma(\bw_j^\top\bx_i)-y_i$ and $r_i^{(1)}=\sum_{j=1}^m a_{j}^{(1)}\sigma(\bw_j^{(1)}\cdot\bx_i)-y_i$. Note that $a_{kj}^{(1)}=a_{kj}-\eta_a\sum_{i=1}^B\sigma(\bw_j^\top\bx_i)$. Then by Lemma \ref{lemma:sigma} we have
    \begin{equation}
        \begin{split}
            \left|r_i^{(1)}-r_i\right|&=\left|\sum_{j=1}^m a_{j}\sigma(\bw_j^\top\bx_i)-\sum_{j=1}^m \left(a_{j}-{\eta_a}\sum_{i=1}^B\sigma(\bw_j^\top\bx_i)\right)\sigma(\bw_j^{(1)}\cdot\bx_i)\right|\\
            &\le \sum_{j=1}^m a_{j}\left|\sigma(\bw_j^\top\bx_i)-\sigma(\bw_j^{(1)}\cdot\bx_i)\right|+\eta_a\sum_{j=1}^m\sum_{i=1}^B(\sigma(\bw_j^\top\bx_i))^2\\
            &\quad\quad\quad\quad+{\eta_a}\sum_{j=1}^m\sum_{i=1}^B|\sigma(\bw_j^\top\bx_i)||\sigma(\bw_j^\top\bx_i)-\sigma(\bw_j^{(1)}\cdot\bx_i)|\\
            &\le \tilde{O}(\frac{B}{m})+\tilde{O}(\frac{B\log(2Bm^2/\delta)}{m})+\tilde{O}(\frac{B\log(2Bm^2/\delta)}{m^2})\le \tilde{O}(\frac{B}{m})
        \end{split}
    \end{equation}
    with probability $1-\frac{\delta}{Bm}$.
\end{proof}

\begin{corollary}
\label{cor:r_i}
    If the learning rate $\eta_w=O(1)$ and $\eta_a=O(\frac{1}{m^2})$, we have $r_i^{(1)}=\tilde{O}(1)$ with probability $1-\frac{\delta}{Bm}$.
\end{corollary}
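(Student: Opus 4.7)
\begin{proofsk}
The plan is to bound $r_i^{(1)}$ by first controlling $r_i$ at initialization and then transferring the bound via Lemma \ref{lemma:z}. By definition, $r_i = \sum_{j=1}^m a_j \sigma(\bw_j^\top \bx_i) - y_i$, and $|y_i|\le 1$ by assumption, so it suffices to show that $S_i := \sum_{j=1}^m a_j \sigma(\bw_j^\top \bx_i) = \tilde O(1)$ with high probability.

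For $S_i$, I would first note that conditional on $\{\bw_j\}$, $S_i$ is a zero-mean Gaussian in the $a_j$'s (since $a_j\sim\cN(0,1/m^2)$ i.i.d.) with variance $\frac{1}{m^2}\sum_{j=1}^m \sigma(\bw_j^\top \bx_i)^2$. Using that $\sigma$ is $1$-Lipschitz and $\|\bx_i\|=1$, each $\bw_j^\top \bx_i\sim\cN(0,1)$, so $|\sigma(\bw_j^\top \bx_i)|\le |\sigma(0)|+|\bw_j^\top\bx_i| = \tilde O(1)$ with probability at least $1-\delta/(Bm)$ via a standard Gaussian tail bound and union over $j$. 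Hence the conditional variance is $\tilde O(1/m)$, and a second application of a Gaussian tail bound on $S_i\mid\{\bw_j\}$ gives $|S_i|\le \tilde O(1/\sqrt m)\le \tilde O(1)$ with the stated probability. This yields $|r_i|\le \tilde O(1)$.

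Next, I would invoke Lemma \ref{lemma:z}, which already gives $|r_i^{(1)}-r_i|\le \tilde O(B/m)$ with probability at least $1-\delta/(Bm)$ under the stated learning rate conditions. Since $B/m\le \tilde O(1)$ in the regime of interest, the triangle inequality yields
\[
|r_i^{(1)}| \le |r_i| + |r_i^{(1)}-r_i| \le \tilde O(1),
\]
which is the claim. The only minor bookkeeping obstacle is to make sure all the high-probability events (Gaussianity of $\bw_j^\top \bx_i$, smallness of $|a_j|$, the event underlying Lemma \ref{lemma:z}) are absorbed into a single $1-\delta/(Bm)$ statement by adjusting logarithmic constants inside the $\tilde O(\cdot)$; no genuine difficulty arises since all bounds are already logarithmic in $1/\delta$.
\end{proofsk}
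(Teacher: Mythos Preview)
Your proposal is correct and follows essentially the same approach as the paper: first establish $r_i=\tilde O(1)$ with high probability, then apply Lemma~\ref{lemma:z} and the triangle inequality. The paper's proof is terser (it simply asserts $r_i=\tilde O(1)$, which also follows from Lemma~\ref{rmk:lbscaling}), whereas you spell out the sub-Gaussian argument for $\sum_j a_j\sigma(\bw_j^\top\bx_i)$ explicitly, but the logic is the same.
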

\begin{proof}
    We have $r_i=\tilde{O}(1)$ with probability $1-\frac{\delta}{Bm}$. Then the result holds directly by Lemma \ref{lemma:z}.
\end{proof}

\begin{lemma}
\label{lemma:P1}
If the learning rate $\eta_w=O(1)$ and $\eta_a=O(\frac{1}{m^2})$, we have
    \begin{equation}
        \left\|\frac{1}{m}\sum_{j=1}^m r_i^{(1)}\left|\sigma(\bw_j^{(1)}\cdot\bx_i)-\sigma(\bw_j^\top\bx_i)\right|(\bw_j\bw_j^\top-I)\right\|\le\tilde{O}(\frac{Bd}{m})
    \end{equation}
    with probability $1-\frac{\delta}{2B}$ for any $i$.
\end{lemma}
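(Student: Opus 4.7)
The plan is to bound the displayed operator norm directly via the triangle inequality, since each ingredient in the sum admits a clean pointwise estimate in $j$. The key structural observation is that $r_i^{(1)}$ does not depend on the summation index $j$, so it can be pulled out of the sum. This reduces the claim to controlling
\[
|r_i^{(1)}|\cdot\frac{1}{m}\sum_{j=1}^m \left|\sigma(\bw_j^{(1)}\cdot \bx_i) - \sigma(\bw_j^\top \bx_i)\right|\cdot\left\|\bw_j\bw_j^\top - I\right\|,
\]
after which it suffices to bound each of the three factors separately and then multiply.

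Each factor is handled by ingredients already in the excerpt, together with standard Gaussian concentration. By Corollary \ref{cor:r_i}, $|r_i^{(1)}| = \tilde{O}(1)$ with high probability. By Lemma \ref{lemma:sigma} combined with a union bound over $j \in [m]$, $\max_j |\sigma(\bw_j^{(1)}\cdot \bx_i) - \sigma(\bw_j^\top \bx_i)| \le \tilde{O}(B/m)$ with high probability (rescaling the $\delta$ parameter in Lemma \ref{lemma:sigma} by an $m$-factor costs only polylog terms absorbed into the $\tilde{O}$). Finally, since $\bw_j\bw_j^\top - I$ has eigenvalues $\|\bw_j\|^2 - 1$ along $\bw_j$ and $-1$ on its orthogonal complement, $\|\bw_j\bw_j^\top - I\| \le \|\bw_j\|^2 + 1$, and a standard chi-squared tail bound (e.g., Laurent--Massart) combined with a union bound over $j$ yields $\max_j \|\bw_j\|^2 \le \tilde{O}(d)$. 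Multiplying these three estimates delivers $\tilde{O}(1)\cdot \tilde{O}(B/m) \cdot \tilde{O}(d) = \tilde{O}(Bd/m)$, and a final union bound over the three events keeps the total failure probability comfortably within the target $\delta/(2B)$.

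The main obstacle I anticipate is conceptual rather than algebraic: both $|\sigma(\bw_j^{(1)}\cdot\bx_i) - \sigma(\bw_j^\top\bx_i)|$ and $\bw_j\bw_j^\top - I$ are functions of the same $\bw_j$ (and $\bw_j^{(1)}$ itself depends on $\bw_j$ through the first gradient step), so a matrix-Bernstein attack applied directly to the full sum cannot easily leverage cross-$j$ independence in the usual way, and the $c_j := |\sigma(\bw_j^{(1)}\cdot\bx_i) - \sigma(\bw_j^\top\bx_i)|$ weights do not come with a mean-zero structure that would invite cancellation. The triangle-inequality route taken above deliberately sidesteps this difficulty at the cost of the crude pointwise bound $\|\bw_j\bw_j^\top - I\| \le \tilde{O}(d)$ instead of the sharper $\sqrt{d/m}$-type concentration that holds for the normalized sum $\tfrac{1}{m}\sum_j(\bw_j\bw_j^\top - I)$. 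That looser estimate is exactly what the target $\tilde{O}(Bd/m)$ permits, so no further cleverness is required.
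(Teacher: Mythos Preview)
Your proof is correct, and it takes a genuinely more elementary route than the paper. The paper defines $Z_j = r_i^{(1)}|\sigma(\bw_j^{(1)}\cdot\bx_i)-\sigma(\bw_j^\top\bx_i)|(\bw_j\bw_j^\top-I)$ and splits $\|\tfrac{1}{m}\sum_j Z_j\| \le \|\E Z_j\| + \|\tfrac{1}{m}\sum_j Z_j - \E Z_j\|$, bounding the first piece directly by $\tilde O(Bd/m)$ and controlling the fluctuation piece via the matrix Bernstein inequality (Lemma~\ref{bernstein}) after verifying its three conditions. Since the mean term already sits at the target $\tilde O(Bd/m)$, the Bernstein step yields no actual improvement in the final rate; it only certifies that the fluctuation is of lower order. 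Your approach recognizes this and short-circuits the argument: pulling out the $j$-independent scalar $r_i^{(1)}$, then applying the triangle inequality termwise with the pointwise bounds from Lemma~\ref{lemma:sigma}, Corollary~\ref{cor:r_i}, and chi-squared concentration on $\|\bw_j\|^2$, lands directly on $\tilde O(Bd/m)$. What you gain is simplicity and robustness---in particular you sidestep the delicate point (which you correctly flag) that the $Z_j$'s are not genuinely i.i.d.\ because $r_i^{(1)}$ and $\bw_j^{(1)}$ depend on all the weights, making the matrix-Bernstein hypotheses awkward to verify cleanly. What the paper's route would buy, in a setting where the mean were smaller, is the sharper $m^{-1/2}$-type concentration for the fluctuation; here that sharpness is simply not needed.
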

\begin{proof}
    We define $Z_j=r_i^{(1)}|\sigma(\bw_j^{(1)}\cdot\bx_i)-\sigma(\bw_j^\top\bx_i)|(\bw_j\bw_j^\top-I)$, then we have
    $$
    \|\frac{1}{m}\sum_{j=1}^m Z_j\|\le \|\E Z_j\|+\|\frac{1}{m}\sum_{j=1}^m Z_j-\E Z_j\|.
    $$

    We first bound $\|\E Z_j\|$.
    By Lemma \ref{lemma:sigma} and Corollary \ref{cor:r_i} we have
    \begin{equation}
        \label{eq:P mean norm}
    \|\E Z_j\|\lesssim\E |\sigma(\bw_j^{(1)}\cdot \bx_i)-\sigma(\bw_j^\top \bx_i)|\|\bw_j\bw_j^\top-I\|\lesssim\frac{Bd}{m}.
    \end{equation}

    Next we bound $\|\frac{1}{m}\sum_{j=1}^m Z_j-\E Z_j\|$ with matrix Bernstein inequality. We first check the conditions of Theorem \ref{bernstein} with Lemma \ref{lemma:sigma} and Corollary \ref{cor:r_i}

    (I) We first bound the norm of $Z_j$. BWe have
    $$
    \|Z_j\|\le 2|\sigma(\bw_j^{(1)}\cdot \bx_i)-\sigma(\bw_j^\top \bx_i)|(\|\bw_j\|^2+1)\lesssim \frac{Bd\log(4Bm/\delta)}{m}
    $$
    with probability $1-\frac{\delta}{2Bm}$.

    (II) We have
\begin{align*}
    \max&\left\{\left\|\mathbb{E}\left[Z^\top Z\right]\right\|,\left\|\mathbb{E}\left[Z Z^\top\right]\right\|\right\}=\left\|\mathbb{E}\left[Z^2\right]\right\|\\
        &\lesssim \E |\sigma(\bw^{(1)}\cdot \bx_i)-\sigma(\bw^\top \bx_i)|^2\|(\bw\bw^\top-I)^2\|\lesssim \frac{B^2d^2}{m^2}.
\end{align*}

(III) For $\max_{\|\ba\|=\|\bb\|=1}(\mathbb{E}(\ba^\top Z \bb)^2)^{1/2}$, it reaches the maximal when $\ba=\bb$ since $Z$ is a symmetric matrix. Thus, we have
$$
    \mathbb{E}(\ba^\top Z \ba)^2=\mathbb{E}|\sigma(\bw^{(1)}\cdot \bx_i)-\sigma(\bw^\top \bx_i)|^2(\ba^\top(\bw\bw^\top-I)\ba)^2\lesssim\frac{B^2d^2}{m^2}.
$$
Then $\max_{\|\ba\|=\|\bb\|=1}(\mathbb{E}(\ba^\top Z \bb)^2)^{1/2}\lesssim\frac{Bd}{m}$.

Moreover, $\|\E Z_j\|\lesssim\frac{Bd}{m}$ by Eq. \eqref{eq:P mean norm}. Then by Theorem \ref{bernstein}, we have
        $$
        \|\frac{1}{m}\sum_{j=1}^m Z_j-\E Z_j\|\le\tilde{O}(\frac{Bd}{m})
        $$
        with probability $1-\frac{\delta}{2B}$.
\end{proof}

% \begin{lemma}
% \label{lemma:P1}
% If the learning rate $\eta=O(1)$ and $m\ge\Omega(K^2B^2d)$ we have
%     \begin{equation}
%     \begin{split}
%         &\left\|\frac{1}{m}\sum_{j=1}^m\sum_{i=1}^B\left[(p^{(1)}_{1i}-y_{1i})+(p^{(1)}_{2i}-y_{2i})\right]\left|\sigma(\bw_j^{(1)}\cdot\bx_i)-\sigma(\bw_j^\top\bx_i)\right|\left(\bw_j\bw_j^\top-I\right)\right\|\\
%         &\quad\quad\quad\quad\quad\quad\le\tilde{O}(B\sqrt{\frac{d}{m}})
%     \end{split}
%     \end{equation}
%     with probability $1-\frac{\delta}{2}$.
% \end{lemma}
% \begin{proof}
%     The result comes directly after Lemma \ref{lemma:P1 pre}.
% \end{proof}
\begin{lemma}
\label{lemma:T1}
If the learning rate $\eta_w=O(1)$ and $\eta_a=O(\frac{1}{m^2})$, we have
    \begin{equation}
        \left\|\frac{1}{m}\sum_{j=1}^mr_i^{(1)}\left|\sigma(\bw_j^{(1)}\cdot\bx_i)-\sigma(\bw_j^\top\bx_i)\right|(\bw_j^{\otimes 3}-\bw_j\tilde{\otimes}I)(V,V,V)\right\|\le\tilde{O}(\frac{B^{5/2}}{m})
    \end{equation}
    with probability $1-\frac{\delta}{2B}$ for any $i$.
\end{lemma}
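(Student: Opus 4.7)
The plan is to mirror the proof of Lemma \ref{lemma:P1} but for a 3-tensor rather than a matrix, exploiting the fact that contraction by the $d \times B$ orthonormal matrix $V$ reduces the effective dimension from $d$ to $B$. Define
\[
Z_j := r_i^{(1)} \left|\sigma(\bw_j^{(1)}\cdot\bx_i)-\sigma(\bw_j^\top\bx_i)\right|\, H_3(\bw_j)(V,V,V),
\]
where $H_3(\bw)=\bw^{\otimes 3}-\bw\tilde{\otimes}I$. I would split
\[
\Bigl\|\tfrac{1}{m}\sum_{j=1}^m Z_j\Bigr\| \le \|\mathbb{E}[Z_j]\| + \Bigl\|\tfrac{1}{m}\sum_{j=1}^m Z_j - \mathbb{E}[Z_j]\Bigr\|,
\]
and bound each term by $\tilde O(B^{5/2}/m)$.

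The first preliminary step is a direct algebraic computation showing that since $V^\top V = I_B$, we have $H_3(\bw_j)(V,V,V) = \bu^{\otimes 3} - \bu\tilde{\otimes}I_B$ with $\bu = V^\top\bw_j$, which is distributed as $\cN(0,I_B)$. Consequently $\|H_3(\bw_j)(V,V,V)\|_{\mathrm{op}} \le \|\bu\|^3 + 3\|\bu\| \lesssim \tilde O(B^{3/2})$ with high probability. This is the key dimensional reduction that replaces the $d$ appearing in Lemma \ref{lemma:P1} with $B$.

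Next, I would bound the mean term: combining Lemma \ref{lemma:sigma} (giving $|\sigma(\bw_j^{(1)}\cdot\bx_i)-\sigma(\bw_j^\top\bx_i)| \le \tilde O(B/m)$), Corollary \ref{cor:r_i} (giving $r_i^{(1)} = \tilde O(1)$), and the preceding $\tilde O(B^{3/2})$ bound, a direct Hölder estimate yields $\|\mathbb{E}[Z_j]\| \le \mathbb{E}\|Z_j\|_{\mathrm{op}} \lesssim \tilde O(B^{5/2}/m)$. Care is needed because the Lemma \ref{lemma:sigma} bound holds only on a high-probability event, but standard polynomial moment bounds on $\|H_3(\bw)(V,V,V)\|_{\mathrm{op}}$ control the tail contribution.

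For the concentration step I would apply matrix Bernstein (Lemma \ref{bernstein}) to the mode-$1$ unfolding of $Z_j$ as a $B \times B^2$ matrix, since the tensor operator norm is controlled by the operator norm of its unfolding. The three conditions become: (I) $\|Z_j\|_{\mathrm{op}} \le \tilde O(B^{5/2}/m)$ with probability $1-\delta/(Bm)$, by the same high-probability arguments as in the mean bound; (II) $\|\mathbb{E}[Z_jZ_j^\top]\|$ and $\|\mathbb{E}[Z_j^\top Z_j]\|$ are bounded by $\tilde O(B^5/m^2)$, since each factor inside the expectation contributes $\tilde O(B^2/m^2)$ from the squared $\sigma$-difference and $\tilde O(B^3)$ from $\|H_3(V,V,V)\|_{\mathrm{op}}^2$; (III) the weak variance is at most $\tilde O(B^{5/2}/m)$. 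Feeding these into Lemma \ref{bernstein} produces $\|\tfrac{1}{m}\sum_j Z_j - \mathbb{E}[Z_j]\| \lesssim \tilde O(\sqrt{B^5/m^3}) + \tilde O(B^{5/2}/m^2) = \tilde O(B^{5/2}/m)$, which combined with the mean bound yields the claim with probability $1-\delta/(2B)$.

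The main obstacle I anticipate is the second-moment verification (II) and the weak-variance bound (III), since they require carefully computing expectations of products of cubic functions of $\bw$ after projection by $V$, while keeping track of which terms are multiplied by the small Lipschitz factor $\tilde O(B/m)$ and which are not. A subsidiary subtlety is that $V$ is itself estimated from the gradient and hence data-dependent; as in the proof of Lemma \ref{lemma:T bound}, this can be handled by conditioning on the event $\|VV^\top - UU^\top\| \le 1/4$ and using the independence of $\bw_j$ from the true span $U$.
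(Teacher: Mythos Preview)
Your proposal is correct and follows essentially the same route as the paper: flatten the contracted third-order tensor to a matrix, split into mean plus deviation, bound the mean via Lemma~\ref{lemma:sigma} and Corollary~\ref{cor:r_i}, and apply the unbounded matrix Bernstein inequality (Lemma~\ref{bernstein}) with the three conditions yielding $\tilde O(B^{5/2}/m)$, $\tilde O(B^5/m^2)$, and $\tilde O(B^{5/2}/m)$ respectively. Your explicit identification $H_3(\bw_j)(V,V,V)=\bu^{\otimes 3}-\bu\tilde\otimes I_B$ with $\bu=V^\top\bw_j$ and your remarks on the tail contribution to the mean and on the data-dependence of $V$ are more careful than the paper's own write-up, which simply bounds by $\|V^\top\bw_j\|^3$ and treats $V$ as fixed, but these are refinements of the same argument rather than a different approach.
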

\begin{proof}
    We define $\tilde{\bZ}_j=r_i^{(1)}|\sigma(\bw_j^{(1)}\cdot\bx_i)-\sigma(\bw_j^\top\bx_i)|(\bw_j^{\otimes 3}-\bw_j\tilde{\otimes}I)(V,V,V)$ and $Z_j$ be the flatten of $\tilde{\bZ}_j$ along the first dimension. Then we have
    $$
    \|\frac{1}{m}\sum_{j=1}^m \tilde{\bZ}_j\|\le \|\frac{1}{m}\sum_{j=1}^m Z_j\|\le \|\E Z_j\|+\|\frac{1}{m}\sum_{j=1}^m Z_j-\E Z_j\|.
    $$

    We first bound $\E Z_j$. By Lemma \ref{lemma:sigma} and Corollary \ref{cor:r_i}, we have
    \begin{equation}
        \label{eq:T mean norm}
        \|\E Z_j\| \lesssim \E |\sigma(\bw_j^{(1)}\cdot \bx_i)-\sigma(\bw_j^\top \bx_i)|\|V^\top \bw_j\|^3\lesssim\frac{B^{5/2}}{m}.
    \end{equation}

    Next we bound $\|\frac{1}{m}\sum_{j=1}^m Z_j-\E Z_j\|$ with matrix Bernstein inequality. We first check the conditions of Theorem \ref{bernstein} with Lemma \ref{lemma:sigma} and Corollary \ref{cor:r_i}:

    (I) We first bound the norm of $Z_j$. We have
    $$
    \|Z_j\|\lesssim 2|\sigma(\bw_j^{(1)}\cdot \bx_i)-\sigma(\bw_j^\top \bx_i)|\|V^\top \bw_j\|^3\le\tilde{O}(\frac{B^{5/2}}{m})
    $$
    with probability $1-\frac{\delta}{2Bm}$.

    (II) We have
\begin{align*}
    \max&\left\{\left\|\mathbb{E}\left[Z^\top Z\right]\right\|,\left\|\mathbb{E}\left[Z Z^\top\right]\right\|\right\}\le\mathbb{E}\left[\left\|Z\right\|^2\right]\\
        &\lesssim \E |\sigma(\bw^{(1)}\cdot \bx_i)-\sigma(\bw^\top \bx_i)|^2\|V^\top\bw\|^6\lesssim \frac{B^5}{m^2}.
\end{align*}

(III) We have
$$
    \max_{\|\ba\|=\|\bb\|=1}(\mathbb{E}(\ba^\top Z \bb)^2)^{1/2}\le\left(\mathbb{E}\left[\left\|Z\right\|^2\right]\right)^{1/2}\lesssim\frac{B^{5/2}}{m}.
$$

Moreover, $\|\E Z_j\|\lesssim\frac{B^{5/2}}{m}$ by Eq. \eqref{eq:T mean norm}. Then by Theorem \ref{bernstein}, we have
        $$
        \|\frac{1}{m}\sum_{j=1}^m Z_j-\E Z_j\|\le\tilde{O}(\frac{B^{5/2}}{m})
        $$
        with probability $1-\frac{\delta}{2B}$.
\end{proof}

\begin{lemma}
    \label{lemma:P2}
    If the learning rate $\eta_w=O(1)$ and $\eta_a=O(\frac{1}{m^2})$, and $m\ge \tilde{\Omega}(B)$, we have
    \begin{equation}
        \left\|\frac{1}{m}\sum_{j=1}^m\left(r_i^{(1)}-r_i\right)\sigma(\bw_j^\top\bx_i)(\bw_j\bw_j^\top-I)\right\|\le \tilde{O}(\frac{Bd}{m})
    \end{equation}
    with probability $1-\frac{\delta}{2B}$ for any $i$.
\end{lemma}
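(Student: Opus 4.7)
The central observation is that the scalar $r_i^{(1)} - r_i$ does not depend on the summation index $j$, so I would first pull it out of the sum and rewrite the quantity of interest as
$$\frac{1}{m}\sum_{j=1}^m (r_i^{(1)} - r_i)\,\sigma(\bw_j^\top \bx_i)(\bw_j \bw_j^\top - I) \;=\; (r_i^{(1)} - r_i) \cdot \frac{1}{m}\sum_{j=1}^m \sigma(\bw_j^\top \bx_i)(\bw_j \bw_j^\top - I).$$
This reduces the task to controlling a scalar factor and a matrix factor separately, which sidesteps the difficulty that $r_i^{(1)}$ depends on the entire collection $\{\bw_{j'}, a_{j'}\}_{j'=1}^m$ and therefore breaks the independence that one would want for a direct matrix-Bernstein argument on the original sum.

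For the scalar factor, I would invoke Lemma \ref{lemma:z}, which under the learning-rate assumptions $\eta_w = O(1)$ and $\eta_a = O(1/m^2)$ gives $|r_i^{(1)} - r_i| \le \tilde{O}(B/m)$ with probability $1 - \delta/(Bm)$. For the matrix factor, I would use a crude triangle-inequality bound: by standard Gaussian concentration (with $\|\bx_i\|=1$) and a union bound over $j \in [m]$, each $|\sigma(\bw_j^\top \bx_i)| \le \tilde{O}(1)$ (using $\sigma$ being $1$-Lipschitz and $\bw_j^\top \bx_i \sim \cN(0,1)$), and $\|\bw_j \bw_j^\top - I\| \le \|\bw_j\|^2 + 1 \le \tilde{O}(d)$, so that
$$\left\|\frac{1}{m}\sum_{j=1}^m \sigma(\bw_j^\top \bx_i)(\bw_j \bw_j^\top - I)\right\| \;\le\; \frac{1}{m}\sum_{j=1}^m |\sigma(\bw_j^\top \bx_i)|\cdot\|\bw_j \bw_j^\top - I\| \;\le\; \tilde{O}(d).$$
A sharper matrix-Bernstein argument in the style of Lemma \ref{lemma:P1}, combined with Stein's lemma (Lemma \ref{lemma:stein}) showing $\E[\sigma(\bw^\top \bx_i)(\bw\bw^\top - I)] = \E[\sigma''(\bw^\top\bx_i)]\bx_i\bx_i^\top$ has operator norm $O(1)$, would deliver a tighter $\tilde{O}(1)$ bound whenever $m \gtrsim d$; but the crude $\tilde{O}(d)$ is already sufficient for the target rate.

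Combining the two bounds via a union bound over the corresponding high-probability events yields
$$\left\|\frac{1}{m}\sum_{j=1}^m (r_i^{(1)} - r_i)\sigma(\bw_j^\top \bx_i)(\bw_j \bw_j^\top - I)\right\| \;\le\; \tilde{O}\!\left(\frac{B}{m}\right) \cdot \tilde{O}(d) \;=\; \tilde{O}\!\left(\frac{Bd}{m}\right),$$
with probability at least $1 - \delta/(2B)$ after absorbing polylogarithmic factors and adjusting constants. The main subtlety, and the only real obstacle, is the correlation introduced by $r_i^{(1)}$ depending on all weights; the factoring trick above removes this obstacle, after which the argument reduces to routine scalar and matrix concentration on a single good event secured by a union bound, provided $m \gtrsim \tilde{\Omega}(B)$ so that the high-probability bound from Lemma \ref{lemma:z} is meaningful.
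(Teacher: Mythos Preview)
Your proposal is correct and, in fact, takes a cleaner route than the paper. The paper does \emph{not} factor out the scalar $r_i^{(1)}-r_i$; instead it sets $Z_j = (r_i^{(1)}-r_i)\,\sigma(\bw_j^\top\bx_i)(\bw_j\bw_j^\top - I)$ and applies the matrix Bernstein inequality (Lemma~\ref{bernstein}) directly to $\frac{1}{m}\sum_j Z_j$, checking the three conditions with the help of Lemma~\ref{lemma:z} and Cauchy--Schwarz to control the moments of $|r_i^{(1)}-r_i|$. Your factoring-out trick buys you something the paper's argument lacks: since $r_i^{(1)}-r_i$ depends on the entire collection $\{(\bw_{j'},a_{j'})\}_{j'=1}^m$, the $Z_j$'s in the paper's decomposition are \emph{not} independent, so a direct appeal to matrix Bernstein is not fully justified as written. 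Your approach removes the common scalar first and then bounds the remaining sum of genuinely i.i.d.\ summands (either crudely via the triangle inequality, or sharply via Bernstein), which is both simpler and technically sound. The trade-off is only cosmetic: the paper's presentation parallels the adjacent Lemmas~\ref{lemma:P1} and~\ref{lemma:T1} more closely, whereas your version is shorter and self-contained.
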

\begin{proof}
    We define $Z_j=(r_i^{(1)}-r_i)\sigma(\bw_j^\top\bx_i)(\bw_j\bw_j^\top-I)$, then we have
    $$
    \|\frac{1}{m}\sum_{j=1}^m Z_j\|\le \|\E Z_j\|+\|\frac{1}{m}\sum_{j=1}^m Z_j-\E Z_j\|.
    $$

    We first bound $\|\E Z_j\|$. By Lemma \ref{lemma:z} we have
    \begin{equation}
        \label{eq:P mean norm 2}
        \begin{split}
    \|\E Z_j\|&\le\E |r_i^{(1)}-r_i| \|\sigma(\bw_j^\top \bx_i)(\bw_j\bw_j^\top-I)\|\\
    &\le \left(\E \left[|r_i^{(1)}-r_i|^2\right] \right)^{1/2} \left(\E \|\sigma(\bw_j^\top \bx_i)(\bw_j\bw_j^\top-I)\|^2\right)^{1/2}\\
    &\le \tilde{O}(\frac{Bd}{m}).
        \end{split}
    \end{equation}

    Next we bound $\|\frac{1}{m}\sum_{j=1}^m Z_j-\E Z_j\|$ with matrix Bernstein inequality. We first check the conditions of Theorem \ref{bernstein}:

    (I) We first bound the norm of $Z_j$. By Lemma \ref{lemma:z}, we have
    $$
    \|Z_j\|\le |r_i^{(1)}-r_i||\sigma(\bw_j^\top \bx_i)|(\|\bw_j\|^2+1)\le\tilde{O}(\frac{Bd}{m})
    $$
    with probability $1-\frac{\delta}{2Bm}$.

    (II) By Lemma \ref{lemma:z} we have
\begin{align*}
    \max&\left\{\left\|\mathbb{E}\left[Z^\top Z\right]\right\|,\left\|\mathbb{E}\left[Z Z^\top\right]\right\|\right\}=\left\|\mathbb{E}\left[Z^2\right]\right\|\\
    &\le \left(\E \left[|r_i^{(1)}-r_i|^4\right]\right)^{1/2} \left(\E\left\|\sigma(\bw^\top \bx_i)(\bw\bw^\top-I)\right\|^4\right)^{1/2}\\
    &\le\tilde{O}(\frac{B^2d^2}{m^2}).
\end{align*}

(III) For $\max_{\|\ba\|=\|\bb\|=1}(\mathbb{E}(\ba^\top Z \bb)^2)^{1/2}$, it reaches the maximal when $\ba=\bb$ since $Z$ is a symmetric matrix. Thus, we have
\begin{align*}
    \mathbb{E}(\ba^\top Z \ba)^2&=\mathbb{E}|r_i^{(1)}-r_i|^2(\ba^\top\sigma(\bw^\top\bx_i)(\bw\bw^\top-I)\ba)^2\\
    &\le \left(\E \left[|r_i^{(1)}-r_i|^4\right]\right)^{1/2} \left(\E\left\|\sigma(\bw^\top \bx_i)(\bw\bw^\top-I)\right\|^4\right)^{1/2}\\
    &\lesssim\frac{B^2d^2}{m^2}.
\end{align*}
Then $\max_{\|\ba\|=\|\bb\|=1}(\mathbb{E}(\ba^\top Z \bb)^2)^{1/2}\lesssim\frac{Bd}{m}$.

Moreover, $\|\E Z_j\|\lesssim\frac{Bd}{m}$ by Eq. \eqref{eq:P mean norm 2}. Then by Theorem \ref{bernstein}, we have
        $$
        \|\frac{1}{m}\sum_{j=1}^m Z_j-\E Z_j\|\le\tilde{O}(\frac{Bd}{m})
        $$
        with probability $1-\frac{\delta}{2B}$.
\end{proof}

\begin{lemma}
    \label{lemma:T2}
    If the learning rate $\eta_w=O(1)$ and $\eta_a=O(\frac{1}{m^2})$, and $m\ge \tilde{\Omega}(B)$, we have
    \begin{equation}
        \left\|\frac{1}{m}\sum_{j=1}^m\left(r_i^{(1)}-r_i\right)\sigma(\bw_j^\top\bx_i)(\bw_j^{\otimes3}-\bw_j\tilde{\otimes}I)(V,V,V)\right\|\le \tilde{O}(\frac{B^{5/2}}{m})
    \end{equation}
    with probability $1-\frac{\delta}{2B}$ for any $i$.
\end{lemma}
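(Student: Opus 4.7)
\begin{proofsk}
The plan is to combine the two techniques already displayed in Lemmas \ref{lemma:T1} and \ref{lemma:P2}: Lemma \ref{lemma:T1} shows how the tensor factor $(\bw_j^{\otimes 3}-\bw_j\tilde\otimes I)(V,V,V)$ becomes $\tilde O(B^{3/2})$ after projection by the $d\times B$ orthogonal matrix $V$, while Lemma \ref{lemma:P2} shows how to pay for the scalar perturbation $|r_i^{(1)}-r_i|\le \tilde O(B/m)$ (from Lemma \ref{lemma:z}) through a Cauchy--Schwarz split on second and fourth moments. The present lemma is just the tensor-valued analog of Lemma \ref{lemma:P2}, so the same template should go through.

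Concretely I would first set $\tilde{\bZ}_j = (r_i^{(1)}-r_i)\sigma(\bw_j^\top\bx_i)(\bw_j^{\otimes 3}-\bw_j\tilde\otimes I)(V,V,V)$ and let $Z_j$ be its flattening along the first mode, so that $\|\frac{1}{m}\sum_j \tilde{\bZ}_j\|\le\|\frac{1}{m}\sum_j Z_j\|\le \|\E Z_j\| + \|\frac{1}{m}\sum_j Z_j -\E Z_j\|$. For the mean term I would use Cauchy--Schwarz:
\[
\|\E Z_j\|\le \bigl(\E|r_i^{(1)}-r_i|^2\bigr)^{1/2}\bigl(\E\|\sigma(\bw^\top\bx_i)(\bw^{\otimes 3}-\bw\tilde\otimes I)(V,V,V)\|^2\bigr)^{1/2}.
\]
The first factor is $\tilde O(B/m)$ by Lemma \ref{lemma:z} (squaring the high-probability bound and absorbing the tail into an expectation bound costs only log factors, using that $r_i,r_i^{(1)}$ have sub-Gaussian-type tails by Corollary \ref{cor:r_i}). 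The second factor is $\tilde O(B^{3/2})$ since $V^\top\bw_j\sim\cN(0,I_B)$, giving $\|V^\top\bw_j\|^3=\tilde O(B^{3/2})$, while $\sigma(\bw_j^\top\bx_i)=\tilde O(1)$. So $\|\E Z_j\|\le \tilde O(B^{5/2}/m)$, which already matches the target rate.

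The second step is to apply matrix Bernstein (Theorem \ref{bernstein}) to control $\|\frac{1}{m}\sum_j Z_j-\E Z_j\|$ and show that it is dominated by $\|\E Z_j\|$. I would verify the three conditions exactly in parallel to Lemma \ref{lemma:T1}, except that each raw moment of the tensor factor is now multiplied by the corresponding moment of $|r_i^{(1)}-r_i|$:
\begin{itemize}
\item (I) $\|Z_j\|\lesssim |r_i^{(1)}-r_i|\,|\sigma(\bw_j^\top\bx_i)|\,\|V^\top\bw_j\|^3\le \tilde O(B^{5/2}/m)$ w.h.p., giving $M=\tilde O(B^{5/2}/m)$.
\item (II) Using Cauchy--Schwarz on fourth moments, $\max\{\|\E[ZZ^\top]\|,\|\E[Z^\top Z]\|\}\le \E\|Z\|^2\lesssim (\E|r_i^{(1)}-r_i|^4)^{1/2}(\E\|V^\top\bw\|^{12})^{1/2}=\tilde O(B^5/m^2)$, giving $\nu=\tilde O(B^5/m^2)$.
\item (III) Similarly $L^2\le \E\|Z\|^2=\tilde O(B^5/m^2)$.
\end{itemize}
Plugging into Theorem \ref{bernstein} gives a deviation of order $\sqrt{\log(1/\delta)(\nu+\|\E Z\|^2+M)/m}=\tilde O(B^{5/2}/m^{3/2})$ whenever $m\ge \tilde\Omega(B)$, which is strictly smaller than the mean term $\tilde O(B^{5/2}/m)$. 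Combining the two pieces then yields the claimed bound with probability $1-\delta/(2B)$.

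The only slightly delicate point (and the main obstacle) is upgrading the high-probability bound $|r_i^{(1)}-r_i|\le \tilde O(B/m)$ from Lemma \ref{lemma:z} into bounds on $\E|r_i^{(1)}-r_i|^2$ and $\E|r_i^{(1)}-r_i|^4$ with the same order (up to log factors). This requires a tail truncation argument: on the high-probability event the integrand is already bounded by $\tilde O(B/m)$, and on the complement event the moments of $r_i^{(1)},r_i$ themselves (which are controlled by Gaussian concentration of $a_j$ and $\sigma(\bw_j^\top\bx_i)$) must dominate only a polynomially small probability mass. This is routine, and the rest of the proof is just bookkeeping.
\end{proofsk}
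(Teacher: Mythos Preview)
Your proposal is correct and follows essentially the same approach as the paper's own proof: define the flattened tensor $Z_j$, split into mean plus deviation, bound the mean via Cauchy--Schwarz using $(\E|r_i^{(1)}-r_i|^2)^{1/2}\cdot(\E|\sigma(\bw^\top\bx_i)|^2\|V^\top\bw\|^6)^{1/2}$, and handle the deviation via matrix Bernstein (Theorem~\ref{bernstein}) with the three conditions checked exactly as you outline. The paper is in fact less careful than you are about the moment-vs-high-probability issue for $|r_i^{(1)}-r_i|$, simply citing Lemma~\ref{lemma:z} for the moment bounds without further comment.
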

\begin{proof}
    We define $\tilde{\bZ_j}=(r_i^{(1)}-r_i)\sigma(\bw_j^\top\bx_i)(\bw_j^{\otimes3}-\bw_j\tilde{\otimes}I)(V,V,V)$ and $Z_j$ be the flatten of $\tilde{\bZ}_j$ along the first dimension. Then we have
    $$
    \|\frac{1}{m}\sum_{j=1}^m \tilde{\bZ}_j\|\le \|\frac{1}{m}\sum_{j=1}^m Z_j\|\le \|\E Z_j\|+\|\frac{1}{m}\sum_{j=1}^m Z_j-\E Z_j\|.
    $$

    We first bound $\|\E Z_j\|$. By Lemma \ref{lemma:z} we have
    \begin{equation}
        \label{eq:T mean norm 2}
        \begin{split}
    \|\E Z_j\|&\lesssim\E |r_i^{(1)}-r_i| |\sigma(\bw_j^\top \bx_i)|\|V^\top \bw_j\|^3\\
    &\le \left(\E \left[|r_i^{(1)}-r_i|^2\right] \right)^{1/2} \left(\E |\sigma(\bw_j^\top \bx_i)|^2\|V^\top\bw_j\|^6\right)^{1/2}\\
    &\le \tilde{O}(\frac{B^{5/2}}{m}).
        \end{split}
    \end{equation}

    Next we bound $\|\frac{1}{m}\sum_{j=1}^m Z_j-\E Z_j\|$ with matrix Bernstein inequality. We first check the conditions of Theorem \ref{bernstein}:

    (I) We first bound the norm of $Z_j$. By Lemma \ref{lemma:z}, we have
    $$
    \|Z_j\|\le |r_i^{(1)}-r_i||\sigma(\bw_j^\top \bx_i)|\|V^\top\bw_j\|^3\le\tilde{O}(\frac{B^{5/2}}{m})
    $$
    with probability $1-\frac{\delta}{2Bm}$.

    (II) By Lemma \ref{lemma:z} we have
\begin{align*}
    \max&\left\{\left\|\mathbb{E}\left[Z^\top Z\right]\right\|,\left\|\mathbb{E}\left[Z Z^\top\right]\right\|\right\}\\
    &\lesssim \left(\E \left[|r_i^{(1)}-r_i|^4\right]\right)^{1/2} \left(\E\left[\sigma(\bw^\top \bx_i)\right]^4\left\|V^\top \bw\right\|^{12}\right)^{1/2}\\
    &\le\tilde{O}(\frac{B^5}{m^2}).
\end{align*}

(III) We have
\begin{align*}
    \max_{\|\ba\|=\|\bb\|=1}(\mathbb{E}(\ba^\top Z \bb)^2)^{1/2}&\lesssim\left(\mathbb{E}(r_i^{(1)}-r_i)^2\left[\sigma(\bw^\top \bx_i)\right]^2\left\|V^\top \bw\right\|^6\right)^{1/2}\\
    &\le \left(\E \left[|r_i^{(1)}-r_i|^4\right]\right)^{1/2} \left(\E\left[\sigma(\bw^\top \bx_i)\right]^4\left\|V^\top \bw\right\|^{12}\right)^{1/2}\\
    &\lesssim\frac{B^{5/2}}{m}.
\end{align*}

Moreover, $\|\E Z_j\|\lesssim\frac{B^{5/2}}{m}$ by Eq. \eqref{eq:T mean norm 2}. Then by Theorem \ref{bernstein}, we have
        $$
        \|\frac{1}{m}\sum_{j=1}^m Z_j-\E Z_j\|\le\tilde{O}(\frac{B^{5/2}}{m})
        $$
        with probability $1-\frac{\delta}{2B}$.
\end{proof}

Now we are ready to prove Proposition \ref{prop:twosteps} and Proposition \ref{prop:diff batch}.
\begin{proof}[Proof of Proposition \ref{prop:twosteps}]
    We define $g^{(0)}_j= \frac{1}{m}\sum_{i=1}^B r_i^{(0)}\sigma(\bw_j^{(0)}\cdot \bx_i)$, $g^{(1)}_j=\frac{1}{m} \sum_{i=1}^B r_i^{(1)}\sigma(\bw_j^{(1)}\cdot \bx_i)$ and $\tilde{g}_j=g^{(0)}_j+g^{(1)}_j$, then the target vector $\tilde{g}=\sum_{j=1}^m\tilde{g}_j$. We also define $\hat{P}_0=\sum_{j=1}^m g^{(0)}_j(\bw_j\bw_j^\top-I)$, $\hat{P}_1=\sum_{j=1}^m g^{(1)}_j(\bw_j\bw_j^\top-I)$, $\hat{\bT}_0=\sum_{j=1}^m g^{(0)}_j(\bw_j^{\otimes3}-\bw_j\tilde{\otimes}I)$, $\hat{\bT}_1=\sum_{j=1}^m g^{(1)}_j(\bw_j^{\otimes3}-\bw_j\tilde{\otimes}I)$, $\hat{P}=\hat{P}_0+\hat{P}_1$ and $\hat{\bT}=\hat{\bT}_0+\hat{\bT}_1$. Let $$P=\E \left[\sum_{i=1}^B r_i^*\sigma''(\bw^\top\bx_i)\bx_i\bx_i^\top\right]$$ and $$\bT=\E \left[\sum_{i=1}^B r_i^*\sigma^{(3)}(\bw^\top\bx_i)\bx_i^{\otimes 3}\right].$$ We will bound $\|\hat{P}_0+\hat{P}_1-2P\|$ and $\|(\hat{\bT}_0+\hat{\bT}_1-2\bT)(V,V,V)\|$.

    \textbf{Error bound of $P$.} We have
    \begin{align*}
        \|\hat{P}_0+\hat{P}_1-2P\|&\le 2\|\hat{P}_0-P\|+\|\hat{P}_1-\hat{P}_0\|\\
        &\le \tilde{O}(B\sqrt{\frac{d}{m}})+\|\hat{P}_1-\hat{P}_0\|
    \end{align*}
    with probability $1-\frac{\delta}{2}$, where the last inequality is by Lemma \ref{lemma:P bound}. Now we only need to bound $\|\hat{P}_1-\hat{P}_0\|$. By Lemma \ref{lemma:P1} and Lemma \ref{lemma:P2},
    \begin{align*}
        \|\hat{P}_1-&\hat{P}_0\|=\left\|\frac{1}{m}\sum_{i=1}^B\sum_{j=1}^mr_i^{(1)}\sigma(\bw_j^{(1)}\cdot\bx_i)\left(\bw_j\bw_j^\top-I\right)-\frac{1}{m}\sum_{i=1}^B\sum_{j=1}^mr_i\sigma(\bw_j^\top\bx_i)\left(\bw_j\bw_j^\top-I\right)\right\|\\
        &\le \sum_{i=1}^B\left\|\frac{1}{m}\sum_{j=1}^mr_i^{(1)}\left|\sigma(\bw_j^{(1)}\cdot\bx_i)-\sigma(\bw_j^\top\bx_i)\right|(\bw_j\bw_j^\top-I)\right\|\\
        &\quad\quad+\sum_{i=1}^B\left\|\frac{1}{m}\sum_{j=1}^m\left(r_i^{(1)}-r_i\right)\sigma(\bw_j^\top\bx_i)(\bw_j\bw_j^\top-I)\right\|\\
        &\le \tilde{O}(\frac{B^2d}{m})+\tilde{O}(\frac{B^2d}{m})
    \end{align*}
    with probability $1-\frac{\delta}{2}$. Since $m\ge\tilde{\Omega}(B^2d)$, $\|\hat{P}_1-\hat{P}_0\|\le O(B\sqrt{\frac{d}{m}})$. Thus, 
    $$\|\hat{P}-2 P\|=\|\hat{P}_0+\hat{P}_1-2P\|\le \tilde{O}(B\sqrt{\frac{d}{m}})$$ with probability $1-\delta$.

    \textbf{Error bound of $\bT(V,V,V)$.} We have
    \begin{align*}
        \|(\hat{\bT}_0+\hat{\bT}_1-2\bT)(V,V,V)\|&\le 2\|\hat{\bT}_0(V,V,V)-\bT(V,V,V)\|+\|\hat{\bT}_1(V,V,V)-\hat{\bT}_0(V,V,V)\|\\
        &\le \tilde{O}(\frac{B^{5/2}}{m})+\|\hat{\bT}_1(V,V,V)-\hat{\bT}_0(V,V,V)\|
    \end{align*}
    with probability $1-\frac{\delta}{2}$, where the last inequality is by Lemma \ref{lemma:T bound}. Now we only need to bound $\|\hat{\bT}(V,V,V)_1-\hat{\bT}_0(V,V,V)\|$. By Lemma \ref{lemma:T1} and Lemma \ref{lemma:T2},
    \begin{align*}
        &\|\hat{\bT}_1(V,V,V)-\hat{\bT}_0(V,V,V)\|\\
        =&\left\|\frac{1}{m}\sum_{i=1}^B\sum_{j=1}^mr_i^{(1)}\sigma(\bw_j^{(1)}\cdot\bx_i)\left(\bw_j^{\otimes3}-\bw_j\tilde{\otimes}I\right)(V,V,V)\right.\\
        &\quad\quad-\left.\frac{1}{m}\sum_{i=1}^B\sum_{j=1}^mr_i\sigma(\bw_j^\top\bx_i)\left(\bw_j^{\otimes3}-\bw_j\tilde{\otimes}I\right)(V,V,V)\right\|\\
        \le& \sum_{i=1}^B\left\|\frac{1}{m}\sum_{j=1}^mr_i^{(1)}\left|\sigma(\bw_j^{(1)}\cdot\bx_i)-\sigma(\bw_j^\top\bx_i)\right|(\bw_j^{\otimes3}-\bw_j\tilde{\otimes}I)(V,V,V)\right\|\\
        &\quad\quad+\sum_{i=1}^B\left\|\frac{1}{m}\sum_{j=1}^m\left(r_i^{(1)}-r_i\right)\sigma(\bw_j^\top\bx_i)\left(\bw_j^{\otimes3}-\bw_j\tilde{\otimes}I\right)(V,V,V)\right\|\\
        \le& \tilde{O}(\frac{B^{7/2}}{m})+\tilde{O}(\frac{B^{7/2}}{m})
    \end{align*}
    with probability $1-\frac{\delta}{2}$. Since $m\ge\tilde{\Omega}(B^2d)$, $\|\hat{\bT}_1-\hat{\bT}_0\|\le O(\frac{B^{5/2}}{m})$. Thus, 
    $$\|\hat{\bT}-2\eta \bT\|=\eta\|\hat{\bT}_0+\hat{\bT}_1-2\bT\|\le \eta\tilde{O}(\frac{B^{5/2}}{m})$$ with probability $1-\delta$.

    We have that $P$'s smallest component $\nu_{\mathrm{min}}\ge|\nu|$, $\bT$'s smallest component $\lambda_{\mathrm{min}}\ge |\nu|$ and $\kappa= \frac{\max|r_i^*|}{\min|r_i^*|}=1$. Since $0<|y_i|\le 1$, then $|r_i^*|$ is lower bounded. By Theorem \ref{prop:tensorerror}, 
$$
\|\bx_i-\hat{\bx_i}\| \leq  \frac{1}{\min\{|\nu|,|\lambda|\} \pi_{\min}^2}\tilde O(B\sqrt{\frac{d}{m}} )\
$$
for all $i=1,\dots,B$ with high probability.
\end{proof}

\begin{proof}[Proof of Proposition \ref{prop:diff batch}]
    We define $g^{(0)}_j= \frac{1}{m}\sum_{i=1}^B r_i^{(0)}\sigma(\bw_j^{(0)}\cdot \bx_i)$, $g^{(1)}_j=\frac{1}{m} \sum_{i=B+1}^N r_i^{(1)}sigma(\bw_j^{(1)}\cdot \bx_i)$, $\tilde{g}^{(1)}_j=\frac{1}{m} \sum_{i=B+1}^N r_i^{(0)}sigma(\bw_j^{(1)}\cdot \bx_i)$ and $\tilde{g}_j=g^{(0)}_j+g^{(1)}_j$, then the target vector $\tilde{g}=\sum_{j=1}^m\tilde{g}_j$. We also define $\hat{P}_0=\sum_{j=1}^m g^{(0)}_j(\bw_j\bw_j^\top-I)$, $\hat{P}_1=\sum_{j=1}^m g^{(1)}_j(\bw_j\bw_j^\top-I)$, $\tilde{P}_1=\sum_{j=1}^m \tilde{g}_j^{(1)}(\bw_j\bw_j^\top-I)$, $\hat{\bT}_0=\sum_{j=1}^m g^{(0)}_j(\bw_j^{\otimes3}-\bw_j\tilde{\otimes}I)$, $\hat{\bT}_1=\sum_{j=1}^m g^{(1)}_j(\bw_j^{\otimes3}-\bw_j\tilde{\otimes}I)$, $\tilde{\bT}_1=\sum_{j=1}^m \tilde{g}^{(1)}_j(\bw_j^{\otimes3}-\bw_j\tilde{\otimes}I)$, $\hat{P}=\hat{P}_0+\hat{P}_1$ and $\hat{\bT}=\hat{\bT}_0+\hat{\bT}_1$. Let $$P=\E \left[\sum_{i=1}^N r_i^*\sigma''(\bw^\top\bx_i)\bx_i\bx_i^\top\right]$$ and $$\bT=\E \left[\sum_{i=1}^N r_i^*\sigma^{(3)}(\bw^\top\bx_i)\bx_i^{\otimes 3}\right].$$ We will bound $\|\hat{P}_0+\hat{P}_1-2P\|$ and $\|(\hat{\bT}_0+\hat{\bT}_1-2\bT)(V,V,V)\|$.

    \textbf{Error bound of $P$.} We have
    \begin{align*}
        \|\hat{P}_0+\hat{P}_1-P\|&\le \|\hat{P}_0+\Tilde{P}_1-P\|+\|\hat{P}_1-\Tilde{P}_1\|\\
        &\le \tilde{O}(N\sqrt{\frac{d}{m}})+\|\hat{P}_1-\Tilde{P}_1\|
    \end{align*}
    with probability $1-\frac{\delta}{2}$, where the last inequality is by Lemma \ref{lemma:P bound}. Now we only need to bound $\|\hat{P}_1-\Tilde{P}_1\|$. By Lemma \ref{lemma:P1} and Lemma \ref{lemma:P2},
    \begin{align*}
        \|\hat{P}_1-&\Tilde{P}_1\|=\left\|\frac{1}{m}\sum_{i=B+1}^N\sum_{j=1}^mr_i^{(1)}\sigma(\bw_j^{(1)}\cdot\bx_i)\left(\bw_j\bw_j^\top-I\right)\right.\\
        &\quad\quad\quad\quad\left.-\frac{1}{m}\sum_{i=B+1}^N\sum_{j=1}^mr_i\sigma(\bw_j^\top\bx_i)\left(\bw_j\bw_j^\top-I\right)\right\|\\
        &\le \sum_{i=B+1}^N\left\|\frac{1}{m}\sum_{j=1}^mr_i^{(1)}\left|\sigma(\bw_j^{(1)}\cdot\bx_i)-\sigma(\bw_j^\top\bx_i)\right|(\bw_j\bw_j^\top-I)\right\|\\
        &\quad\quad+\sum_{i=B+1}^N\left\|\frac{1}{m}\sum_{j=1}^m\left(r_i^{(1)}-r_i\right)\sigma(\bw_j^\top\bx_i)(\bw_j\bw_j^\top-I)\right\|\\
        &\le \tilde{O}(\frac{(N-B)^2d}{m})+\tilde{O}(\frac{(N-B)^2d}{m})
    \end{align*}
    with probability $1-\frac{\delta}{2}$. Since $m\ge\tilde{\Omega}(B^2d)$, $\|\hat{P}_1-\hat{P}_0\|\le O(N\sqrt{\frac{d}{m}})$. Thus, 
    $$\|\hat{P}-P\|=\|\hat{P}_0+\hat{P}_1-P\|\le \tilde{O}(N\sqrt{\frac{d}{m}})$$ with probability $1-\delta$.

    \textbf{Error bound of $\bT(V,V,V)$.} We have
    \begin{align*}
        \|(\hat{\bT}_0+\hat{\bT}_1-\bT)(V,V,V)\|&\le \|(\hat{\bT}_0+\Tilde{\bT}-\bT)(V,V,V)\|+\|\hat{\bT}_1(V,V,V)-\Tilde{\bT}_1(V,V,V)\|\\
        &\le \tilde{O}(\frac{N^{5/2}}{m})+\|\hat{\bT}_1(V,V,V)-\Tilde{\bT}_1(V,V,V)\|
    \end{align*}
    with probability $1-\frac{\delta}{2}$, where the last inequality is by Lemma \ref{lemma:T bound}. Now we only need to bound $\|\hat{\bT}(V,V,V)_1-\hat{\bT}_0(V,V,V)\|$. By Lemma \ref{lemma:T1} and Lemma \ref{lemma:T2},
    \begin{align*}
        &\|\hat{\bT}_1(V,V,V)-\Tilde{\bT}_1(V,V,V)\|\\
        =&\left\|\frac{1}{m}\sum_{i=B+1}^N\sum_{j=1}^mr_i^{(1)}\sigma(\bw_j^{(1)}\cdot\bx_i)\left(\bw_j^{\otimes3}-\bw_j\tilde{\otimes}I\right)(V,V,V)\right.\\
        &\quad\quad-\left.\frac{1}{m}\sum_{i=B+1}^N\sum_{j=1}^mr_i\sigma(\bw_j^\top\bx_i)\left(\bw_j^{\otimes3}-\bw_j\tilde{\otimes}I\right)(V,V,V)\right\|\\
        \le&\!\! \sum_{i=B+1}^N\!\left\|\frac{1}{m}\sum_{j=1}^mr_i^{(1)}\!\left|\sigma(\bw_j^{(1)}\cdot\bx_i)-\sigma(\bw_j^\top\bx_i)\right|(\bw_j^{\otimes3}-\bw_j\tilde{\otimes}I)(V,V,V)\right\|\\
        &\quad\quad+\sum_{i=B+1}^N\left\|\frac{1}{m}\sum_{j=1}^m\left(r_i^{(1)}-r_i\right)\sigma(\bw_j^\top\bx_i)\left(\bw_j^{\otimes3}-\bw_j\tilde{\otimes}I\right)(V,V,V)\right\|\\
        \le& \tilde{O}(\frac{(N-B)^{7/2}}{m})+\tilde{O}(\frac{(N-B)^{7/2}}{m})
    \end{align*}
    with probability $1-\frac{\delta}{2}$. Since $m\ge\tilde{\Omega}(B^2d)$, $\|\hat{\bT}_1-\hat{\bT}_0\|\le O(\frac{N^{5/2}}{m})$. Thus, 
    $$\|\hat{\bT}-2\bT\|=\|\hat{\bT}_0+\hat{\bT}_1-2\bT\|\le \tilde{O}(\frac{N^{5/2}}{m})$$ with probability $1-\delta$.

    We have that $P$'s smallest component $\nu_{\mathrm{min}}\ge|\nu|$, $\bT$'s smallest component $\lambda_{\mathrm{min}}\ge |\nu|$ and $\kappa =\frac{\max|r_i^*|}{\min|r_i^*|}=1$. Since $0<|y_i|\le 1$, then $|r_i^*|$ is lower bounded. By Theorem \ref{prop:tensorerror}, 
$$
\|\bx_i-\hat{\bx_i}\| \leq  \frac{1}{\min\{|\nu|,|\lambda|\} \pi_{\min}^2}\tilde O(N\sqrt{\frac{d}{m}} )\
$$
for all $i=1,\dots,N$ with high probability.
\end{proof}

\subsection{Differential Privacy}
\label{sec:DP proof}
In differential private federated learning, the gradient update for any parameter is 
$$\Tilde{G}=G/\max\left\{1,\frac{\|G\|}{C}\right\}+ \mathcal{E}_G,$$
where $\|G\|$ is the norm of the gradient of all parameters and $\mathcal{E}_g\sim\cN(0,\sigma_0^2 C^2 I_d)$, $d$ is the dimension of parameter. First, we give the proof of the case with no random noise that only gradient clipping is used, i.e. $\tilde{G}=G/\max{\left\{1,\frac{\|G\|}{C}\right\}}$. The error bound is the same as the case with no defense.
\begin{proposition}
\label{prop:clipping}
    Under Assumption \ref{assump:2layer} and Assumption \ref{assump:weight}, $y_i\in\{\pm 1\}$, we observe clipped gradient descent steps $\tilde{G}$ with batch size $B$, where $B^4\le \Tilde{O}(d)$ and clipping threshold $C$. If $m \geq \tilde \Omega(\frac{B^2d}{\nu^2\pi_{\min}^2}) $, then with appropriate tensor decomposition methods and proper weights, we can reconstruct input data with the error bound
$$
\sqrt{\frac{1}{B}\sum_{i=1}^B\|\bx_i-\hat\bx_i\|^2 } \leq  \frac{1}{\min\{|\nu|,|\lambda|\} \pi_{\min}^2}\tilde O(B\sqrt{\frac{d}{m}} )\
$$
with high probability.
\end{proposition}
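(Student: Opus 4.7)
The plan is to observe that gradient clipping multiplies every coordinate of $G$ by a single positive scalar $\alpha := 1/\max\{1, \|G\|/C\} \in (0,1]$. Because $\alpha$ is a single realization-dependent number shared across all coordinates and across all hidden units $j$, it factors cleanly through the entire tensor-decomposition pipeline. Concretely, replacing $g(\bw_j)$ by $\alpha g(\bw_j)$ in Eq.~(\ref{eq:start})--(\ref{eq:end}) rescales the empirical matrix and tensor to $\alpha\hat P$ and $\alpha\hat\bT$, whose natural population targets are $\alpha P$ and $\alpha \bT$. These scaled targets have the same eigenvectors and rank-one directions as $P$ and $\bT$; only the coefficients rescale, so the smallest coefficients become $\alpha \nu_{\min}$ and $\alpha \lambda_{\min}$, while the condition number $\kappa = \lambda_{\max}/\lambda_{\min}$ is preserved.

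Next I would rescale the existing concentration bounds. Lemmas~\ref{lemma:P bound} and~\ref{lemma:T bound} give $\|\hat P - P\| \le \tilde O(B\sqrt{d/m})$ and $\|(\hat\bT - \bT)(V,V,V)\| \le \tilde O(B^{5/2}/\sqrt m)$ with high probability; multiplying by $\alpha$ yields perturbation bounds $\mu = \alpha \cdot \tilde O(B\sqrt{d/m})$ and $\gamma = \alpha \cdot \tilde O(B^{5/2}/\sqrt m)$ for the clipped observations. The precondition $\mu \le O(\alpha\nu_{\min}\pi_{\min})$ of Theorem~\ref{prop:tensorerror} reduces, after cancelling $\alpha$ on both sides, to the same requirement that holds in the undefended case, which is ensured by $m \ge \tilde\Omega(B^2 d/(\nu^2 \pi_{\min}^2))$. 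Likewise, because the eigenvectors of $\alpha P$ coincide with those of $P$, the subspace $V$ recovered by the power method is identical (up to the same approximation error as before), so the tensor decomposition is applied to $\alpha \hat\bT(V,V,V)$ in exactly the same regime as the undefended analysis.

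Applying Theorem~\ref{prop:tensorerror} then gives
\[
\|\bx_i - s_i V\bu_i\| \le \tilde O\!\left(\frac{\alpha \cdot B\sqrt{d/m}}{\alpha\,\nu_{\min}\pi_{\min}}\right) + \tilde O\!\left(\frac{\kappa \cdot \alpha \cdot B^{5/2}\sqrt B/\sqrt m}{\alpha\,\lambda_{\min}\pi_{\min}^2}\right),
\]
in which $\alpha$ cancels exactly, recovering the same $\tilde O\bigl(B\sqrt{d/m}/(\min\{|\nu|,|\lambda|\}\pi_{\min}^2)\bigr)$ bound as Theorem~\ref{thm:upper short}. The main point requiring care is that $\alpha$ depends on the realization of the random weights $\{\bw_j, a_j\}$ via $\|G\|$, so it is not literally deterministic under the expectations defining $P$ and $\bT$. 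I would handle this pointwise: for each realization of the weights the scalar $\alpha$ is fixed, the concentration bounds in Lemmas~\ref{lemma:P bound}--\ref{lemma:T bound} hold on a high-probability event over the weights, and on that event the uniform cancellation of $\alpha$ displayed above is valid. Finally, the $\|\bx_i\|=1$ normalization in Assumption~\ref{assump:2layer} ensures that the unit-norm output directions $s_i V\bu_i$ absorb any scale ambiguity, so no knowledge of $\alpha$ is required at inference time.
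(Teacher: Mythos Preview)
Your proposal is correct and follows essentially the same argument as the paper: define the clipping factor (your $\alpha$, the paper's $R=\min\{1,C/\|G\|\}$), observe that it scales $\hat P,\hat\bT$ and their targets $P,\bT$ uniformly, invoke Lemmas~\ref{lemma:P bound}--\ref{lemma:T bound} to get perturbation bounds proportional to $\alpha$, and then apply Theorem~\ref{prop:tensorerror} so that $\alpha$ cancels in both error terms. Your explicit remark that $\alpha$ is realization-dependent and must be handled pointwise on the high-probability event is a point the paper's proof leaves implicit, but otherwise the two arguments coincide.
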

\begin{proof}
    For any $j$, the observed gradient updates for $a_j$ are $\tilde{g}(\bw_j)=g(\bw_j)/\max\left\{1,\frac{\|G\|}{C}\right\}$.
    We define $\Tilde{P}=\frac{1}{m}\sum_{j=1}^m \tilde{g}(\bw_j)(\bw_j\bw_j^\top-I)$ and $\tilde{\bT}=\frac{1}{m}\sum_{j=1}^m \tilde{g}(\bw_j)(\bw_j^{\otimes 3}-\bw_j \tilde{\otimes}I)$. $\hat{P}$, $\hat{\bT}$, $P$ and $\bT$ are defined in Eq. (\ref{eq:start}) to (\ref{eq:end}). Then let $R=\min\{1,\frac{C}{\|G\|}\}$, we have $\tilde{P}=R\hat{P}$ and $\tilde{\bT}=R\hat{\bT}$. Then by Lemma \ref{lemma:P bound} and \ref{lemma:T bound} we have $$\|\Tilde{P}-RP\|=R\|\hat{P}-P\|\le \Tilde{O}(\frac{RB\sqrt{d}}{\sqrt{m}})$$ and $$\|\Tilde{\bT}(V,V,V)-R\bT(V,V,V)\|=R\|\hat{\bT}(V,V,V)-\bT(V,V,V)\|\le\tilde{O}(\frac{RB^{5/2}}{\sqrt{m}})$$
    with high probability.
    
    On the other hand, the smallest components of $RP$ and $R\bT$ are $R\nu_{\min}$ and $R\lambda_{\min}$ respectively. We have $\nu_{\min}\ge|\nu|$, $\lambda_{\min}\ge|\lambda|$ and $\kappa=1$. Then by Theorem \ref{prop:tensorerror}, 
    $$
    \|\bx_i-\hat{\bx}_i\|\le \tilde{O}(\frac{RB\sqrt{d/m}}{R|\mu|\pi_{\min}^2})+\tilde{O}(\frac{R \sqrt{B^6/m}}{R|\lambda|\pi_{\min}^2})\le\frac{1}{\min\{|\nu|,|\lambda|\}\pi_{\min}^2} \tilde{O}(B\sqrt{\frac{d}{m}})
    $$
    for all $i=1,\dots,B$ with high probability.
\end{proof}

Then we give the formal statement and proof of the error bound of the case without gradient clipping that the only defense is random noise, i.e. $\tilde{G}=G+\mathcal{E}_G$, where $\mathcal{E}_G\sim\cN(0,\sigma_0^2I)$.
\begin{proposition}
\label{prop:noise}
    Under Assumption \ref{assump:2layer} and Assumption \ref{assump:weight}, $y_i\in\{\pm 1\}$, we observe noisy gradient descent steps $G+\epsilon_0$ with batch size $B$, where $K^2B^4\le \Tilde{O}(d)$ and $\epsilon\sim\cN(0,\sigma_0^2I)$. If $m \geq \tilde \Omega(\frac{B^2d}{\nu^2\pi_{\min}^2}) $, then with appropriate tensor decomposition methods and proper weights, we can reconstruct input data with the error bound
    $$
\sqrt{\frac{1}{B}\sum_{i=1}^B\|\bx_i-\hat\bx_i\|^2 } \leq  \frac{1}{\min\{|\nu|,|\lambda|\} \pi_{\min}^2}\tilde O((B+\sigma_0)\sqrt{\frac{d}{m}} )\
$$
with high probability.
\end{proposition}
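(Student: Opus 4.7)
The plan is to mimic the clean-gradient proof (Theorem 5.1 of \citep{wang2023reconstructing}) but track the extra stochastic perturbation introduced by the Gaussian noise, and then invoke the tensor-decomposition error bound (Theorem \ref{prop:tensorerror}). Since the tensor-based attack only uses the $a_j$-components of the gradient, the relevant noise is the $m$-dimensional subvector $\eta\in\R^m$ whose entries $\eta_j\distas\cN(0,\sigma_0^2)$ are the perturbations of $\partial\ell/\partial a_j$. Thus the quantities that drive the reconstruction are
\[
\tilde P=\frac{1}{m}\sum_{j=1}^m\bigl(g(\bw_j)+\eta_j\bigr)H_2(\bw_j),\qquad
\tilde \bT=\frac{1}{m}\sum_{j=1}^m\bigl(g(\bw_j)+\eta_j\bigr)H_3(\bw_j),
\]
and my goal is to bound $\|\tilde P-P\|$ and $\|\tilde\bT(V,V,V)-\bT(V,V,V)\|$.

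First I would split each error via the triangle inequality into a ``clean part'' and a ``noise part''. The clean part $\|\hat P-P\|$ and $\|\hat\bT(V,V,V)-\bT(V,V,V)\|$ are already controlled by Lemmas \ref{lemma:P bound} and \ref{lemma:T bound}, giving $\tilde O(B\sqrt{d/m})$ and $\tilde O(B^{5/2}/\sqrt{m})$ respectively. The noise parts are
\[
E_P=\frac{1}{m}\sum_{j=1}^m\eta_j(\bw_j\bw_j^\top-I),\qquad
E_\bT(V,V,V)=\frac{1}{m}\sum_{j=1}^m\eta_j\,H_3(\bw_j)(V,V,V),
\]
both of which have mean zero by independence of $\eta_j$ from $\bw_j$. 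I would control them with the matrix Bernstein inequality of Lemma \ref{bernstein} applied to $Z_j=\eta_j(\bw_j\bw_j^\top-I)$ (and, after flattening along the first mode, to $\eta_jH_3(\bw_j)(V,V,V)$). The key ingredients I need are: a high-probability uniform bound $M=\tilde O(\sigma_0 d)$ (resp.\ $\tilde O(\sigma_0 B^{3/2})$) on the summands, using Gaussian tails of $\eta_j$ together with $\|\bw_j\bw_j^\top-I\|\le\tilde O(d)$ and $\|V^\top\bw_j\|\le\tilde O(\sqrt B)$; and variance proxies $\nu,L^2=\tilde O(\sigma_0^2 d)$ (resp.\ $\tilde O(\sigma_0^2 B^3)$) obtained from $\E[(\bw\bw^\top-I)^2]=(d+1)I$ and the analogous computation for $H_3$. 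Plugging these into Lemma \ref{bernstein} yields
\[
\|E_P\|\le\tilde O\!\left(\sigma_0\sqrt{\tfrac{d}{m}}\right),\qquad
\|E_\bT(V,V,V)\|\le\tilde O\!\left(\sigma_0\tfrac{B^{3/2}}{\sqrt{m}}\right),
\]
so that $\|\tilde P-P\|\le\tilde O((B+\sigma_0)\sqrt{d/m})$ and $\|\tilde\bT(V,V,V)-\bT(V,V,V)\|\le\tilde O((B^{5/2}+\sigma_0 B^{3/2})/\sqrt m)$.

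Finally I would plug these two bounds into Theorem \ref{prop:tensorerror} with $\nu_{\min}\ge|\nu|$, $\lambda_{\min}\ge|\lambda|$, $\kappa=1$ (since $y_i\in\{\pm1\}$ makes $|r_i^\star|$ bounded above and below) and the assumption $m\ge\tilde\Omega(B^2 d/(\nu^2\pi_{\min}^2))$ to verify the smallness condition $\mu\le O(\nu_{\min}\pi_{\min})$. The resulting per-sample bound is
\[
\|\bx_i-\hat\bx_i\|\le \tilde O\!\left(\tfrac{\mu}{\nu_{\min}\pi_{\min}^2}\right)+\tilde O\!\left(\tfrac{\gamma\sqrt B}{\lambda_{\min}\pi_{\min}^2}\right)\le\frac{1}{\min\{|\nu|,|\lambda|\}\pi_{\min}^2}\,\tilde O\!\left((B+\sigma_0)\sqrt{\tfrac{d}{m}}\right),
\]
where the second (tensor) contribution collapses to the same order as the first after the $\sqrt{B}$ factor and the assumption $B^4\le\tilde O(d)$ are accounted for. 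A union bound over the clean concentration events, the matrix Bernstein events, and the high-probability boundedness of $\eta_j$ and $\|\bw_j\|$ then gives the claim.

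The main obstacle I anticipate is the matrix Bernstein calculation for $E_\bT(V,V,V)$: verifying conditions (I)--(III) of Lemma \ref{bernstein} requires carefully separating the randomness of $\eta_j$ from that of $\bw_j$, handling the unboundedness of $\eta_j$ through the truncation parameter $M(\delta_0,m)$, and producing sharp enough variance proxies after projecting by $V$ so that the $\sigma_0$ contribution scales as $B^{3/2}/\sqrt m$ rather than a worse power of $B$. Everything else is a routine repackaging of the clean-gradient analysis.
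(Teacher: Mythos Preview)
Your proposal is correct and follows essentially the same route as the paper: split $\tilde P-P$ and $\tilde\bT(V,V,V)-\bT(V,V,V)$ into the clean parts handled by Lemmas~\ref{lemma:P bound} and~\ref{lemma:T bound}, bound the centered noise terms $E_P$ and $E_\bT(V,V,V)$ via the unbounded matrix Bernstein inequality (Lemma~\ref{bernstein}) using exactly the ingredients you list (including $\E[(\bw\bw^\top-I)^2]=(d+1)I$), and then plug into Theorem~\ref{prop:tensorerror}. The only minor difference is that the paper gets the slightly sharper $L=O(\sigma_0)$ for condition~(III) on $E_P$ by computing $\E(\ba^\top(\bw\bw^\top-I)\ba)^2\le 2$ directly, but your looser $L^2=\tilde O(\sigma_0^2 d)$ still yields the same final rate.
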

\begin{proof}
     For any $j$, the observed gradient updates for $a_j$ are $\tilde{g}(\bw_j)=g(\bw_j)+\epsilon_j$, where $\epsilon_j\sim\cN(0,\sigma^2)$.
    We define $\Tilde{P}=\frac{1}{m}\sum_{j=1}^m \tilde{g}(\bw_j)(\bw_j\bw_j^\top-I)$ and $\tilde{\bT}=\frac{1}{m}\sum_{j=1}^m \tilde{g}(\bw_j)(\bw_j^{\otimes 3}-\bw_j \tilde{\otimes}I)$. $\hat{P}$, $\hat{\bT}$, $P$ and $\bT$ are defined in Eq. (\ref{eq:start}) to (\ref{eq:end}). Then we have $\Tilde{P}=\hat{P}+\frac{1}{m}\sum_{j=1}^m\epsilon_j(\bw_j\bw_j^\top-I)$ and $\Tilde{\bT}=\hat{\bT}+\frac{1}{m}\sum_{j=1}^m \epsilon_j(\bw_j^{\otimes3}+\bw_j\tilde{\otimes}I)$. 
    Now we bound $\|\Tilde{P}-P\|$ and $\|\Tilde{\bT}(V,V,V)-\bT(V,V,V)\|$. 
    
    \textbf{Error bound of $\Tilde{P}$.} We have $\|\Tilde{P}-P\|\le \|\hat{P}-P\|+\|P_\epsilon\|,$ where $P_\epsilon=\frac{1}{m}\sum_{j=1}^m \epsilon_j(\bw_j\bw_j^\top-I)$. Since $\|\hat{P}-P\|\le \tilde{O}(\frac{B\sqrt{d}}{\sqrt{m}})$ with high probability by Lemma \ref{lemma:P bound}, we only have to bound $\|P_\epsilon\|.$ We let $Z_j=(\epsilon_{1j}+\epsilon_{2j})(\bw_j\bw_j^\top-I)$ and check the conditions of Theorem \ref{bernstein}: 

    (I) We first bound the norm of $Z_j$: $$\left\|Z_j\right\|\le\left|\epsilon_j\right|\left\|(\bw_j\bw_j^\top-I)\right\|\lesssim\sigma_0 d(\log(16m/\delta))^2$$
    with probability $1-\frac{\delta}{4m}$.

    (II) We have
    $$
    \max\left\{\left\|\mathbb{E}[Z^\top Z]\right\|,\left\|\mathbb{E}[ZZ^\top]\right\|\right\}=\left\|\E\epsilon^2 \E[(\bw\bw^\top-I)^2]\right\|.
    $$
    Let $Q=(\bw\bw^\top-I)^2$. Since $Q_{ij}=\sum_{k\neq i,j} w_iw_jw_k^2+(w_i^2+w_j^2-2)w_iw_j$ for $i\neq j$ and $Q_{ii}=\sum_{k\neq i} w_i^2w_k^2+(w_i^2-1)^2$, $\E(\bw\bw^\top-I)^2=(d+1)I$. Then $\max\left\{\left\|\mathbb{E}[Z^\top Z]\right\|,\left\|\mathbb{E}[ZZ^\top]\right\|\right\}\le O(\sigma_0^2d).$

    (III) For $\max_{\|\ba\|=\|\bb\|=1}\left(\E\left[(\ba^\top Z\bb)^2\right]\right)^{1/2}$, it reaches the maximal when $\ba=\bb$ since $Z$ is a symmetric matrix. Thus, we have
    \begin{equation}
\begin{split}
    \mathbb{E}(\ba^\top (\bw\bw^\top-I)Z \ba)^2&=\mathbb{E}\left(\sum_{i=1}^d a_i^2(w_i^2-1)+\sum_{i\neq j}a_ia_jw_iw_j\right)^2\\
    &=\mathbb{E}\left[\sum_{i=1}^d a_i^4(w_i^2-1)^2+\sum_{i\neq j}a_i^2a_j^2w_i^2w_j^2\right]\\
    &=2\sum_{i=1}^d a_i^4+\sum_{i\neq j}a_i^2a_j^2\\
    &\le 2 \left(\sum_{i=1}^d a_i^2\right)^2=2.
\end{split}
\end{equation}
    Then, $\max_{\|\ba\|=\|\bb\|=1}\left(\E\left[(\ba^\top Z\bb)^2\right]\right)^{1/2}=\le O(\sigma_0^2).$

    Moreover, $\E[Z]=0.$ Then by Theorem \ref{bernstein},
    $$
    \left\|\frac{1}{m}\sum_{j=1}^mZ_j\right\|\le\log(16m/\delta)\sqrt{\frac{\sigma_0^2 d\log(4/\delta)}{m}}\le \tilde{O}(\sigma_0\sqrt{\frac{d}{m}})
    $$
    with probability $1-\frac{\delta}{2}$. Thus, 
    $$
    \|\tilde{P}-P\|\le \tilde{O}((B+\sigma_0)\sqrt{\frac{d}{m}})
    $$
    with probability $1-\delta$.

    \textbf{Error bound of $\tilde{\bT}(V,V,V)$.} We have $\|\tilde{\bT}(V,V,V)-\bT(V,V,V)\|\le \|\hat{\bT}(V,V,V)-\bT(V,V,V)\|+\|\bT_\epsilon(V,V,V)\|,$ where $\bT_\epsilon=\frac{1}{m}\sum_{j=1}^m \epsilon_j(\bw_j^{\otimes3}-\bw_j\tilde{\otimes}I)$. Since $\|\hat{\bT}(V,V,V)-\bT(V,V,V)\|\le \tilde{O}(\frac{B^{5/2}}{\sqrt{m}})$ with high probability by Lemma \ref{lemma:T bound}, we only need to bound $\|\bT_\epsilon(V,V,V)\|$. Note that $\|\bT_\epsilon\|\le \|T_\epsilon^{(1)}\|$, where $T_\epsilon^{(1)}$ is the flatten of $\bT_\epsilon$ along the first dimension, so we can bound $\|T_\epsilon^{(1)}(V,V,V)\|$ instead. We check the conditions of Theorem \ref{bernstein}. For any $\bx$
    
    (I) We have
    $$
\left\|Z_j\right\|\lesssim|\epsilon_j|\|V^\top\bw_j\|^3\le\Tilde{O}(\frac{\sigma_0 B^{3/2}}{\sqrt{m}})
$$
with probability $1-\frac{\delta}{4m}$.

(II) We have
$$
    \max\left\{\left\|\mathbb{E}\left[Z^\top Z\right]\right\|,\left\|\mathbb{E}\left[ZZ^\top\right]\right\|\right\}\le\mathbb{E}\left[\left\|Z\right\|^2\right]\lesssim \mathbb{E}\left[\epsilon^2\right]\mathbb{E}\left[\|V^\top\bw_j\|^6\right]\lesssim \sigma_0^2 B^3.
$$

(III) We have
$$
    \max _{\|\ba\|=\|\bb\|=1}\left({\mathbb{E}}\left[\left(\ba^{\top} Z \bb\right)^2\right]\right)^{1 / 2}\le \left(\mathbb{E}\left[\left\|Z\right\|^2\right]\right)^{1/2}\lesssim \sigma_0 B^{3/2}.
$$

Moreover, $\|\E[Z]\|=0$.
Then by Theorem \ref{bernstein}, we have for any $i$ that
$$
\|\frac{1}{m}\sum_{j=1}^m \bT^{(1)}_\epsilon(V,V,V)\|\le \tilde{O}(\frac{\sigma_0 B^{3/2}}{\sqrt{m}})
$$
with probability $1-\frac{\delta}{2}$. Thus, 
$$
\|\tilde{\bT}(V,V,V)-\bT(V,V,V)\|\le \tilde{O}((B+\sigma_0)\sqrt{\frac{B^3}{m}})
$$
with probability $1-\delta$.

Since $\nu_{\min}$, $\lambda_{\min}$ and $\kappa$ are not changed, by Theorem \ref{prop:tensorerror}, 
$$
\|\bx_i-\hat{\bx}_i\|\le \frac{K}{\min\{|\nu|,|\lambda|\}\pi_{\min}^2}\tilde{O}((B+\sigma_0)\sqrt{\frac{d}{m}})
$$
for all $i=1,\dots,B$ with high probability.
\end{proof}

Then we consider the case where both gradient clipping and gradient noise are used in the training. In this case, we found that gradient clipping will increase the error caused by gradient noise though clipping itself has no effect on the error bound. Here we reparameterize the observed gradient as $\tilde{G}=G/\max\left\{1,\frac{\|G\|}{C}\right\}+\epsilon_0$, where $\epsilon_0\sim\cN(0,\sigma^2I)$.
\begin{proposition}[Full version of Proposition \ref{prop:DP short}]
\label{prop:DP}
    Under Assumption \ref{assump:2layer} and Assumption \ref{assump:weight}, $y_i\in\{\pm 1\}$, we observe clipped noisy gradient descent steps $G/\max\left\{1,\frac{\|G\|}{C}\right\}+\epsilon$ with batch size $B$, where $ B^4\le \Tilde{O}(d)$ and $\epsilon\sim\cN(0,\sigma^2I)$. If $m \geq \tilde \Omega(\frac{ B^2d}{\nu^2\pi_{\min}^2}) $, then with appropriate tensor decomposition methods and proper weights, we can reconstruct input data with the error bound
    $$
\sqrt{\frac{1}{B}\sum_{i=1}^B\|\bx_i-\hat\bx_i\|^2 } \leq  \frac{1}{\min\{|\nu|,|\lambda|\} \pi_{\min}^2}\tilde O((B+\sigma_0\max\{1,\frac{\|G\|}{C}\})\sqrt{\frac{d}{m}} )\
$$
with high probability.
\end{proposition}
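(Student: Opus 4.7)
The plan is to combine the clipping-only analysis (Proposition~A.4) and the noise-only analysis (Proposition~A.5) by a simple rescaling trick. Let $R=\min\{1,C/\|G\|\}$, so the observation can be written as $\tilde{G}=R\cdot G+\epsilon_0$. Forming the tensor quantities as in Section~C.2, I define $\tilde{P}=\frac{1}{m}\sum_{j=1}^m \tilde{g}(\bw_j)(\bw_j\bw_j^\top-I)$ and $\tilde{\bT}=\frac{1}{m}\sum_{j=1}^m \tilde{g}(\bw_j)(\bw_j^{\otimes 3}-\bw_j\tilde{\otimes}I)$. These decompose as $\tilde{P}=R\hat{P}+P_\epsilon$ and $\tilde{\bT}=R\hat{\bT}+\bT_\epsilon$, where $\hat{P},\hat{\bT}$ are the unperturbed quantities from Eq.~(\ref{eq:start})--(\ref{eq:end}) and $P_\epsilon,\bT_\epsilon$ are the noise-induced terms exactly as in the proof of Proposition~A.5.

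For the deterministic part, I would invoke Lemmas~B.2 and~B.3 (applied to $\hat{P}-P$ and $\hat{\bT}(V,V,V)-\bT(V,V,V)$) and multiply through by $R\le 1$ to get $\|R\hat{P}-RP\|\le R\cdot \tilde{O}(B\sqrt{d/m})$ and $\|R\hat{\bT}(V,V,V)-R\bT(V,V,V)\|\le R\cdot \tilde{O}(B^{5/2}/\sqrt{m})$. For the stochastic part, I reuse verbatim the matrix Bernstein computations from the proof of Proposition~A.5: these give $\|P_\epsilon\|\le\tilde{O}(\sigma_0\sqrt{d/m})$ and $\|\bT_\epsilon(V,V,V)\|\le\tilde{O}(\sigma_0 B^{3/2}/\sqrt{m})$. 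Therefore
\begin{equation*}
\|\tilde{P}-RP\|\le\tilde{O}\!\left((RB+\sigma_0)\sqrt{\tfrac{d}{m}}\right),\qquad \|\tilde{\bT}(V,V,V)-R\bT(V,V,V)\|\le\tilde{O}\!\left((RB+\sigma_0)\sqrt{\tfrac{B^3}{m}}\right).
\end{equation*}

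To conclude, I apply the noisy tensor decomposition guarantee Theorem~B.1 to the rescaled population quantities $RP=\sum_i R\nu_i\bx_i\bx_i^\top$ and $R\bT=\sum_i R\lambda_i\bx_i^{\otimes 3}$. Rescaling does not change $\kappa=\lambda_{\max}/\lambda_{\min}$ (which equals $1$ when $y_i\in\{\pm 1\}$), but it multiplies $\nu_{\min}$ and $\lambda_{\min}$ by $R$. The error bound in Theorem~B.1 then reads
\begin{equation*}
\|\bx_i-\hat\bx_i\|\le\tilde{O}\!\left(\frac{(RB+\sigma_0)\sqrt{d/m}}{R|\nu|\pi_{\min}^2}\right)+\tilde{O}\!\left(\frac{(RB+\sigma_0)\sqrt{B^3/m}}{R|\lambda|\pi_{\min}^2}\right),
\end{equation*}
and since $1/R=\max\{1,\|G\|/C\}$, this simplifies to the claimed bound $\tfrac{1}{\min\{|\nu|,|\lambda|\}\pi_{\min}^2}\tilde{O}((B+\sigma_0\max\{1,\|G\|/C\})\sqrt{d/m})$ under the hypothesis $B^4\le\tilde{O}(d)$.

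The main obstacle I anticipate is verifying that the tensor decomposition assumption $\|S\|\le \mu\le O(\nu_{\min}\pi_{\min})$ of Theorem~B.1 still holds after the rescaling by $R$: the allowable perturbation bound shrinks to $O(R\nu_{\min}\pi_{\min})$ while the actual perturbation contains an additional $\sigma_0\sqrt{d/m}$ term that does \emph{not} shrink with $R$. This requires the implicit regime assumption $\sigma_0\sqrt{d/m}\le O(R|\nu|\pi_{\min})$, i.e., the noise must be small compared to the clipped-signal scale; otherwise reconstruction fails, which is precisely the qualitative trade-off recorded in Table~\ref{tab:bounds}. Apart from this condition, the remaining steps are mechanical recombinations of the clipping and noise arguments.
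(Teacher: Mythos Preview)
Your proposal is correct and follows essentially the same approach as the paper: define $R=\min\{1,C/\|G\|\}$, split $\tilde P=R\hat P+P_\epsilon$ and $\tilde{\bT}=R\hat{\bT}+\bT_\epsilon$, bound the deterministic part via Lemmas~\ref{lemma:P bound}--\ref{lemma:T bound} scaled by $R$ and the noise part via the matrix-Bernstein computations from Proposition~\ref{prop:noise}, then apply Theorem~\ref{prop:tensorerror} to the rescaled targets $RP,R\bT$ and use $1/R=\max\{1,\|G\|/C\}$. Your closing remark about the precondition $\|S\|\le O(R\nu_{\min}\pi_{\min})$ is a valid caveat that the paper leaves implicit in its hypothesis on $m$.
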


\begin{proof}
    For any $j$, the observed gradient updates for $a_j$ are $\tilde{g}(\bw_j)=g(\bw_j)/\max\left\{1,\frac{\|g\|}{C}\right\}+\epsilon_j$, where $\epsilon_j\sim\cN(0,\sigma_0^2)$. 
    We define $\Tilde{P}=\frac{1}{m}\sum_{j=1}^m \tilde{g}(\bw_j)(\bw_j\bw_j^\top-I)$ and $\tilde{\bT}=\frac{1}{m}\sum_{j=1}^m \tilde{g}(\bw_j)(\bw_j^{\otimes 3}-\bw_j \tilde{\otimes}I)$. $\hat{P}$, $\hat{\bT}$, $P$ and $\bT$ are defined in Eq. (\ref{eq:start}) to (\ref{eq:end}). Let $R=\min\{1,\frac{C}{\|G\|}\}$. Then we have $\Tilde{P}=R\hat{P}+\frac{1}{m}\sum_{j=1}^m \epsilon_j(\bw_j\bw_j^\top-I)$ and $\Tilde{\bT}=R\hat{\bT}+\frac{1}{m}\sum_{j=1}^m  \epsilon_j(\bw_j^{\otimes3}+\bw_j\tilde{\otimes}I)$. By the proof of Proposition \ref{prop:clipping} and Proposition \ref{prop:noise}, we have
    \begin{equation}
        \|\Tilde{P}-RP\|\le R\|\hat{P}-P\|+\|P_\epsilon\|\le \tilde{O}((RB+\sigma_0)\frac{d}{m})
    \end{equation}
    and
    \begin{equation}
    \begin{split}
        \|\Tilde{\bT}(V,V,V)-R\bT(V,V,V)\|&\le R\|\hat{\bT}(V,V,V)-\bT(V,V,V)\|+\|\bT_\epsilon(V,V,V)\|\\
        &\le \tilde{O}((RB+\sigma_0)\sqrt{\frac{B^3}{m}}).
    \end{split}
    \end{equation}
    Note that the smallest components of $RP$ and $R\bT$ are $R\nu_{\min}$ and $R\lambda_{\min}$ respectively. Then by Theorem \ref{prop:tensorerror}, 
    \begin{align*}
    \|\bx_i-\hat{\bx}_i\|&\le \tilde{O}(\frac{(RB+\sigma_0)\sqrt{d/m}}{R|\mu|\pi_{\min}^2/K})+\tilde{O}(\frac{(RB+\sigma_0)K\sqrt{B^4/m}}{R|\lambda|\pi_{\min}^2/K})\\
    &\le\frac{K}{\min\{|\nu|,|\lambda|\}\pi_{\min}^2} \tilde{O}((B+\sigma_0\max\{1,\frac{\|G\|}{C}\})\sqrt{\frac{d}{m}})
    \end{align*}
    for all $i=1,\dots,B$ with high probability.
\end{proof}

% \subsection{Dropout}
% When Dropout is used in the training, only a subset of weights are used. Since we can know exactly from gradient update which nodes are dropped, we can use the gradients of the nodes without dropping and know the exact number of these nodes. Denote $\hat{P}=\frac{m}{m'}\sum_{j=1}^m g(\bw_j)H_2(\bw_j)$ and $\hat{\bT}=\frac{m}{m'}\sum_{j=1}^m g(\bw_j)H_3(\bw_j)$. Then the error $\|P-\hat{P}\|\le \tilde{O}(\frac{B\sqrt{d}}{\sqrt{m'}})$ and $\|\bT-\hat{\bT}\|\le \tilde{O}(\frac{B\sqrt{d}}{\sqrt{m'}})$.

\section{Analysis of Reconstruction Lower Bound}
\label{append:lower bound}
\subsection{The Minimax Risk}
We start with the following definition of an information-theoretical minimax risk:
\begin{definition}
\label{defminimax}
The minimax risk with batch size $n$ is defined as
\begin{equation}
\label{eq:lower 2}
    R_L=\left(\min_{\hat{S}=\hat{S}(O)}\max_{S\subset \cX^B}\min_{\pi}\E\left[d(S,\pi(\hat{S}))\right]\right)^{1/2}.
\end{equation}
Here $d(S,\pi(\hat{S}))=\frac{1}{B}\sum_{i=1}^B\|S_i-\hat{S}_{\pi(i)}\|^2$, where $\pi$ is a permutation of $[B]$. 
$\hat{S}=\hat{S}(O)$ is ranged over all algorithms represented as functions of $O$,
$O=O(S)$ is a random variable related with S, $S$ is ranged over all possible input data and $\pi\in S_n$
is ranged over all permutations with rank $n$.
\end{definition}
% \begin{remark}
% The minimum over permutation $\min\limits_{\pi}$ is added because batched gradient descent adds gradient together in an unordered manner, so the reconstruction algorithm could only identify the set of data points and cannot identify their correspondence.
% \end{remark}
The risk defined above provides a lower bound for the minimax expected risk of all attacking algorithms. As an information-theoretical or statistical minimax risk, we model the observation with a random noise. In our following analysis, $O$ is interpreted
as model gradients with a random Gaussian noise $O(S)=\nabla L(S;\Theta)+\epsilon$. The expectation in \ref{eq:lower 2} is taken over the random noise $\epsilon$.
% In cases where batch size equals 1, we get a simplified version of the minimax lower bound
% (same as the above definition):
% \begin{definition}
%     The minimax risk is defined as
%     \[R_{full}=\min\limits_{\hat{S}=\hat{S}(O)}\max\limits_{S}\E_{O(S)}\left[
%     d(S,\hat{S})\right]\]
%     where $d(S,\hat{S})=\|S-\hat{S}_{\pi}\|^2$.
%     $\hat{S}=\hat{S}(O)$ is ranged over all reconstruction algorithms represented as functions of $O$,
%     $O$ is a random variable related with S, $S$ is ranged over all possible input data.
% \end{definition}

\subsection{Analysis of 2-Layer Neural Networks}
In our setup of data reconstruction, we consider reconstructing training data based on the model gradients. Then for a 2-layer neural network $f(\bx;\Theta)$ with data and lables $\{(\bx_i,y_i)\}_{i=1}^B$ and square loss function $\ell$, $S=\{\bx_1,\dots,\bx_B\}$ is the input data and $O(S)=\sum_{i=1}^B\nabla_\Theta \ell(f(\bx_i),y_i)$ is the gradient. We further assume we already have information about the learning rate and the response variables. 

% Denote the model as $f_\theta(\bx)$ and data points as $(x_i,y_i)$ where $i=1,\dots,B$. Then in the following section 
% the correspondence of notations in Definition \ref{defminimax} are $S={x_1,\dots,x_B}$, $S_i=x_i$, $O(S)=\sum_{i=1}^{B}\nabla_\theta \ell (f(\bx_i),y_i)$

% \begin{lemma}
% \label{crminimax}
% Under certain regularity conditions and using the loss function $L (S,\hat{S})=\frac{1}{n}\norm{S-\hat{S}}$, the minimax risk is the Cramer-Rao lower bound.
% \end{lemma}
% {\red Yuxiao: I'm still not sure how to use the C-R bound to show our bound, more explanation is to be added}

%This section still has problems with formulation since there is no rigorous support for the bound
%An abuse of notation exists: activation function and noise scale are both sigma

Since we view data reconstruction as a statistical problem to estimate $S$ from $O(S)$, the expectation of $d(S,\hat{S})$ can be lower bounded by Cramer-Rao lower bound \citep{cramer1999mathematical,rao1992information}. In this problem, we define the Jacobian of the observation $O$ by $J(S)=\nabla_S O(S)$. Then the minimax risk
\begin{equation}
\label{eq:cr bound}
    R_L^2\ge \tr((J(S) J(S)^\top)^{-1})\sigma^2.
\end{equation}

To analyze the lower bound of minimax risk in data reconstruction, we first introduce some lemmas. For ease of calculation, we loosen the minimax risk as below:
\begin{lemma}
For positive-definite symmetric matrix $\mM\in\R^{d\times d}$, we have $\tr(\mM^{-1})\ge\frac{d^2}{\tr(\mM)}$
\end{lemma}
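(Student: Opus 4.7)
The plan is to use the spectral decomposition of $\mM$ to reduce the matrix inequality to a scalar inequality about its eigenvalues, and then invoke a standard inequality (Cauchy--Schwarz, or equivalently AM--HM) to finish.

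First I would diagonalize: since $\mM$ is positive-definite symmetric, write $\mM = Q\Lambda Q^\top$ with $Q$ orthogonal and $\Lambda = \mathrm{diag}(\lambda_1,\ldots,\lambda_d)$, where every $\lambda_i > 0$. Then $\mM^{-1} = Q\Lambda^{-1}Q^\top$, and the trace is invariant under this similarity transformation, so
\[
\tr(\mM) = \sum_{i=1}^d \lambda_i, \qquad \tr(\mM^{-1}) = \sum_{i=1}^d \frac{1}{\lambda_i}.
\]

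Next I would apply the Cauchy--Schwarz inequality to the vectors $(\sqrt{\lambda_1},\ldots,\sqrt{\lambda_d})$ and $(1/\sqrt{\lambda_1},\ldots,1/\sqrt{\lambda_d})$ (valid since all $\lambda_i>0$), which yields
\[
d^2 = \Big(\sum_{i=1}^d \sqrt{\lambda_i}\cdot \tfrac{1}{\sqrt{\lambda_i}}\Big)^2 \le \Big(\sum_{i=1}^d \lambda_i\Big)\Big(\sum_{i=1}^d \tfrac{1}{\lambda_i}\Big) = \tr(\mM)\,\tr(\mM^{-1}).
\]
Dividing through by $\tr(\mM)>0$ gives the desired bound $\tr(\mM^{-1}) \ge d^2/\tr(\mM)$. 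Equivalently, this is exactly the AM--HM inequality applied to $\lambda_1,\ldots,\lambda_d$.

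There is no real obstacle here; the proof is two or three lines once one notes that positive-definiteness guarantees all eigenvalues are strictly positive (so both traces and the Cauchy--Schwarz step are well-defined). I would optionally remark that equality holds iff $\lambda_1 = \cdots = \lambda_d$, i.e.\ $\mM$ is a positive multiple of the identity, since this characterizes equality in Cauchy--Schwarz and will be relevant when the bound is used to analyze the tightness of the statistical minimax lower bound in \eqref{eq:cr bound}.
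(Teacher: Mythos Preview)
Your proof is correct and essentially identical to the paper's: both reduce to the eigenvalues via the spectral decomposition and then apply Cauchy--Schwarz (the paper calls it ``Cauchy's inequality'') to obtain $\tr(\mM)\tr(\mM^{-1})\ge d^2$. Your additional remark on the equality case is a harmless extra.
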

\begin{proof}
Let $\lambda_1,\dots,\lambda_d$ be the eigenvalues of $\mM$. Since $\mM$ is positive definite, the eigenvalues are real and positive.
Notice that $\tr(\mM)=\sum_{i=1}^{d}\lambda_i$ and $\tr(\mM^{-1})=\sum_{i=1}^{d}\frac{1}{\lambda_i}$, by Cauchy's inequality we have
\[\tr(\mM)\tr(\mM^{-1})=\sum_{i=1}^{d}\lambda_i\sum_{i=1}^{d}\frac{1}{\lambda_i}\le d^2.\]
Rearranging the inequality yields the desired result.
\end{proof}
\begin{remark}
\label{rmk:cr bound}
Applying the lower bound to $R_L$ in Eq. (\ref{eq:cr bound}), the bound is now 
\begin{equation}
    R_L^2\ge\frac{d^2}{\tr(J(S) J(S)^\top)}\sigma^2.
\end{equation}
\end{remark}

To analyze 2-layer neural networks, we make some assumptions about the model and how the model parameters are generated for analyzing scaling.
\begin{assumption}
\label{assumption2layer}
Let $\sigma$ be an activation function $\R\to\R$ such that $|\sigma'(x)|<C$ and $|\sigma''(x)|<C$ for any $x$ for some constant $C$.
Let the 2-layer network $f(\bx)$ be defined as $f(\bx)=\sum_{j=1}^{m}a_j\sigma(\bw_j^\top \bx)$, where $a_j\in\R$ are chosen from $\cN(0,\frac{1}{m^2})$ and $\bw_j\in\R^d$ are chosen from $\cN(0,I_d)$.
Let data points $(\bx_i,y_i)$ satisfy $\|\bx_i\|=1$ and
$|y_i|\in \{\pm 1\}$.
\end{assumption}
\begin{remark}
Popular activation functions including $ReLU$, $Sigmoid$, $LeakyReLU$ and $Tanh$ all satisfy the assumptions about the activation function.
\end{remark}
\begin{lemma}
\label{rmk:lbscaling}
With high probability we have the following scaling for all $i=1,\dots,B$ and $j=1,\dots,m$:
\begin{itemize}
    \item $a_j=\Tilde{O}(\frac{1}{m})$
    \item $\norm{\bw_j}=\Tilde{O}(\sqrt{d})$
    \item $|\bx_i^\top\bw_j|=\Tilde{O}(1)$
    \item $f(\bx_i)=\Tilde{O}(1)$.
\end{itemize}
Here the log terms are omitted.
\end{lemma}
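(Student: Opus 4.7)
The lemma collects four elementary concentration estimates for Gaussian quantities, each of which follows from a standard tail bound combined with a union bound. The plan is to handle the four bullets in order, since each one feeds into the next.

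For the first bullet, I would use the standard Gaussian tail bound: since $a_j\sim\mathcal{N}(0,1/m^2)$, we have $\mathbb{P}(|a_j|>t/m)\le 2e^{-t^2/2}$. Taking $t=\Theta(\sqrt{\log(m/\delta)})$ and union bounding over $j\in[m]$ yields $\max_j|a_j|=\tilde{O}(1/m)$ with probability $1-\delta/4$. For the second bullet, $\|\bw_j\|^2$ is chi-squared with $d$ degrees of freedom, so Laurent--Massart-type concentration gives $\|\bw_j\|^2\le d+\tilde{O}(\sqrt{d})$, hence $\|\bw_j\|=\tilde{O}(\sqrt{d})$; union bounding over $j\in[m]$ loses only a log factor absorbed by $\tilde{O}$.

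For the third bullet, since $\|\bx_i\|=1$ and $\bw_j\sim\mathcal{N}(0,I_d)$, the inner product $\bx_i^\top\bw_j\sim\mathcal{N}(0,1)$, so Gaussian tails together with a union bound over the $Bm$ pairs $(i,j)\in[B]\times[m]$ give $\max_{i,j}|\bx_i^\top\bw_j|=\tilde{O}(1)$. For the fourth bullet, I would condition on the first-layer weights $\bw_j$ and view $f(\bx_i)=\sum_{j=1}^m a_j\sigma(\bw_j^\top\bx_i)$ as a centered Gaussian (in the randomness of the $a_j$) with variance $(1/m^2)\sum_{j=1}^m\sigma(\bw_j^\top\bx_i)^2$. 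Using $|\sigma(z)|\le|\sigma(0)|+C|z|$ (from $|\sigma'|<C$) together with the previous bullet gives $|\sigma(\bw_j^\top\bx_i)|=\tilde{O}(1)$ uniformly in $j$, so the conditional variance is $\tilde{O}(1/m)$. A final Gaussian tail bound then yields $|f(\bx_i)|=\tilde{O}(1/\sqrt{m})=\tilde{O}(1)$, and a union bound over $i\in[B]$ preserves this.

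Intersecting the four high-probability events (each has failure probability at most $\delta/4$) gives the claim simultaneously for all $i,j$. There is no conceptual obstacle here: every step is a textbook Gaussian or $\chi^2$ concentration bound. The only bookkeeping is to track how the failure probabilities compose through the union bounds over $m$, $d$, and $B$, and to verify that each of these contributes only polylogarithmic factors that are absorbed into the $\tilde{O}(\cdot)$ notation; in particular, the fourth bullet's Gaussianity is crucial and relies on conditioning on the $\bw_j$'s before invoking randomness in the $a_j$'s.
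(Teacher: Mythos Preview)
Your proposal is correct and follows exactly the approach the paper intends: the paper's own proof consists of the single sentence ``The proof is trivial with concentration bounds,'' and what you have written is precisely the standard Gaussian/$\chi^2$ tail-plus-union-bound argument that fleshes this out. There is nothing to add or correct.
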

The proof is trivial with concentration bounds.

We first prove the result using batch size 1 and then generalize to batch size higher than 1.

\begin{lemma}
\label{lem:lbsimple}
    Under Assumption \ref{assumption2layer} and with high probability,
    the lower bound of minimax risk $R_L$ with the observation
    \[
    O(S)=\nabla_{\theta}(y-f(\bx))^2+\vepsilon
    \]
    is of the scale $R_L\ge \Tilde{\Omega}(\sigma\sqrt{\frac{d}{m}})$, where $\epsilon$ follows $\cN(0,\sigma^2\mI_{md+m})$.
\end{lemma}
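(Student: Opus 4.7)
The strategy is to apply the Cramer--Rao machinery in the loosened form of Remark \ref{rmk:cr bound}, which reduces the task to upper bounding $\tr(J(S)J(S)^\top)$, where $J(S) = \nabla_{\bx} O(S)\in\mathbb{R}^{(md+m)\times d}$. The plan is to write $J(S)$ blockwise according to the partition of $\theta$ into the $\{a_j\}$ and $\{\bw_j\}$ coordinates, bound the squared Frobenius norm of each block using the scaling in Lemma \ref{rmk:lbscaling}, and then sum. Because each block involves only polynomials in $a_j$, $\sigma^{(k)}(\bw_j^\top\bx)$, $\bw_j$ and $\bx$, high-probability scalings will be the only ingredient needed once the blocks are written down.

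Concretely, letting $r = y - f(\bx)$, one has
\[
\partial_{a_j}\ell = -2r\,\sigma(\bw_j^\top\bx),\qquad
\partial_{\bw_j}\ell = -2r\,a_j\,\sigma'(\bw_j^\top\bx)\,\bx,
\]
so that, using $\nabla_\bx r = -\nabla_\bx f = -\sum_k a_k\sigma'(\bw_k^\top\bx)\bw_k$,
\[
\nabla_\bx\partial_{a_j}\ell = 2\sigma(\bw_j^\top\bx)\,\nabla_\bx f - 2r\,\sigma'(\bw_j^\top\bx)\,\bw_j,
\]
\[
\nabla_\bx\partial_{\bw_j}\ell = 2a_j\sigma'(\bw_j^\top\bx)\,\bx(\nabla_\bx f)^\top - 2r\,a_j\,\sigma''(\bw_j^\top\bx)\,\bx\bw_j^\top - 2r\,a_j\,\sigma'(\bw_j^\top\bx)\,I.
\]
A short concentration argument (variance calculation together with Lemma \ref{rmk:lbscaling}) gives $\|\nabla_\bx f\| = \tilde{O}(\sqrt{d/m})$ with high probability. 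Combining this with $|a_j|=\tilde{O}(1/m)$, $\|\bw_j\|=\tilde{O}(\sqrt{d})$, $|r|=\tilde{O}(1)$, and boundedness of $\sigma',\sigma''$, one gets
\[
\|\nabla_\bx\partial_{a_j}\ell\|^2 = \tilde{O}(d),\qquad
\|\nabla_\bx\partial_{\bw_j}\ell\|_F^2 = \tilde{O}(d/m^2),
\]
so summing over $j\in[m]$ yields
\[
\tr(J(S)J(S)^\top) = \sum_{j=1}^m\Bigl(\|\nabla_\bx\partial_{a_j}\ell\|^2 + \|\nabla_\bx\partial_{\bw_j}\ell\|_F^2\Bigr) = \tilde{O}(md).
\]

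Plugging into Remark \ref{rmk:cr bound} gives
\[
R_L^2 \geq \frac{d^2\sigma^2}{\tr(J(S)J(S)^\top)} \geq \tilde{\Omega}\!\left(\frac{d\sigma^2}{m}\right),
\]
i.e.\ $R_L\geq\tilde{\Omega}(\sigma\sqrt{d/m})$, as claimed. The only nontrivial step I anticipate is the concentration bound $\|\nabla_\bx f\|=\tilde{O}(\sqrt{d/m})$, since it requires estimating a sum of $m$ vector-valued random variables with heterogeneous scaling; but this is standard since the summands are independent mean-zero with explicit sub-Gaussian tails under Assumption \ref{assumption2layer}. Everything else is direct algebra on the two blocks, and the bound is tight in the sense that the $a_j$-block dominates and produces exactly the scaling $\sqrt{d/m}$ matching the algorithmic upper bound.
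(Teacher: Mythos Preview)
Your proposal is correct and follows essentially the same route as the paper: compute the two Jacobian blocks explicitly, bound each term's Frobenius norm using the scalings in Lemma~\ref{rmk:lbscaling}, sum over $j$, and invoke Remark~\ref{rmk:cr bound}. One simplification: the concentration bound $\|\nabla_\bx f\|=\tilde O(\sqrt{d/m})$ that you flag as the only nontrivial step is not needed---the paper uses the crude triangle-inequality estimate $\|\nabla_\bx f\|\le C\sum_j |a_j|\,\|\bw_j\|=\tilde O(\sqrt{d})$, which already suffices because the dominant contribution to $\tr(J^\top J)$ comes from the $r\,\sigma'(\bw_j^\top\bx)\bw_j$ term in the $a_j$-block, and that term is $\tilde O(d)$ regardless of how sharply you bound $\|\nabla_\bx f\|$.
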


\begin{proof}
We first calculate the model gradients:
\[
\begin{cases}
    \nabla_{a_j}(y-f(\bx))^2=2(y-f(\bx))\sigma(\bw_j^\top \bx)\\
    \nabla_{\bw_j}(y-f(\bx))^2=2(y-f(\bx))a_j\sigma'(\bw_j^\top \bx)\bx.
\end{cases}
\]
We then calculate the Jacobian of the gradients on $x$:
\[
\begin{cases}
    \nabla_{\bx}\nabla_{a_j}(y-f(\bx))^2=2(y-f(\bx))\sigma'(\bw_j^\top \bx)\bw_j-2\sigma(\bw_j^\top \bx)\vh\\
    \nabla_{\bx}\nabla_{\bw_j}(y-f(\bx))^2=2(y-f(\bx))a_j\left(\sigma''(\bw_j^\top \bx)\bw_j\bx^\top+\sigma'(\bw_j^\top \bx)\mI\right)-2a_j\sigma'(\bw_j^\top \bx)\vh \bx^\top,
\end{cases}
\]
where $\vh=\nabla_\bx f(x)=\sum_{j=1}^{m}(a_j\sigma'(\bw_j^\top \bx)\bw_j)$. Since $a_j=\Tilde{O}(\frac{1}{m})$ with high probability and $\bw_j=\Tilde{O}(\sqrt{d})$ with high probability,
we have with high probability that
\[\norm{\vh}\le C\sum_{j=1}^{m}a_j\norm{\bw_j}=\Tilde{O}(\sqrt{d}).\]

Denote $J(\bx)$ as the Jacobian of model gradients on the input data. Notice that $y-f(\bx)=\Tilde{O}(1)$,
to calculate the lower bound, we first calculate $\tr(J(\bx)^\top J(\bx))$:
\begin{align*}
\tr(J(\bx)^\top J(\bx))=& \sum_{j=1}^{m}\norm{\nabla_{\bx}\nabla_{a_j}(y-f(\bx))^2}^2+\sum_{j=1}^{m}\norm{\nabla_{\bx}\nabla_{\bw_j}(y-f(\bx))^2}_F^2\\
\le&8(y-f(\bx))^2\sum_{j=1}^{m}\left[\norm{\sigma'(\bw_j^\top \bx)\bw_j}^2+\norm{a_j\left(\sigma''(\bw_j^\top \bx)\bw_j\bx^\top+\sigma'(\bw_j^\top \bx)\mI\right)}_F^2\right]\\
&+8\sum_{j=1}^{m}\left[\sigma^2(\bw_j^\top \bx)\norm{\vh}^2+a_j^2\sigma'^2(\bw_j^\top \bx)\norm{\vh}^2\norm{\bx}^2\right]
\end{align*}
where $\norm{\cdot}_F$ is the Frobenius norm. Note that the inequality used the fact that $\norm{A+B}\le 2(\norm{A}+\norm{B})$ for any $A$ and $B$.

For the first part, we have with high probability that
\begin{align*}
A_j:=&\norm{\sigma'(\bw_j^\top \bx)\bw_j}^2+\norm{a_j\left(\sigma''(\bw_j^\top \bx)\bw_j\bx^\top+\sigma'(\bw_j^\top \bx)\mI\right)}_F^2\\
=&\sigma'^2(\bw_j^\top \bx)\norm{\bw_j}^2+a_j^2\sigma''^2(\bw_j^\top \bx)\norm{\bw_j}^2\norm{\bx}^2\\
&+2a_j^2\sigma'(\bw_j^\top \bx)\sigma''(\bw_j^\top \bx)\bw_j^\top \bx+da_j^2\sigma'^2(\bw_j^\top \bx)\\
\le &C^2\left[\norm{\bw_j}^2+a_j^2\norm{\bw_j}^2+2a_j^2|\bw_j^\top \bx|+da_j^2\right]\\
= &C^2\left[\Tilde{O}(d)+\Tilde{O}(\frac{1}{m^2})\Tilde{O}(d)+2\Tilde{O}(\frac{1}{m^2})+d\Tilde{O}(\frac{1}{m^2})\right]\\
= &\Tilde{O}(d).
\end{align*}
For the second part, we have with high probability that
\begin{align*}
B_j:=&\sigma^2(w_j^\top x)\norm{\vh}^2+a_j^2\sigma'^2(w_j^\top x)\norm{\vh}^2\norm{x}^2\\
\le&C^2\Tilde{O}(d)+C^2\Tilde{O}(\frac{1}{m^2})\Tilde{O}(d)\\
=&\Tilde{O}(d).
\end{align*}

Therefore with a high probability,
\begin{align*}
\tr(J(\bx)^\top J(\bx))\le &8(y-f(\bx))^2\sum_{j=1}^{m}A_j+8\sum_{j=1}^{m}B_j\\
=&8O(1)\sum_{j=1}^{m}O(d)\\
=&O(md)
\end{align*}

By Remark \ref{rmk:cr bound}, we have $R_L^2\ge\frac{d^2}{\tr(J(\bx)^\top J(\bx))}\sigma^2=\Tilde{\Omega}(\frac{d}{m})\sigma^2$. Then $R_L\ge \Tilde{\Omega}(\sigma\sqrt{\frac{d}{m}})$.
\end{proof}

We now analyze the minimax risk of gradients summed over batch size $B$: $\vg(\bx_{1:B})=\sum_{i=1}^{B}\vg(\bx_i)$.
\begin{theorem}[Full version of Theorem \ref{thm:batchedmain short}]
\label{batchedmain}
Under Assumption \ref{assumption2layer} and with high probability,
    the lower bound of minimax risk $R_L$ with the observation with $B$ input samples
    \[
    O(S)=\sum_{i=1}^B\nabla_{\theta}(y-f(\bx_i))^2+\epsilon
    \]
    is of the scale $R_L\ge \Tilde{\Omega}(\sigma\sqrt{\frac{d}{m}})$, where $\epsilon$ follows $\cN(0,\sigma^2\mI_{md+m})$.
\end{theorem}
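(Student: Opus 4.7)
The plan is to reduce the batched lower bound to the single-sample analysis of Lemma~\ref{lem:lbsimple} by exploiting the additivity of the observation across samples. Viewing $S=(\bx_1,\ldots,\bx_B)\in\R^{Bd}$ as a concatenated vector, the Jacobian $J(S):=\nabla_S O(S)\in\R^{m(d+1)\times Bd}$ is a horizontal concatenation
\[
J(S)=\bigl[J(\bx_1)\ \big|\ J(\bx_2)\ \big|\ \cdots\ \big|\ J(\bx_B)\bigr],
\]
since $O(S)=\sum_i\nabla_\theta(y_i-f(\bx_i))^2+\vepsilon$ depends on each $\bx_i$ only through its own single-sample gradient. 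Consequently
\[
\tr\!\bigl(J(S)J(S)^\top\bigr)=\sum_{i=1}^B \tr\!\bigl(J(\bx_i)J(\bx_i)^\top\bigr)=\tilde O(Bmd)
\]
with high probability, where the per-block estimate $\tr(J(\bx_i)J(\bx_i)^\top)=\tilde O(md)$ is exactly what was already computed inside the proof of Lemma~\ref{lem:lbsimple} under Assumption~\ref{assumption2layer} and the scaling in Lemma~\ref{rmk:lbscaling}.

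I would then feed this into the same Cram\'er--Rao style bound used in Eq.~\eqref{eq:cr bound}. For the Gaussian observation model $O(S)=\sum_i\vg(\bx_i)+\vepsilon$ with $\vepsilon\sim\cN(0,\sigma^2\mI)$, the Fisher information for $S\in\R^{Bd}$ is $J(S)^\top J(S)/\sigma^2$, so
\[
\E\|\hat S-S\|^2\ \ge\ \sigma^2\,\tr\!\bigl((J(S)^\top J(S))^{-1}\bigr)\ \ge\ \sigma^2\cdot\frac{(Bd)^2}{\tr(J(S)^\top J(S))}\ \ge\ \tilde\Omega\!\left(\frac{\sigma^2 Bd}{m}\right),
\]
where the middle step applies $\tr(M^{-1})\ge k^2/\tr(M)$ to the PSD matrix $J^\top J$ with $k=Bd$. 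Since the loss $d(S,\hat S)=\tfrac1B\|S-\hat S\|^2$ normalizes by $B$, this gives $R_L^2\ge \tilde\Omega(\sigma^2 d/m)$, i.e.\ the claimed rate $R_L\ge \tilde\Omega(\sigma\sqrt{d/m})$.

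The genuinely new difficulty beyond the single-sample proof is the $\min_\pi$ in the definition of $R_L$: Cram\'er--Rao controls unpermuted squared error, but a clever estimator could in principle exploit relabelings, and I expect this to be the main obstacle. My plan is to handle it by localization. Pick a reference configuration $S_0=(\bx_1^\star,\ldots,\bx_B^\star)$ whose coordinates are pairwise separated by an absolute constant, and restrict the outer $\max_S$ to a Euclidean neighborhood $\mathcal{U}$ of $S_0$ of small constant radius. Inside $\mathcal{U}$ the identity permutation uniquely attains $\min_\pi d(S,\pi(\hat S))$ whenever $\hat S\in\mathcal{U}$, so the permutation symmetry becomes trivial and the Cram\'er--Rao calculation above transfers directly. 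Any estimator that, for some $S\in\mathcal{U}$, lands outside $\mathcal{U}$ already pays loss $\Omega(1)$, which dominates the target $\tilde O(\sigma\sqrt{d/m})$ in the regime of interest, so such estimators cannot be minimax. This reduction, together with the trace bound and Cram\'er--Rao step above, yields the theorem.
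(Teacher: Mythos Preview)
Your proof is correct and follows essentially the same route as the paper: decompose the batched Jacobian as a horizontal concatenation of the single-sample Jacobians, sum the per-sample trace bounds $\tr(J(\bx_i)^\top J(\bx_i))=\tilde O(md)$ from Lemma~\ref{lem:lbsimple}, and apply the Cram\'er--Rao/trace inequality of Remark~\ref{rmk:cr bound} with the $1/B$ normalization to obtain $R_L\ge\tilde\Omega(\sigma\sqrt{d/m})$. Your localization argument to handle the $\min_\pi$ is actually more careful than the paper, which simply applies the Cram\'er--Rao step without addressing the permutation at all.
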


\begin{proof}
We first notice that the Jacobian $J_{1:B}(\bx_{1:B})$ equals a concatenate of $J_{1}(\bx_1),\dots,J_{B}(\bx_B)$:
\[
J_{1:B}(\bx_{1:B})=\begin{pmatrix}
J_{1}(\bx_1), \dots, J_{B}(\bx_B)
\end{pmatrix}.
\]
Therefore 
\begin{align*}
\tr(J_{1:B}(\bx_{1:B})^\top J_{1:B}(\bx_{1:B}))=&\sum_{i=1}^{B}\tr(J_{i}(\bx_i)^\top J_{i}(\bx_i))\\
=&O(mdB),
\end{align*}
Note that in Definition \ref{defminimax}, a scaling of $\frac{1}{B}$ was added to the sum of risks. By Remark \ref{rmk:cr bound},
we have the lower bound
$R_L^2\ge\frac{1}{B}\frac{B^2d^2}{\tr(J_{1:B}(\bx_{1:B})^\top J_{1:B}(\bx_{1:B}))}\sigma^2=\Tilde{\Omega}(\frac{d}{m})\sigma^2$. Then $R_L\ge \Tilde{\Omega}(\sigma\sqrt{\frac{d}{m}})$.
\end{proof}

\section{Proofs with Lower Bounds with Defenses}
\subsection{Dropout}
Dropout refers to randomly choosing certain entries of the model gradient and setting them to 0.
Suppose the gradient is dropped out with a certain probability $p$, then we only obtain $1-p$ of gradient information. In other words
each row of $J(x)$ is similar to the Jacobian matrix in Theorem \ref{batchedmain} but removed with probability $p$. Therefore our lower bound under high probability is
\[R_{full}\ge \sqrt{\frac{1}{B}\frac{\sigma^2 d^2 B^2}{(1-p)\tr(J(\vx)^\top J(\vx))}}=\Tilde{\Omega}\left(\sigma\sqrt{\frac{d}{(1-p)m}}\right)\]
where $J(\vx)$ is the complete Jacobian matrix without deleted rows.

\subsection{Gradient Clipping}
Clipping reduces the size of the gradient when the gradient size is too large. The noisy gradient could be written as 
\[\min\left(\norm{\nabla_\theta \sum_{i=1}^{B}(y-f(\bx_i))^2},C\right)\frac{\nabla_\theta \sum_{i=1}^{B}(y-f(\bx_i))^2}{\norm{\nabla_\theta \sum_{i=1}^{B}(y-f(\bx_i))^2}}+\vepsilon,\]
where $\theta$ stands for the model parameters and $\vepsilon$ is random noise with distribution $N(0,\frac{\sigma^2}{m^2}\mI_{md+m})$.

When the norm of the gradient is smaller than the constant $C$, the lower bound remains unchanged.
When the norm is higher than the constant $C$, the lower bound is of scale $\Omega(\frac{d\norm{\nabla_\theta \sum_{i=1}^{B}(y-f(\bx_i))^2}^2}{mC^2})\sigma^2$. (When the number of model parameters is high we could ignore the effect of changes in the norm on the bound.)
\begin{lemma}
    Under Assumption \ref{assumption2layer}, and for some constant $C$, when the norm of the model gradients $\norm{G}\ge C$, with high probability the lower bound of
minimax risk $R_L$ with the observation
    \[
    O(S)=C\frac{G}{\norm{G}}+\vepsilon=C\frac{\nabla_\theta \sum_{i=1}^{B}(y-f(\bx_i))^2}{\norm{\nabla_\theta \sum_{i=1}^{B}(y-f(\bx_i))^2}}+\vepsilon
    \]
    is of the scale $R_L\ge\Tilde{\Omega}(\sigma\sqrt{\frac{d}{m}}\frac{\norm{G}}{C})$, where $\vepsilon$ follows $\cN(0,\sigma^2\mI_{md+m})$ and $G=\nabla_\theta\sum_{i=1}^{B}(y-f(\bx_i))^2$.
\end{lemma}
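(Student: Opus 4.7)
The plan is to mirror the Cramer-Rao strategy used in Theorem \ref{batchedmain} (via Remark \ref{rmk:cr bound}), but applied to the \emph{clipped} forward map $O(S) = C\, G(S)/\|G(S)\|$ rather than to $G(S)$ itself. Since $\epsilon \sim \cN(0,\sigma^2 I_{md+m})$ is isotropic Gaussian additive noise on the clipped gradient, the Fisher information about $S$ is governed by $J_{\mathrm{clip}}(S) := \nabla_S O(S)$ exactly as before, so it suffices to upper bound $\tr(J_{\mathrm{clip}}^\top J_{\mathrm{clip}})$ and substitute into $R_L^2 \ge B d^2\sigma^2 / \tr(J_{\mathrm{clip}}^\top J_{\mathrm{clip}})$.

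First I would compute $J_{\mathrm{clip}}$ explicitly. Writing $\hat G := G/\|G\|$ and applying the quotient rule to the normalization $G \mapsto G/\|G\|$, one obtains
\begin{equation*}
J_{\mathrm{clip}}(S) \;=\; \frac{C}{\|G(S)\|}\,\bigl(I - \hat G\, \hat G^{\top}\bigr)\, J(S),
\end{equation*}
where $J(S) = \nabla_S G(S)$ is the unclipped Jacobian already analyzed in Theorem \ref{batchedmain}. The matrix $P^{\perp} := I - \hat G \hat G^{\top}$ is the orthogonal projector onto the complement of $\hat G$ in observation space, so $P^{\perp}$ is an idempotent contraction and
\begin{equation*}
\tr\bigl(J_{\mathrm{clip}}^{\top} J_{\mathrm{clip}}\bigr) \;=\; \frac{C^{2}}{\|G\|^{2}}\,\tr\bigl(J^{\top} P^{\perp} J\bigr) \;\le\; \frac{C^{2}}{\|G\|^{2}}\,\tr\bigl(J^{\top} J\bigr).
\end{equation*}

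Next, I would invoke the high-probability bound $\tr(J^\top J) \le \tilde O(m d B)$ already established in the proof of Theorem \ref{batchedmain} under Assumption \ref{assumption2layer} (this is where the concentration for $a_j, \bw_j$, and $f(\bx_i)$ enters; nothing about the clipping step changes those estimates). Plugging into Remark \ref{rmk:cr bound} in the batched form used in Theorem \ref{batchedmain},
\begin{equation*}
R_L^{2} \;\ge\; \frac{1}{B}\cdot\frac{(Bd)^{2}\sigma^{2}}{\tr(J_{\mathrm{clip}}^{\top} J_{\mathrm{clip}})} \;\ge\; \frac{B d^{2}\sigma^{2}\,\|G\|^{2}}{C^{2}\,\tilde O(mdB)} \;=\; \tilde\Omega\!\left(\frac{d\sigma^{2}}{m}\cdot\frac{\|G\|^{2}}{C^{2}}\right),
\end{equation*}
which upon taking square roots yields the claimed $R_L \ge \tilde\Omega\bigl(\sigma\sqrt{d/m}\cdot\|G\|/C\bigr)$.

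The main obstacle I anticipate is the rank-deficiency introduced by $P^\perp$: the projector kills one direction in observation space, so one must verify that $J^\top P^\perp J$ is still positive definite (equivalently, that the removed direction $\hat G$ is not aligned with the row-span of $J$ in a way that collapses the Fisher information on $S$). Because the number of parameters $md+m$ far exceeds $Bd$ under the paper's scaling, generic non-degeneracy is easy to check, but writing this down carefully (and confirming that the Cramer-Rao step from Remark \ref{rmk:cr bound} applies with the same constants) is the one place where the argument is not a pure copy-paste from Theorem \ref{batchedmain}. A clean way is to note that replacing $P^\perp$ by the identity only enlarges $\tr(J^\top P^\perp J)$, which is precisely the direction of the inequality we need, so the upper bound on the trace — and hence the lower bound on $R_L$ — goes through without touching positive-definiteness of $J^\top P^\perp J$ itself.
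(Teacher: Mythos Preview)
Your approach is essentially the same as the paper's: both compute the Jacobian of the clipped map, upper bound its Frobenius norm squared by $C^2/\|G\|^2$ times that of the unclipped Jacobian, and feed the result into Remark~\ref{rmk:cr bound}. The only difference is cosmetic: you write $\nabla_S(G/\|G\|) = \tfrac{1}{\|G\|}(I-\hat G\hat G^\top)J$ and use that $P^\perp$ is a contraction, whereas the paper keeps the two terms $\tfrac{C}{\|G\|}\nabla_{\bx_k}G$ and $-\tfrac{C}{\|G\|^3}(\nabla_{\bx_k}G)\,G G^\top$ separate and bounds them individually via $\|A+B\|_F^2\le 2\|A\|_F^2+2\|B\|_F^2$; the two computations land on the same $\tilde O(C^2 mdB/\|G\|^2)$ bound. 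Your projector formulation is tidier and your remark on the rank-one loss from $P^\perp$ is a point the paper does not address explicitly, but it does not change the argument.
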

\begin{proof}
Denote $\bg_\theta=C\frac{G}{\norm{G}}$, then for each $\bx_k$,
\begin{align*}
\nabla_{\bx_k} \bg_\theta=&C\frac{\nabla_{\bx_k}G}{\norm{G}}+C\left(\nabla_{\bx_k}\frac{1}{\norm{G}}\right)G^\top\\
=&C\frac{\nabla_{\bx_k}G}{\norm{G}}-C\left(\frac{\nabla_{\bx_k}G}{\norm{G}^3}G\right)G^\top
\end{align*}
Using the sub-multiplicative property of the Frobenius norm and scaling property with high probability of $\norm{\nabla_{\bx_k}G}=\norm{\nabla_{\bx_k}\nabla_\theta (y-f(\bx_k))^2}_F^2=\Tilde{O}(md)$ proved in Lemma \ref{lem:lbsimple}, we have that
\begin{align*}
\norm{\nabla_{\bx_k} \bg_\theta}_F^2\le &2C^2\norm{\frac{\nabla_{\bx_k}G}{\norm{G}}}_F^2+2C^2\norm{\frac{\nabla_{\bx_k}G}{\norm{G}^3}}_F^2\norm{GG^\top}_F^2\\
\le&2C^2\frac{\Tilde{O}(md)}{\norm{G}^2}+2C^2\frac{\Tilde{O}(md)}{\norm{G}^6}\norm{GG^T}_F^2\\
=&\frac{\Tilde{O}(C^2md)}{\norm{G}^2}(1+\norm{G}^{-4}\norm{G}^4)\\
=&\frac{\Tilde{O}(C^2md)}{\norm{G}^2}.
\end{align*}
Therefore we have with high probability that
\begin{align*}
\norm{\nabla_{\bx} \bg_\theta}_F^2=&\sum_{i=1}^B\norm{\nabla_{\bx_i} \bg_\theta}_F^2\\
=&\sum_{i=1}^B\frac{\Tilde{O}(C^2md)}{\norm{G}^2}\\
=&\frac{\Tilde{O}(C^2mdB)}{\norm{G}^2}
\end{align*}
Therefore our lower bound is given as
\[R_L\ge\Tilde{\Omega}(\sqrt{\frac{1}{B}\frac{B^2d^2}{\norm{\nabla_{\bx} \bg_\theta}_F^2}\sigma^2})=\Tilde{\Omega}(\sqrt{\frac{d}{m}}\frac{\norm{G}}{C}),\]
which is the desired result.
\end{proof}

%\begin{remark}
%Notice that the gradient norm $\norm{\nabla_\theta \sum_{i=1}^{B}(y-f(\bx_i))^2}$ scales $\Tilde{O}(\sqrt{m}B)$ with high probability. So if we keep $C$ constant with scaling, then the bound becomes $\Omega{dB^2}$.
%\end{remark}
%\begin{proof}
%This is because
%\[
%\begin{cases}
%\nabla_{a_j} \sum_{i=1}^{B}(y-f(\bx_i))^2=\sum_{i=1}^{B}2(y-f(\bx_i))\sigma(\bw_j^\top \bx_i)=\Tilde{O}(B)\\
%\norm{\nabla_{\bw_j}\sum_{i=1}^{B}(y-f(\bx_i))^2}^2=\norm{\sum_{i=1}^{B}2(y-f(\bx_i))a_j\sigma'(\bw_j^\top \bx_i)\bx_i}^2=\Tilde{O}(\frac{B^2}{m}).
%\end{cases}
%\]
%And 
%\[\norm{\nabla_\theta \sum_{i=1}^{B}(y-f(\bx_i))^2}^2=\sum_{j=1}^{m}|\nabla_{a_j} \sum_{i=1}^{B}(y-f(\bx_i))^2|^2+\sum_{j=1}^{m}\norm{\nabla_{\bw_j}\sum_{i=1}^{B}(y-f(\bx_i))^2}^2=\Tilde{O}(mB^2)\]
%\end{proof}

\subsection{Local Aggregation}
In local aggregation, we calculate two updates of model parameters using two different batches of size $B$. 
% Denote the loss function $\ell (\vz,\vq,y)=(y-f_\vq(\vz))^2$
% where $\vz$ is the input and $\vq$ are the model parameters. For the sake of simplicity, we make a slight abuse of notation and hide the term $y$ from the loss function.
In local aggregation, we use different learning rates for $a_j$ and $\bw_j$ since their scalings are different. Specifically, we take $\eta_a=O(\frac{1}{m^2d})$ and $\eta_w=O(\frac{1}{d})$. Then $a_j^{(1)}=a_j-\eta_a \sum_{i=1}^B\nabla_{a_j} \ell$ and $\bw_j^{(1)}=\bw_j-\eta_w\sum_{i=1}^B \nabla_{\bw_j} \ell$ for all $j$. Denote $\theta^{(1)}=(a^{(1)},\bw^{(1)})$ and we can similarly define $a^{(2)}$, $\bw^{(2)}$ and $\theta^{(2)}$. Denote $\ell(\bx_i,\theta,y_i)=(y_i-f_\theta(\bx_i))$, the direct observation from the local aggregation gradients is (with $y_j$ omitted)
$$
a_j^{(2)}-a_j=-\eta_a\left(\sum_{i=1}^B\nabla_{a_j}\ell(\bx_i,\theta)+\sum_{i=B+1}^{2B}\nabla_{a_j}\ell(\bx_i,\theta)\right)
$$
and
$$
\bw_j^{(2)}-\bw_j=-\eta_w\left(\sum_{i=1}^B\nabla_{\bw_j}\ell(\bx_i,\theta)+\sum_{i=B+1}^{2B}\nabla_{\bw_j}\ell(\bx_i,\theta)\right).
$$
In the setting of reconstruction attack, both of the learning rates are known so we can directly observe the gradients for two steps with a Gaussian noise.
\begin{equation}
\label{eq:aggre obs}
    O(S)=\sum_{i=1}^{B}\nabla_\theta \ell(\bx_i,\theta)+\sum_{i=B+1}^{2B}\nabla_{\theta^{(1)}} \ell\left(\bx_i,\theta^{(1)}\right)+\epsilon,
\end{equation}
where $\epsilon\sim\cN(0,\sigma^2I_{md+d})$.
% Suppose random noise is only added to the second
% step of update and learning rate is $\eta $. Let $\theta_1=\theta+\eta \frac{1}{B}\sum_{i=1}^{B}\nabla_q \ell(\bx_i,\theta)$ be the model parameters after the first update,
% then the observed update is
% \[
% \theta_{new}=\theta+\eta \left(\frac{1}{B}\sum_{i=1}^{B}\nabla_\theta \ell(\bx_i,\theta)+\frac{1}{B}\sum_{i=B+1}^{2B}\nabla_{\theta_1} \ell\left(\bx_i,\theta_1\right)+\vepsilon\right).
% \]
We want to reconstruct all $2B$ data points from the update. Intuitively, this is approximately equivalent to a regular update with batch size $2B$ and learning rate $2\eta$,
so the lower bound should remain unchanged. We now prove the bound rigorously.
\begin{proposition}[Full version of Proposition \ref{prop: aggre lower bound short}]
\label{prop:aggre lower bound}
If satisfy assumption \ref{assumption2layer}, $m>B$ and the learning rate for the two layers $\eta_w=O(1)$ and $\eta_a=O(\frac{1}{m^{3/2}})$, then
with high probability the lower bound of minimax risk $R_L$ with the observation Eq. (\ref{eq:aggre obs}) is of the scale $R_L\ge \Tilde{\Omega}(\sigma\sqrt{\frac{d}{m}})$.
\end{proposition}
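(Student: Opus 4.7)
The plan is to reduce the local aggregation lower bound to the single-step bound of Theorem \ref{batchedmain} by carefully analyzing the Jacobian of the two-step observation with respect to the $2B$ input data points. Following Definition \ref{defminimax} and Remark \ref{rmk:cr bound}, the Cramér-Rao bound yields $R_L^2 \geq \frac{2B \cdot d^2}{\mathrm{tr}(J(S) J(S)^\top)}\sigma^2$, where $J(S) = \nabla_S O(S)$ and $S=(\bx_1,\dots,\bx_{2B})$. It therefore suffices to show $\mathrm{tr}(J(S) J(S)^\top) = \tilde{O}(Bmd)$ with high probability, which delivers $R_L \geq \tilde{\Omega}(\sigma\sqrt{d/m})$.

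First I would decompose the Jacobian into per-sample blocks. For $i \in [B+1,2B]$, only the second term in Eq.~(\ref{eq:aggre obs}) depends on $\bx_i$, giving $\nabla_{\bx_i} O(S) = \nabla_{\bx_i}\nabla_{\theta^{(1)}}\ell(\bx_i, \theta^{(1)})$. For $i \in [1,B]$, both terms contribute and I must chain through $\theta^{(1)} = \theta - \eta \sum_{j=1}^{B}\nabla_\theta \ell(\bx_j, \theta)$, producing
$$\nabla_{\bx_i} O(S) = \nabla_{\bx_i}\nabla_\theta \ell(\bx_i, \theta) \;-\; \eta\, \nabla_{\theta^{(1)}}^{\,2}\!\Big(\!\sum_{k=B+1}^{2B}\!\ell(\bx_k, \theta^{(1)})\Big)\cdot \nabla_{\bx_i}\nabla_\theta \ell(\bx_i, \theta).$$
The first summand is the ``direct'' block from the single-step analysis, while the second is a ``chain-rule correction'' that I will show is lower-order.

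The next step is to bound each block's Frobenius norm. For the direct blocks $\nabla_{\bx_i}\nabla_\theta \ell(\bx_i,\theta)$ and $\nabla_{\bx_i}\nabla_{\theta^{(1)}}\ell(\bx_i,\theta^{(1)})$, the calculation proceeds exactly as in Lemma \ref{lem:lbsimple}, yielding squared Frobenius norm $\tilde{O}(md)$ each, provided the scaling of Lemma \ref{rmk:lbscaling} still holds at $\theta^{(1)}$. This reduces to showing the perturbation is controlled: with $\eta_a = O(m^{-3/2})$ one has $|a_j^{(1)}-a_j| = \tilde{O}(B/m^{3/2})$, and with $\eta_w = O(1)$ one has $\|\bw_j^{(1)}-\bw_j\| = \tilde{O}(B/m)$ (because $\nabla_{\bw_j}\ell$ carries a factor of $a_j = \tilde{O}(1/m)$), so $\bw_j^{(1){\top}}\bx_k$ and $f_{\theta^{(1)}}(\bx_k)$ remain $\tilde{O}(1)$ with high probability. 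Summing the $2B$ direct contributions gives $\tilde{O}(Bmd)$, matching the batched single-step trace bound in Theorem \ref{batchedmain}.

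The main obstacle is controlling the chain-rule correction, since $\eta_w = O(1)$ is not small per se. The resolution will be to observe that the Hessian $\nabla_\theta^2 \ell(\bx_k,\theta)$, when pre-multiplied by the diagonal learning-rate tensor $\mathrm{diag}(\eta_a I_m,\eta_w I_{md})$, has every block suppressed: the $(a_j,a_j)$ block is $\tilde{O}(1)$ but scaled by $\eta_a = O(m^{-3/2})$, the $(\bw_j,\bw_j)$ block is $\tilde{O}(1/m)$ (via the $a_j$ factor) scaled by $\eta_w$, and the cross blocks are similarly dominated by the $a_j$ prefactor. Summing over $k\in[B{+}1,2B]$ multiplies by $B$, so the effective operator norm of the correction factor $\eta\,\nabla_{\theta^{(1)}}^{\,2}\!\sum_k \ell(\bx_k,\theta^{(1)})$ is $\tilde{O}(B/m)$. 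Composing with the direct Jacobian of squared Frobenius norm $\tilde{O}(md)$ gives a correction contribution of squared norm $\tilde{O}((B/m)^2 \cdot md) = \tilde{o}(md)$, which is absorbed into the direct term. Combining yields $\mathrm{tr}(J(S)J(S)^\top) = \tilde{O}(Bmd)$ and therefore $R_L \geq \tilde{\Omega}(\sigma\sqrt{d/m})$ as claimed.
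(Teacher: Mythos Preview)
Your high-level strategy matches the paper exactly: apply the Cram\'er--Rao/trace bound of Remark~\ref{rmk:cr bound}, split the Jacobian by sample, verify that the scaling of Lemma~\ref{rmk:lbscaling} survives one step, bound the ``direct'' blocks by $\tilde O(md)$ via Lemma~\ref{lem:lbsimple}, and then argue the chain-rule correction is lower order. The plan up to that last step is fine.

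The gap is in your operator-norm shortcut for the correction factor. You assert that the cross blocks of $\nabla_\theta^2\ell$ ``are similarly dominated by the $a_j$ prefactor'' and conclude $\|\eta\,\nabla_{\theta^{(1)}}^2\sum_k\ell\|_{op}=\tilde O(B/m)$. This is not correct: the diagonal-in-$j$ piece of the $(a,\bw)$ cross block,
\[
\nabla_{a_j}\nabla_{\bw_j}\ell \;=\; -2(y-f)\sigma'(\bw_j^\top\bx)\,\bx \;+\; O(a_j),
\]
carries \emph{no} $a_j$ factor, so the block $\nabla_a\nabla_\bw\ell$ has operator norm $\tilde O(1)$. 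With the correct chain-rule ordering the correction factor is $\nabla_\theta^2\ell\cdot\mathrm{diag}(\eta_a I_m,\eta_w I_{md})$, whose $(a,\bw)$ block is scaled by $\eta_w=O(1)$ and hence still has operator norm $\tilde O(1)$. Summing over the second batch gives $\tilde O(B)$, not $\tilde O(B/m)$, and composing with $\|J_{\mathrm{direct}}\|_F^2=\tilde O(md)$ then yields $\tilde O(B^2 md)$, which swamps the direct term.

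The paper avoids this by not using a single operator-norm bound. It expands the correction into the four products $(\nabla_{\bx_i}\nabla_a\ell,\nabla_{\bx_i}\nabla_\bw\ell)\times(\nabla_a^2\ell,\nabla_a\nabla_\bw\ell,\nabla_\bw\nabla_a\ell,\nabla_\bw^2\ell)$ and bounds each pair separately (the display following Eqs.~(\ref{eq:start 2})--(\ref{eq:end 2})). The point is that the large Hessian blocks pair with the \emph{small} pieces of $\mathrm{diag}(\eta)J_{\mathrm{direct}}$: the problematic $\nabla_a\nabla_\bw\ell$ block multiplies $\eta_w\nabla_{\bx_i}\nabla_\bw\ell$, which \emph{does} carry the $a_j$ factor and has Frobenius norm only $\tilde O(\sqrt{d/m})$, so the product is $\tilde O(\sqrt{d/m})$. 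Your global operator-norm bound throws away exactly this block-wise cancellation. If you rewrite your correction analysis block-by-block you recover the paper's argument.
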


\begin{proof}
% We first notice that the information we actually obtain is
% \[\frac{1}{B}\sum_{i=1}^{B}\nabla_\theta \ell(\bx_i,\theta)+\frac{1}{B}\sum_{i=B+1}^{2B}\nabla_{\theta_1} \ell\left(\bx_i,\theta_1\right)+\vepsilon.\]
% Note that $\theta_1$ is a function of $\theta$ and $\bx_1,\dots,\bx_B$. 
Define
\[g(\bx_{1:2B})=\sum_{i=1}^{B}\nabla_\theta \ell(\bx_i,\theta)+\sum_{i=B+1}^{2B}\nabla_{\theta^{(1)}} \ell\left(\bx_i,\theta^{(1)}\right),\]
our lower bound is given as
\[\frac{1}{2B}\frac{B^2d^2}{\sum_{i=1}^{2B}\norm{\nabla_{\bx_i}g(\bx_{1:2B})}^2}\sigma^2.\]

Notice that all entries of $\nabla_a \ell(\bx_i,\theta)$ are $\Tilde{O}(1)$ and $\nabla_w\ell(\bx_i,\theta)$ are $\Tilde{O}(\frac{1}{m})$:
\[
\begin{cases}
    \nabla_{a_j}\ell=2\sum_{i=1}^B(y-f(\bx_i))\sigma(\bw_j^\top \bx_i)\\
    \nabla_{\bw_j}\ell=2\sum_{i=1}^B(y-f(\bx_i))a_j\sigma'(\bw_j^\top \bx_i)\bx_i.
\end{cases}
\]
Therefore taking $\eta_a \le O(\frac{1}{B})$ and $\eta_w\le O(\frac{m}{B})$ would make the updated $a_j^{(1)}=a_j-\eta_a\nabla_{a_j}\ell$ and $\bw_j^{(1)}=\bw_j-\eta_w\nabla_{\bw_j}\ell$ still satisfy scaling in Lemma \ref{rmk:lbscaling}.

Therefore for $i\ge n+1$, with high probability
\[\norm{\nabla_{\bx_i}g(\bx_{1:2B})}_F^2=\norm{\nabla_{\bx_i}\nabla_{\theta^{(1)}} \ell\left(\bx_i,\theta_1\right)}_F^2=\Tilde{O}(md).\]

For $i\le n$, with high probability
\begin{equation}
\label{eq:start 2}
    \left\|\nabla_{\bx_i}\nabla_{a_j} \ell\left(\bx_i,\theta\right)\right\|^2 =\Tilde{O}(d),
\end{equation}
\begin{equation}
    \left\|\nabla_{\bx_i}\nabla_{\bw_j} \ell\left(\bx_i,\theta\right)\right\|_F^2=\Tilde{O}(\frac{d}{m^2}),
\end{equation}
\begin{equation}
\left|\nabla_{a_k}\nabla_{a_j}\ell\left(\bx_i,\theta\right)\right|=\Tilde{O}(1)
\end{equation}
for any $j,k$,
\begin{equation}
\left\|\nabla_{a_j}\nabla_{\bw_j}\ell\left(\bx_i,\theta\right)\right\|^2\le \Tilde{O}(1),
\end{equation}
\begin{equation}
    \left\|\nabla_{a_k}\nabla_{\bw_j}\ell\left(\bx_i,\theta\right)\right\|^2\le \Tilde{O}(\frac{1}{m^2})
\end{equation}
for $j\neq k$,
\begin{equation}
\left\|\nabla_{\bw_j}\nabla_{\bw_j}\ell\left(\bx_i,\theta\right)\right\|_F^2\le \Tilde{O}(\frac{1}{m^2})
\end{equation}
and
\begin{equation}
\label{eq:end 2}
\left\|\nabla_{\bw_k}\nabla_{\bw_j}\ell\left(\bx_i,\theta\right)\right\|_F^2\le \Tilde{O}(\frac{1}{m^4})
\end{equation}
for $j\neq k$.
Therefore, by Eq. (\ref{eq:start 2}) to (\ref{eq:end 2}) we have

\begin{align*}
\norm{\nabla_{\bx_i}g(\bx_{1:2B})}^2=&\left\|\nabla_{\bx_i}\nabla_{\theta^{(1)}} \ell\left(\bx_i,\theta\right)-\sum_{i=B+1}^{2B}\left(\eta_w \nabla_{\bx_i}\nabla_{\bw^{(1)}} \ell\left(\bx_i,\theta\right)\nabla_{\bw^{(1)}}\nabla_{a^{(1)}} \ell\left(\bx_i,\theta^{(1)}\right)\right.\right.\\
&\left.+\eta_w \nabla_{\bx_i}\nabla_{\bw^{(1)}} \ell\left(\bx_i,\theta\right)\nabla_{\bw^{(1)}}\nabla_{\bw^{(1)}} \ell\left(\bx_i,\theta^{(1)}\right)\right.\\
&\left.+\eta_a \nabla_{\bx_i}\nabla_{a^{(1)}} \ell\left(\bx_i,\theta\right)\nabla_{a^{(1)}}\nabla_{\bw^{(1)}} \ell\left(\bx_i,\theta^{(1)}\right)\right.\\
&\left.\left.+\eta_a \nabla_{\bx_i}\nabla_{a^{(1)}} \ell\left(\bx_i,\theta\right)\nabla_{a^{(1)}}\nabla_{a^{(1)}} \ell\left(\bx_i,\theta^{(1)}\right)\right)\right\|_F^2\\
\lesssim&\norm{\nabla_{\bx_i}\nabla_{\theta^{(1)}} \ell\left(\bx_i,\theta\right)}_F^2+\sum_{i=1}^B\left(\eta_w\norm{\nabla_{\bx_i}\nabla_{\bw^{(1)}} \ell\left(\bx_i,\theta\right)\nabla_{\bw^{(1)}}\nabla_{a^{(1)}} \ell\left(\bx_i,\theta^{(1)}\right)}_F^2\right.\\
&+\eta_w\norm{\nabla_{\bx_i}\nabla_{\bw^{(1)}} \ell\left(\bx_i,\theta\right)\nabla_{\bw^{(1)}}\nabla_{\bw^{(1)}} \ell\left(\bx_i,\theta^{(1)}\right)}_F^2\\
&+\eta_a\norm{\nabla_{\bx_i}\nabla_{a^{(1)}} \ell\left(\bx_i,\theta\right)\nabla_{a^{(1)}}\nabla_{\bw^{(1)}} \ell\left(\bx_i,\theta^{(1)}\right)}_F^2\\
&+\left.\eta_a\norm{\nabla_{\bx_i}\nabla_{a^{(1)}} \ell\left(\bx_i,\theta\right)\nabla_{a^{(1)}}\nabla_{a^{(1)}} \ell\left(\bx_i,\theta^{(1)}\right)}_F^2\right)\\
\lesssim&\Tilde{O}(md)+\sum_{i=1}^B\left(\eta_w^2 m \Tilde{O} (\frac{d}{m^2})(m\Tilde{O}(1)+m(m-1)\Tilde{O}(\frac{1}{m^2})\right.\\
&+\eta_w^2 m\Tilde{O}(\frac{d}{m^2})(m\Tilde{O}(\frac{1}{m^2})+m(m-1)\Tilde{O}(\frac{1}{m^4}))\\
&+\eta_a^2 m\Tilde{O}(d)(m\Tilde{O}(1)+m(m-1)\Tilde{O}(\frac{1}{m^2}))\\
&\left.+\eta_a^2 m\Tilde{O}(d)m^2\Tilde{O}(1)\right)\\
\le&\Tilde{O}(md)+\eta_w^2\Tilde{O}(Bd)+\eta_a^2\Tilde{O}(Bm^3d)\\
=&\Tilde{O}(md)
\end{align*}
if $\eta_w=O(\sqrt{m}{B})$ and $\eta_a=O(\frac{1}{m\sqrt{B}})$.

Therefore the lower bound is \[R_{L}^2\ge\frac{1}{B}\frac{d^2B^2}{\sum_{i=1}^{2B}\norm{\nabla_{x_i}g(x_{1:2B})}^2}\frac{1}{m^2}\sigma^2=\Tilde{\Omega}(\frac{d}{m})\sigma^2.\]
Thus, we have $R_L\ge \Tilde{\Omega}(\sigma\sqrt{\frac{d}{m}})$.
\end{proof}

\subsection{Gradient Pruning}
Gradient pruning refers to changing small entries of the model gradient to zero before updating the model. If a model gradient was pruned, the entries corresponding to this term in the Jacobian matrix
would become 0. Denote the Jacobian of pruned parameters' gradient on the input $\bx$ the matrix $J_0(\bx)$ (which is part of the full Jacobian $\tr(J(\bx)$), we will demonstrate in this section that gradient pruning has a lower bound at least as good as the regular bound in Proposition \ref{batchedmain}.
\begin{lemma}
\label{prop:prune}
    Under Assumption \ref{assumption2layer}, and for any constant $\gamma$, with high probability the lower bound of minimax risk $R_L$ with the observation
    \[
    P\left(\sum_{i=1}^{B}\nabla_{\theta}(y-f(\bx_i))^2\right)+\vepsilon
    \]
    is of scale $R_L\ge\Tilde{\Omega}(\sigma\sqrt{\frac{d}{m(1-\hat{p})}})$, where $P(\va)$ is the function that for all $i=1,\dots,\dim(\va)$,
    \[
    P_i(\va)=
    \begin{cases} 
        a_i & \text{if } |a_i| \ge \gamma \\
        0 & \text{if } |a_i| < \gamma
    \end{cases},
    \]
    and $$\hat{p}=\frac{\norm{J_0(\bx)}_F^2}{\norm{J(\bx)}_F^2}.$$
\end{lemma}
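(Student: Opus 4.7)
The plan is to imitate the Cramér--Rao argument of Theorem~\ref{batchedmain}, but with an effective Jacobian that zeros out the rows corresponding to pruned coordinates of the summed gradient. Write $G(S)=\sum_{i=1}^{B}\nabla_\theta(y_i-f(\bx_i))^2$ and let $J(S)=\nabla_{S} G(S)$ be the full Jacobian. Outside the measure-zero set $\{S:\,|G_k(S)|=\gamma\text{ for some }k\}$, the thresholded map $S\mapsto P(G(S))$ is smooth, and its Jacobian $\widetilde J(S)$ equals $J(S)$ on rows where $|G_k(S)|>\gamma$ and vanishes on rows where $|G_k(S)|<\gamma$. Since the rows zeroed out are precisely those constituting $J_0(S)$ in the statement, the identity
\[
\|\widetilde J(S)\|_F^{2}=\|J(S)\|_F^{2}-\|J_0(S)\|_F^{2}=(1-\hat p)\,\|J(S)\|_F^{2}
\]
holds by the very definition of $\hat p$.

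With $\widetilde J$ in hand I would invoke the loosened Cramér--Rao bound of Remark~\ref{rmk:cr bound} applied to the Gaussian observation $P(G(S))+\epsilon$, giving
\[
R_L^{2}\;\geq\;\frac{1}{B}\cdot\frac{B^{2}d^{2}}{\|\widetilde J(S)\|_F^{2}}\,\sigma^{2}\;=\;\frac{1}{B}\cdot\frac{B^{2}d^{2}}{(1-\hat p)\,\|J(S)\|_F^{2}}\,\sigma^{2},
\]
where the prefactor $1/B$ comes from the $1/B$ averaging in $d(S,\pi(\hat S))$. The proof of Theorem~\ref{batchedmain} already establishes, under Assumption~\ref{assumption2layer} and with high probability over $(a_j,\bw_j)$, that $\|J(S)\|_F^{2}=\tr(J(S)^{\top}J(S))=\tilde O(mdB)$; substituting this bound gives $R_L^{2}\geq \tilde\Omega\!\left(\frac{d}{(1-\hat p)\,m}\right)\sigma^{2}$, and taking a square root yields the claimed $R_L\geq \tilde\Omega\!\left(\sigma\sqrt{d/((1-\hat p)m)}\right)$.

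The one step that is not a mechanical recycling of Theorem~\ref{batchedmain} is justifying Cramér--Rao against the discontinuous hard-thresholding operator $P$, which strictly speaking fails the usual smoothness hypotheses. I expect this to be the main obstacle, and I would address it in one of two equivalent ways. The first is to restrict the adversarial maximum over $S$ in Definition~\ref{defminimax} to the open, dense region $\{S:\,|G_k(S)|\ne\gamma\ \forall k\}$: this can only decrease the maximax risk, so any lower bound on the restricted problem is also a lower bound on the full minimax, and on the restricted region $P\circ G$ is smooth with Jacobian $\widetilde J$. The second is to appeal to the monotonicity of Fisher information under measurable post-processing: since $P(G(S))+\epsilon$ is a deterministic function of $G(S)$ composed with independent Gaussian noise, its Fisher information is dominated by that of the observation in which only the non-pruned coordinates of $G(S)$ are observed with noise $\sigma$, which is exactly what $\widetilde J$ captures. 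Either route legitimizes the use of Remark~\ref{rmk:cr bound} with $\widetilde J$ and completes the argument.
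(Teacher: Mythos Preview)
Your proposal is correct and follows essentially the same route as the paper: compute the effective Jacobian of $P\circ G$ by zeroing out the rows (the paper says ``columns,'' but this is just a transposition convention) corresponding to pruned coordinates, obtain $\|\widetilde J\|_F^2=(1-\hat p)\|J\|_F^2$, apply the loosened Cram\'er--Rao bound of Remark~\ref{rmk:cr bound}, and plug in the $\|J\|_F^2=\tilde O(mdB)$ estimate from Theorem~\ref{batchedmain}. The paper handles the non-differentiability of $P$ in a single sentence by noting the non-smooth set has measure zero; your two justifications (restricting the adversarial $S$ to the open dense smooth region, or invoking monotonicity of Fisher information under post-processing) are more careful than what the paper writes but amount to the same dismissal.
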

\begin{proof}
As the points that are not differentiable after applying this defense method are of zero measure, our analysis could ignore these cases.

 The Jacobian $\nabla_\bx P\left(\sum_{i=1}^{B}\nabla_{\theta}(y-f(\bx_i))^2\right)$ is $\nabla_\bx \sum_{i=1}^{B}\nabla_{\theta}(y-f(\bx_i))^2$ with the columns corresponding to columns in $J_0$ being set to 0. Our lower bound is given as
\[R_L\ge\Tilde{\Omega}(\sigma\sqrt{\frac{1}{B}\frac{d^2B^2}{\norm{J(\bx)}_F^2-\norm{J_0{\bx}}_F^2}})=\Tilde{\Omega}(\sigma\sqrt{\frac{1}{B}\frac{d^2B^2}{\norm{J(\bx)}_F^2}\frac{1}{1-\hat{p}}}).\]

Which gives us the desired result.
\end{proof}
\begin{remark}
Since entries of $G_{\bw_j}$ scale smaller compared to $G_{a_j}$, gradients from the first layer easier to be pruned. Therefore until the ratio of pruned gradients reaches $\frac{d}{d+1}$, the pruned gradients gives a small $\hat{p}$ and does not change the lower bound much. But higher than this ratio we start to prune gradients of $a_j$, so we predict that the bound will greatly increase after the ratio reaches $\frac{d}{d+1}$.
\end{remark}

\section{Additional experimental details for our proposed attack}
\label{sec:exp_setup}
% Code at \url{https://github.com/shengliu66/AttackDefenseEval}.
\paragraph{Image Prior Network} The architecture of our deep image prior network is structured on the foundation of PixelCNN++ as per \citep{salimans2017pixelcnn++}, %Salimans et al. (2017)
and it utilizes a U-Net as depicted by \citep{ronneberger2015u}, %Ronneberger et al. (2015), 
which is built upon a Wide-ResNet according to \citep{zagoruyko2016wide},
%Zagoruyko and Komodakis (2016).
To simplify the implementation, we substituted weight normalization \citep{salimans2016weight}
%(Salimans and Kingma, 2016)
with group normalization \citep{wu2018group}.
%(Wu and He, 2018). 
The models operating on $32 \times 32$ utilize four distinct feature map resolutions, ranging from $32 \times 32$ to $4 \times 4$. Each resolution level of the U-Net comprises two convolutional residual blocks along with self-attention blocks situated at the $16 \times 16$ resolution between the convolutional blocks. When available, the reconstructed features are incorporated into each residual block.

\paragraph{Backbone network for federated learning}
We adopt ResNet-18~\citep{he2016deep} architecture as our backbone network. The network is pre-trained for one epoch for better results of gradient inversion. Note that this architecture results in a more challenging setting given its minimal information retention capacity and is observed to cause a slight drop in quantitative performance and large variance~\citep{jeon2021gradient}.  We further add an extra linear layer before the classification head for feature reconstruction. In order to satisfy the concentration bound $\Tilde{O}(B\sqrt{\frac{d}{m}})$ of feature reconstruction, this linear layer has $512 \times 32 = 16384$ neurons. 

We adopt a special design for the last but one layer in favor of both the gradient matching procedure as well as intermediate feature reconstruction step. Specifically, we find out that a pure random net will degrade the performance of gradient matching compared to a moderately trained network. Therefore we split the neurons into two sets; we set the first set (first 4096 neurons) as trained parameters from ResNet-18 and set the weights associated with the rest neurons as random weights as required for the feature reconstruction step.

\paragraph{Hyper-parameters for training} For all experiments, we train the backbone ResNet-18 for 200 epochs, with a batch size of 64. We use SGD with a momentum of 0.9 as the optimizer. The initial learning rate is set to 0.1 by default. We decay the learning rate by a factor of 0.1 every 50 epochs. 

\paragraph{Hyper-parameters of the attack} The attack minimize the objective function given in Eq.~\ref{eq:final_objective}. We
search $\alpha_{\rm TV}$ in $\{0, 0.001, 0.005, 0.01, 0.05, 0.1, 0.5\}$ and $\alpha_{\rm BN}$ in $\{0, 0.0005, 0.001, 0.01, 0.05, 0.01\}$ for attack without defense, and apply the best choices $\alpha_{\rm TV} = 0.01$ and $\alpha_{\rm BN} = 0.001$ for all defenses (we did not tune $\alpha_{\rm TV}$ and $\alpha_{\rm BN}$ for each defense as we observe the proposed attack method is robust in terms of these two hyper-parameters). We optimize
the attack for 10,000 iterations using Adam~\citep{kingma2014adam}, with the initial learning rate set to $0.00001$. We decay the learning rate by a factor of $0.1$ at $3/8$, $5/8$, and $7/8$ of the optimization. For feature matching loss strength $\alpha_f$, we search in $\{0.01,0.05,0.1,0.5\}$ for each defense and find that $0.1$ is the optimal choice. 

\paragraph{Batch size of the attack} ~\citealp{zhu2019deep,geiping2020inverting} suggests that small batch size is vital for the success of the attack. We therefore intentionally evaluate the attack with three small
batch sizes $2$, $4$, and $8$ to test the upper bound of privacy leakage, and the minimum (and unrealistic) batch
size $1$, and two small but realistic batch sizes $16$ and $32$.

\paragraph{Gradient 
Reweighting}
The original gradient inversion loss, which calculates the cosine similarity of the gradients, can be expressed as
\begin{equation*}
    \sum_{i=1}^p\frac{\left\langle\nabla_{\theta_i} \mathcal{L}_{\theta_i}(\hat{x}, y), \nabla_{\theta_i} \mathcal{L}_{\theta_i}\left(x^{*}, y^{*}\right)\right\rangle}{\left\|\nabla _ { \theta_i } \mathcal { L } _ { \theta_i } ( \hat{x} , y ) \left|\left\|\mid \nabla_{\theta_i} \mathcal{L}_{\theta_i}\left(x^{*}, y^{*}\right)\right\|\right.\right.},
\end{equation*}
Here, $p$ represents the number of parameters in the federated neural network. To make the optimization of this loss easier, since some parameters can have much larger gradients and therefore overshadow the effect of others, we reweight the loss:
\begin{equation*}
    \sum_{i=1}^p w_i \frac{\left\langle\nabla_{\theta_i} \mathcal{L}_{\theta_i}(\hat{x}, y), \nabla_{\theta_i} \mathcal{L}_{\theta_i}\left(x^{*}, y^{*}\right)\right\rangle}{\left\|\nabla _ { \theta_i } \mathcal { L } _ { \theta_i } ( \hat{x} , y ) \left|\left\|\mid \nabla_{\theta_i} \mathcal{L}_{\theta_i}\left(x^{*}, y^{*}\right)\right\|\right.\right.},
\end{equation*}
where $w_i = \parallel\theta_i\parallel_0$.

\paragraph{Technical details on feature reconstruction}
The feature reconstruction method suggested by~\citealp{wang2023reconstructing} always produces features that are normalized and may have a sign opposite to the actual features. To compare the similarity between the original and the reconstructed features, we use cosine similarity, as it doesn’t change with feature magnitude. We square this similarity to avoid any issues with mismatched signs.

In addressing the challenge of integrating a tensor-based method, which requires randomly initialized networks, into gradient inversion processes that typically see improved performance with a few steps of preliminary training, we have designed a strategic approach for weight initialization. Since tensor-based method is only applied to the final fully connected layers, denoted as $A\sigma(Wx)$, where $A\in \mathbb{R}^{K\times m},\; W\in\mathbb{R}^{m\times d}$, with \(K\) representing the number of classes, \(m\) the number of hidden nodes, and \(d\) the input dimension. We consider the original pre-trained weights to be partitioned into two sets: \(W = [W_1, W_2]\) and \(A = [A_1; A_2]\). Here, \(W_2 \in \mathbb{R}^{v \times d}\) is initialized as a random Gaussian matrix, essential for the tensor-based method to function effectively. Concurrently, each row \(i\) of \(A_2 \in \mathbb{R}^{K \times v}\) is initialized to a constant value of \(i/v\), as suggested by the tensor-based approach. The remaining dimensions of the weight matrices, \(W_1 \in \mathbb{R}^{(m-v) \times d}\) and \(A_1 \in \mathbb{R}^{K \times (m-v)}\), retain their pre-trained weights, aligning with the requirements for successful gradient matching. For all of our experiments, the dimension allocated for gradient matching, \(m - v\), is set as 2048. This strategic initialization approach not only accommodates the prerequisites of the tensor-based method but also maintains the efficacy of the gradient-matching process.
% When dealing with multiple features at once (batch size greater than 1), it’s not always obvious in which order they were reconstructed. This can tempt one to try out all possible orders to find the correct one, especially when errors might be intentionally induced. But we've used a simpler (but greedy) way. We compare each reconstructed feature with the original ones in sequence to figure out the correct order, ensuring each feature is only paired once. This prevents a situation where one reconstructed feature is incorrectly matched with several original features. Experimental results suggest this way works well.

%  \paragraph{Challenges in order recovery of the features.}
 
When deploying the tensor-based method for feature recovery, a challenge arises due to the algorithm's inability to precisely maintain the order and sign of features when dealing with batches larger than a single input. This challenge complicates the task of accurately matching the recovered features to their corresponding inputs within a batch. In theory, one approach to circumvent this issue is to consider every possible permutation of the recovered features against the batch inputs, aiming to identify the correct alignment. However, this method is computationally intensive and impractical for larger batches. To streamline the process, we adopt a more efficient strategy that involves sequentially comparing each recovered feature against the actual input features using cosine similarity. This comparison continues until the most accurate order of features is determined based on their similarity scores. Although a single recovered feature might have high cosine similarity with several features corresponding to different inputs, we empirically observe that this greedy matching method achieves similar performance with the method that considers all possible pairs. 
% \paragraph{Experimental details for Figure~\ref{fig:feature}}
\section{Additional experimental details for other methods}
For GradientInversion~\citep{geiping2020inverting}, we adopted the code from~\citep{huang2021evaluating}, and the hyperparameters provided\footnote{\url{https://github.com/Princeton-SysML/GradAttack/}}. For Rob The Fed~\citep{fowl2021robbing}, we follow strictly the setting mentioned in their paper and the code\footnote{\url{https://github.com/lhfowl/robbing_the_fed}}, and introduced defenses to the gradients. For CPA~\citep{kariyappa2023cocktail}, since the settings and model architectures used in their paper are very different from ours, we incorporate their algorithm for feature recovery into our settings. i.e. using the latent features recovered by CPA rather than our proposed method. We perform hyperparameter selection through grid searches, but due to limited time, the hyperparameter searching process may not be comprehensive. But this should not significantly deviate our main goal -- comparison of the effectiveness of different defense methods, (not the attack algorithms).  
\label{sec:exp_setup_others}
\section{Hardware and software}
We run most of the experiments on clusters using NVIDIA V100s. All experiments are implemented using PyTorch library. Overall we estimated that a total of 800 GPU hours were consumed.

\section{Additional experimental results} 
\label{sec:add_results}
\paragraph{Effectiveness of gradient reweighting.} As observed in Table~\ref{tab:ablation_gr}, when dealing with a large batch-size, adjusting the gradients notably enhances the quality of the final image reconstruction. We hypothesize that this improvement is due to the adjustment potentially modifying the loss landscape, aiding in the optimization process.
\paragraph{Effectiveness of feature matching.} From Table~\ref{tab:ablation_gr},  it’s evident that feature matching offers a minor enhancement in final image reconstruction, particularly with smaller batch sizes. This indicates that feature matching plays a more crucial role when dealing with larger batch sizes. Additionally, feature matching proves to be beneficial when defenses are in place. As suggested by Figure~\ref{fig:defense}, feature matching can bring a notable improvement in reconstruction quality when applying defenses like additive noise. The comparison between Table~\ref{tab:w/ feature} and Table~\ref{tab:w/o feature} also indicates the effectiveness of feature matching in improving the robustness under defenses. For the defenses invalidating feature reconstruction and matching, for example, gradient pruning and large gradient noise, the performance has little improvement. On the contrary, for the defenses with no effect on feature reconstruction like gradient clipping, our method with feature matching improves a lot.
\paragraph{Effectiveness of random initialization.} We conduct additional experiments by attacking the model at different time steps. Specifically, the RMSE for attacks after 1, 2, and 3 epochs of training are as follows: 0.15 (0.04), 0.17 (0.09), and 0.22 (0.08), respectively. These results indicate that the attack becomes progressively more challenging as the training continues. We use the attack at random initialization in all other experiments, which is the strongest setting.

\begin{figure}
\vspace{-3mm}
    \centering
    \includegraphics[width=0.5\linewidth]{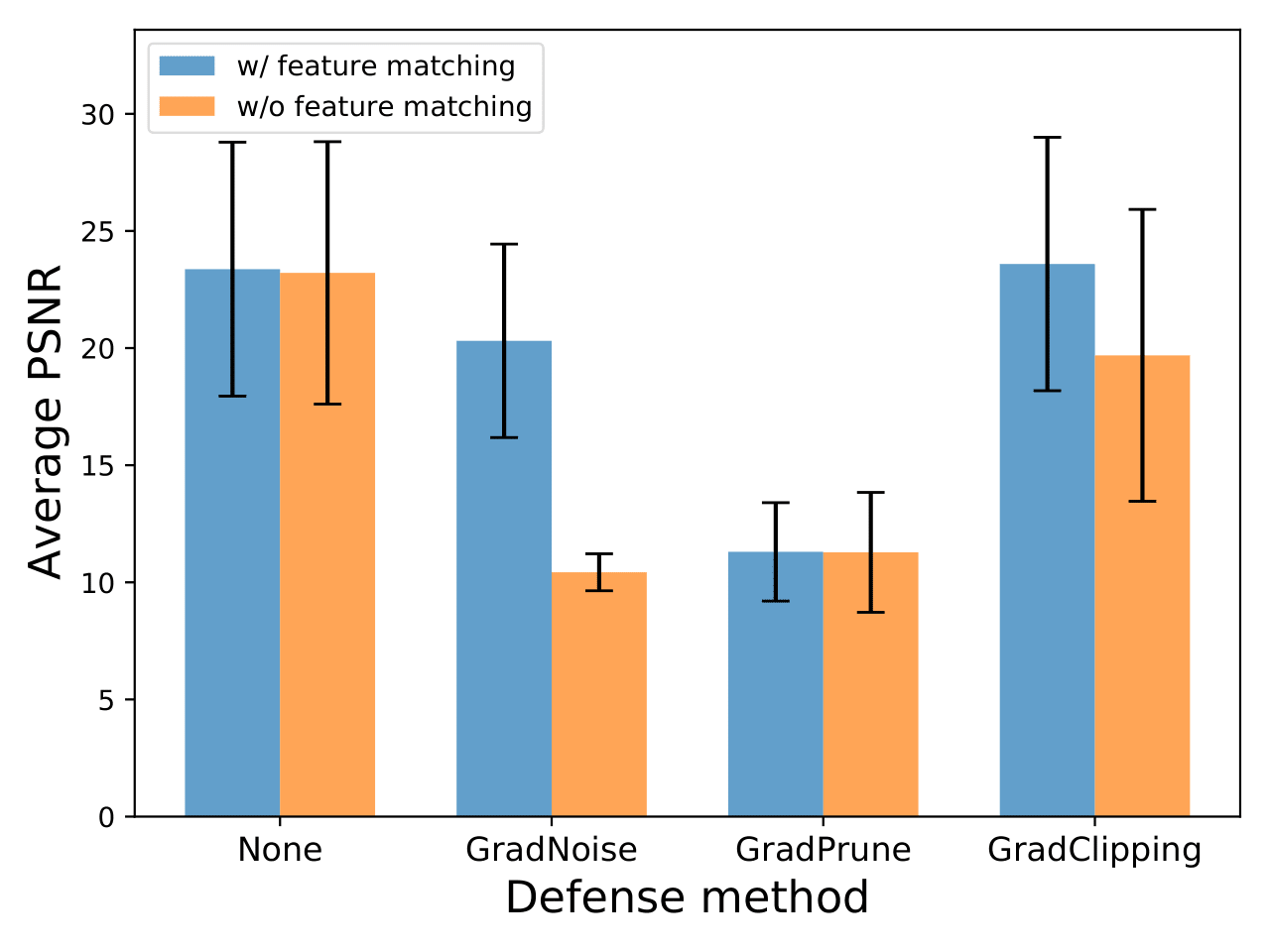}
    \caption{\small PSNR of reconstructed data using the method with and without feature matching. The difference indicates that feature matching improves robustness against defenses.}
    \label{fig:defense}
    \vspace{-3mm}
\end{figure}

% \paragraph{Final training loss.} In Table~\ref{tab:final loss}, we compute the final training loss of the federated learning model to show the interference brought by the defending methods. The smaller final training loss represents less interference to the training process. It shows that even gradient noise with small variance leads to larger interference than gradient pruning with large parameters, indicating gradient pruning is a less harmful defense.

\begin{table} 
  
  \setlength{\tabcolsep}{2.8pt}
  \renewcommand{\arraystretch}{0.95}
  \resizebox{1.\columnwidth}{!}{
  \begin{tabular}{l|c|c|c|c|c|c|c|c|c|c|c|c}
  \toprule
   batch size & \multicolumn{3}{c|}{{\bf 4 }} & \multicolumn{3}{c|}{{\bf 8 }} & \multicolumn{3}{c|}{{\bf 16 }} & \multicolumn{3}{c}{{\bf 32 }}\\
  \midrule
   {\bf Method}  & w/o GR and FM & w/o FM & Ours & w/o GR and FM & w/o FM & Ours & w/o GR and FM & w/o FM & Ours & w/o GR and FM & w/o FM & Ours \\
   \midrule
    {\bf PSNR $\uparrow$}  & 21.25 (6.02)  & 22.41 (5.84)  & 23.59 (5.39)  & 20.71 (5.96)  & 23.21 (5.60)  & 23.37 (5.42)  & 17.68 (5.61) & 20.01 (5.33)  & 21.90 (5.18) & 15.05 (4.05)  & 19.84 (5.30)  & 20.69 (4.88) \\
   {\bf RMSE $\downarrow$}  & 0.16(0.05)  & 0.16(0.04)  & 0.15 (0.03)  & 0.18 (0.05)  & 0.16 (0.04)  & 0.15 (0.04)  & 0.21 (0.06)  & 0.18 (0.06)  & 0.16 (0.04)  & 0.23 (0.06)  & 0.18 (0.05)  &  0.17 (0.05)\\
    {\bf LPIPS $\downarrow$}  & 0.12 (0.08)  & 0.09 (0.08)   & 0.09 (0.07)  & 0.15 (0.12)  & 0.11 (0.09)  & 0.10 (0.09) & 0.24 (0.15)  & 0.16 (0.13)  & 0.14 (0.11)  & 0.30 (0.12)  & 0.18 (0.12)  & 0.15 (0.11)\\
  \bottomrule
  \end{tabular}}
  \caption{\small Results of ablation study on gradient reweighting (GR) and feature matching (FM). Both feature matching and gradient reweighting contribute the the final performance boost. When batch size is large (e.g. $=16/32)$), gradient reweighting significantly improves the performance.} 
  \label{tab:ablation_gr}
  \vspace{-0mm}
\end{table}

\begin{table}[H]
  \scriptsize
  \setlength{\tabcolsep}{2.8pt}
  \renewcommand{\arraystretch}{0.95}
  \resizebox{.99\textwidth}{!}{
  \begin{tabular}{l|c|c|c|c|c|c|c|c|c|c|c|c|c|c|c}
  \toprule
   &  \bf None & \multicolumn{4}{c|}{\bf GradNoise ($\sigma$)} & \multicolumn{5}{c|}{\bf GradPrune ($p$)} & \multicolumn{3}{c|}{\bf GradClipping ($C$)} & \multicolumn{2}{c}{\bf Local Aggregation (steps)}\\
   % \multicolumn{11}{c}{\bf Attack batch size $= 1$} \\
   \midrule
   {\bf Parameter}  & - & 0.001 & 0.01 & 0.05 & 0.1 & 0.3 & 0.5 & 0.7 & 0.9 & 0.99 & 2 & 4 & 8 & 3 & 5 \\
   % {\bf RMSE $\downarrow$}  & 0.16(0.02)  & 0.19  & 0.24 (0.01)  & 0.25 (0.00)  & 0.42 & 0.22 (0.11)  & 0.21 (0.11)  & 0.22 (0.10)  & 0.16 (0.11)  & 0.17 (0.11)  & 0.17 (0.11) \\
   %  {\bf PSNR $\uparrow$}  & 12.18 (0.59)  & 0.19  & 12.01 (0.52)  & 11.58 (0.59)  & 0.42 & 19.55 (6.08) & 19.23 (5.49)  & 18.04 (5.12)  & 23.63 (5.47)  & 23.18 (5.52)  & 23.57 (5.94)\\
   \midrule
   \multicolumn{16}{c}{\bf Attack batch size $= 2$} \\
   \midrule
   {\bf RMSE $\downarrow$}  & 0.15(0.04)  & 0.17 (0.05)  & 0.22 (0.07)  & 0.31 (0.06)  & 0.28 (0.08)  & 0.16 (0.08)  & 0.17 (0.07) & 0.20 (0.09) & 0.26 (0.09) & 0.27 (0.10) & 0.17 (0.08)  & 0.17 (0.08)  & 0.19 (0.08) & 0.23 (0.09) & 0.25 (0.10)  \\
    {\bf PSNR $\uparrow$}  & 23.69 (5.69)  & 23.31 (3.04)  &  21.32 (3.52)  & 14.58 (1.59)  &  
  13.11 (4.26) & 22.63 (6.15)  & 23.33 (6.03)  & 19.40 (6.02) & 16.39 (6.56) & 14.72 (6.45)  & 23.26 (6.10)  & 23.40 (6.00)  & 23.01 (6.28) & 18.65 (6.75) & 18.77 (7.33)\\
   \midrule
   \multicolumn{16}{c}{\bf Attack batch size $= 4$} \\
   \midrule
   {\bf RMSE $\downarrow$}  & 0.15 (0.03)  & 0.19 (0.07)  & 0.24 (0.08)  & 0.29 (0.03)  & 0.27 (0.07)  & 0.16 (0.04)  & 0.16 (0.03)  & 0.21 (0.04) & 0.27 (0.07) & 0.27 (0.08) & 0.16 (0.03)  & 0.16 (0.04)  & 0.16 (0.04) & 0.28 (0.08) & 0.26 (0.08)\\
    {\bf PSNR $\uparrow$}  & 23.59 (5.39)  & 20.12 (3.58)  & 16.41 (4.64)  & 11.27 (0.33)  & 12.28 (4.08)  & 23.16 (5.48)  & 24.08 (5.38) & 18.73 (5.63) & 13.90 (3.77) & 12.43 (2.73) & 22.88 (5.31)  & 23.93 (5.55)  & 24.04 (5.58)  & 13.84 (5.38) & 14.21 (5.41)\\
   \midrule
   \multicolumn{16}{c}{\bf Attack batch size $= 8$} \\
   \midrule
   {\bf RMSE $\downarrow$}  & 0.15 (0.04)  & 0.19 (0.05)  & 0.29 (0.05)  & 0.30 (0.03)  & 0.30 (0.06)  & 0.15 (0.03)  & 0.16 (0.04) & 0.20 (0.05) & 0.29 (0.05) & 0.29 (0.05) & 0.16 (0.03)  & 0.16 (0.04)  & 0.16 (0.04) & 0.29 (0.04) & 0.30 (0.04)\\
    {\bf PSNR $\uparrow$}  & 23.37 (5.42)  & 20.31 (4.13)  & 14.56 (0.88)  & 11.27 (0.89)  & 11.27 (1.83)  & 23.74 (5.21)  & 23.32 (5.13) & 18.25 (5.22) & 11.30 (2.10) & 11.13 (2.15) & 23.36 (5.14)  & 23.59 (5.41) & 23.47 (5.23) & 11.25 (2.56)  & 10.82 (2.20) \\
  \bottomrule
  \end{tabular}}
  \caption{\small Our method evaluated with different defense methods. With feature matching, our method performs well against most defenses. Specifically, gradient pruning shows its effectiveness under this stronger attack as well.} %the performance under relatively small gradient noise largely improved comparing with the method without feature matching.} 
  \label{tab:w/ feature}
\end{table}

\begin{table}[H]
  \scriptsize
  \setlength{\tabcolsep}{2.8pt}
  \renewcommand{\arraystretch}{0.95}
  \resizebox{.99\textwidth}{!}{
  \begin{tabular}{l|c|c|c|c|c|c|c|c|c|c|c|c|c|c|c}
  \toprule
   &  \bf None & \multicolumn{4}{c|}{\bf GradNoise ($\sigma$)} & \multicolumn{5}{c|}{\bf GradPrune ($p$)} & \multicolumn{3}{c|}{\bf GradClipping ($C$)} & \multicolumn{2}{c}{\bf Local Aggregation (steps)}\\
  \midrule
   {\bf Parameter}  & - & 0.001 & 0.01 & 0.05 & 0.1 & 0.3 & 0.5 & 0.7 & 0.9 & 0.99 & 2 & 4 & 8 & 3 & 5 \\
   \midrule
   % \multicolumn{11}{c}{\bf Attack batch size $= 1$} \\
   % \midrule
   % {\bf RMSE $\downarrow$}  & 0.16(0.02)  & 0.19  & 0.24 (0.01)  & 0.25 (0.00)  & 0.42 & 0.22 (0.11)  & 0.21 (0.11)  & 0.22 (0.10)  & 0.16 (0.11)  & 0.17 (0.11)  & 0.17 (0.11) \\
   %  {\bf PSNR $\uparrow$}  & 12.18 (0.59)  & 0.19  & 12.01 (0.52)  & 11.58 (0.59)  & 0.42 & 19.55 (6.08) & 19.23 (5.49)  & 18.04 (5.12)  & 23.63 (5.47)  & 23.18 (5.52)  & 23.57 (5.94)\\
   % \midrule
   \multicolumn{16}{c}{\bf Attack batch size $= 2$} \\
   \midrule
   {\bf RMSE $\downarrow$}  & 0.16 (0.05)  & 0.30 (0.05)  & 0.32 (0.07)  & 0.32 (0.06)  & 0.35 (0.07)  & 0.19 (0.08)  & 0.20 (0.07) & 0.24 (0.09) & 0.28 (0.09) & 0.27 (0.11) & 0.19 (0.07)  & 0.23 (0.09)  & 0.25 (0.08) & 0.25 (0.10) & 0.25 (0.10)\\
    {\bf PSNR $\uparrow$}  & 22.51 (4.72)  & 12.27 (3.27)  &  12.12 (3.45)  & 11.48 (1.53)  &  
  11.77 (2.00) & 20.31 (4.35)  & 20.32 (7.03)  & 19.23 (7.02) & 15.73 (6.70) & 14.01 (5.95) & 21.25 (4.10)  & 21.40 (6.00)  & 21.01 (4.28) & 17.52 (6.88) & 16.24 (6.58)\\
   \midrule
   \multicolumn{16}{c}{\bf Attack batch size $= 4$} \\
   \midrule
   {\bf RMSE $\downarrow$}  & 0.16 (0.04)  & 0.31 (0.01)  & 0.31 (0.03)  & 0.31 (0.03)  & 0.31 (0.07)  & 0.19 (0.04)  & 0.18 (0.03)  & 0.23 (0.04) & 0.29 (0.07) & 0.28 (0.07) & 0.21 (0.03)  & 0.22 (0.04)  & 0.23 (0.03)  & 0.28 (0.078) & 0.25 (0.09)\\
    {\bf PSNR $\uparrow$}  & 22.41 (5.84)  & 12.12 (3.58)  & 10.53 (1.03)  & 10.51 (1.03)  & 9.98 (1.35)  & 20.16 (4.26)  & 21.08 (5.28) & 17.03 (4.59) & 11.88 (2.66) & 12.64 (4.04)  & 19.02 (4.26)  & 19.64 (4.33)  &  21.38 (4.27) & 13.08 (5.16) & 14.31 (6.02)\\
   \midrule
   \multicolumn{16}{c}{\bf Attack batch size $= 8$} \\
   \midrule
   {\bf RMSE $\downarrow$}  & 0.16 (0.04)  & 0.30 (0.02)  & 0.31 (0.02)  & 0.31 (0.02)  & 0.31 (0.71)  & 0.20 (0.03)  & 0.21 (0.04) & 0.25 (0.04) & 0.30 (0.05) & 0.30 (0.05) & 0.22 (0.04)  & 0.21 (0.04)  & 0.21 (0.04) & 0.31 (0.04) &  0.31 (0.04)\\
    {\bf PSNR $\uparrow$}  & 23.21 (5.60)  & 10.43 (0.79)  & 10.15 (0.80)  & 10.46 (0.87)  & 10.54 (1.38)  & 21.56 (4.32)  & 21.46 (6.23) & 17.25 (4.59) & 11.28 (2.56) & 11.17 (2.19) & 19.36 (5.14)  & 19.69 (6.23)  & 20.43 (4.27)  & 10.01 (2.22) & 10.26 (2.03)\\
  \bottomrule
  \end{tabular}}
  \caption{\small Results of the proposed method without feature matching loss with different defenses. Under defenses having no effect on feature reconstruction, the performance of the method without feature matching is much worse than that with feature matching.} 
  \label{tab:w/o feature}

\end{table}

%%%%%%%%%%%%%%%%%%%%%%%%%%%%%%%%%%%%%%%%%%%%%%%%%%%%%%%%%%%%

\end{document}